\newcommand{\uiuc}[1]{
\\ \bigskip
Illinois Center for Advanced Studies of the Universe \& Department of Physics,\\ University of Illinois, 1110 West Green St., Urbana IL 61801, U.S.A.
}
\def\pa{\partial}
\renewcommand{\comment}[1]{}
\newcommand{\ti}{\tilde }
\newcommand{\bb}{\mathbb}
\newcommand{\td}{\textnormal{d}}
\newcommand{\E}{\textnormal{e}}
\newcommand{\un}[1]{\underline{#1}}
\newcommand{\be}{\begin{equation}}
\newcommand{\ee}{\end{equation}}
\newcommand{\beq}{\begin{eqnarray}}
\newcommand{\eeq}{\end{eqnarray}}
\newcommand{\RR}{\mathbb{R}}
\newcommand{\nn}{\nonumber}
\newcommand{\p}{\partial}
\theoremstyle{definition}
\newtheorem{defn}{Definition} 
\theoremstyle{plain}
\newtheorem{theorem}{Theorem}[section]
\newtheorem{prop}[theorem]{Proposition}
\newtheorem{lemma}[theorem]{Lemma}
\title{\huge Weyl-Ambient Geometries}
\author{
	Weizhen Jia\footnote{weizhen2@illinois.edu},\; 
	Manthos Karydas\footnote{karydas2@illinois.edu}\; and 
	Robert G.~Leigh\footnote{rgleigh@illinois.edu}
	{\small \uiuc{}}
}
\date{}
\begin{document}
\maketitle

\begin{abstract}
Weyl geometry is a natural extension of conformal geometry with Weyl covariance mediated by a Weyl connection. We generalize the Fefferman-Graham (FG) ambient construction for conformal manifolds to a corresponding construction for Weyl manifolds. We first introduce the Weyl-ambient metric motivated by the Weyl-Fefferman-Graham (WFG) gauge. From a top-down perspective, we show that the Weyl-ambient space as a pseudo-Riemannian geometry induces a codimension-2 Weyl geometry. Then, from a bottom-up perspective, we start from promoting a conformal manifold into a Weyl manifold by assigning a Weyl connection to the principal $\mathbb{R}_+$-bundle realizing a Weyl structure. We show that the Weyl structure admits a well-defined initial value problem, which determines the Weyl-ambient metric. Through the Weyl-ambient construction, we also investigate Weyl-covariant tensors on the Weyl manifold and define extended Weyl-obstruction tensors explicitly.
\end{abstract}

\newpage
\begingroup
\hypersetup{linkcolor=black}
\tableofcontents
\endgroup

\newpage
\section{Introduction}
Conformal geometry is a very rich area of mathematics with its history deeply intertwined with that of physics. Historically, the subject was initiated at the beginning of the twentieth century with the work of Weyl \cite{Weyl:1918pdp}, Cartan \cite{CartanLesE} and Thomas \cite{thomas1925invariants}. In physics, there have been numerous applications of conformal geometry, from conformal compactification \cite{Penrose:1962ij} and conformal gravity \cite{Mannheim:2011ds} to the AdS/CFT correspondence \cite{Maldacena:1997re,Witten:1998qj}. 
 \par
The fundamental structure appearing in conformal geometry is a manifold $M$ endowed with a \emph{conformal class} of metrics $[g]$. Two metrics belong in the same conformal class $[g]$ if one metric is a smooth positive multiple of the other. Local rescalings of the metric tensor by an arbitrary smooth positive function are called \emph{Weyl transformations}. Compared to pseudo-Riemannian manifolds $(M,g)$, conformal manifolds are endowed with an enlarged symmetry group with both diffeomorphisms and Weyl transformations, denoted by $\text{Diff}(M)\ltimes \text{Weyl}$. A tensor $T$ on a conformal manifold $(M,[g])$ is said to be conformally covariant if it transforms covariantly under a Weyl transformation:
\begin{equation}\label{conformal_Weyl_tensor}
T\to {\cal B}(x)^{w_{T}}T\,,\qquad\text{when}\qquad g\to {\cal B}(x)^{-2}g\,,
\end{equation}
where $w_{T} $ is  the Weyl weight of the tensor $T$. On the physics side, conformal-covariant tensors appear as expectation values of operators in conformal field theories coupled to a background metric. As an important example, the expectation value of the trace of the energy-momentum tensor acquires an anomalous term after quantization, namely the celebrated Weyl anomaly \cite{Capper:1974ic}. By investigating the effective action in dimensional regularization, Deser and Schwimmer \cite{Deser:1993yx} made a conjecture regarding the possible candidates for the Weyl anomaly, which are global conformal invariants. This conjecture was later proven in \cite{alexakis2012decomposition,Boulanger:2007ab,Boulanger:2007st}.\footnote{The analysis in \cite{alexakis2012decomposition} concerns local conformal invariants, corresponding to the type B Weyl anomaly, while \cite{Boulanger:2007ab,Boulanger:2007st} deals with the type A Weyl anomaly.}

Just as diffeomorphism-covariant quantities, i.e.,\ tensors, on pseudo-Riemannian manifolds can easily be constructed out of the metric, Riemann tensor and covariant derivatives, one might expect to find conformal-covariant tensors on conformal manifolds.  
However, unlike the abundance of diffeomorphism-covariant quantities on $(M,g)$, it is significantly harder to construct conformal-covariant tensors on $(M,[g])$. Before the work of Fefferman and Graham, known examples of conformal tensors were the Weyl tensor in any dimension, the Cotton tensor \cite{cotton1899varietes} in $3d$ and the Bach tensor \cite{bach1921weylschen} in $4d$. In their seminal work \cite{AST_1985__S131__95_0,Fefferman:2007rka} Fefferman and Graham introduced the ambient metric construction based on previous work by Fefferman \cite{fefferman1979parabolic}, which provided a systematic method of finding conformal-covariant tensors. The basic idea of the construction was to associate  a $(d+2)$-dimensional ``ambient" pseudo-Riemannian manifold to a $d$-dimensional conformal manifold. One can then find a specific class of ambient diffeomorphisms that induces Weyl transformations on the conformal manifold. In the context of AdS/CFT,  diffeomorphisms that induce a Weyl transformation of the boundary metric are the Penrose-Brown-Henneaux (PBH) transformations \cite{Imbimbo:1999bj}. Thus, conformal-covariant tensors can descend from ambient Riemannian tensors, and their Weyl transformations can be derived from certain ambient diffeomorphisms. 
\par
An important outcome of the ambient construction was to define extended obstruction tensors from covariant derivatives of the ambient Riemann tensor \cite{graham2009extended}. Obstruction tensors are the generalization to higher (even) dimension of the Bach tensor. For each even dimension, the corresponding obstruction tensor is the only irreducible conformal-covariant tensor in that dimension \cite{graham2005ambient}. Defined through the ambient space, the $k^{th}$ extended obstruction tensor $\Omega^{(k)}_{ij}$ has a simple pole at $d=2k+2$, whose residue is the obstruction tensor in that dimension. For example, the first obstruction tensor reads
\be
\Omega^{(1)}_{ij}=-\frac{1}{d-4}B_{ij}\,,
\ee
where $B_{ij}$ is the Bach tensor, namely the obstruction tensor in $4d$. The extended obstruction tensors also play an integral role in the context of holography as the basic building blocks of the holographic Weyl anomaly \cite{Henningson:1998gx,graham2009extended}, or equivalently the Q-curvature \cite{branson1991explicit,graham2005ambient,Anastasiou:2020zwc}.
 \par
A different perspective on conformal geometry was introduced by Weyl \cite{Weyl:1918pdp}, whose idea was to make the physical scale a local quantity. The Weyl connection was introduced so that one can transport the physical scale between two points of the manifold. Although Weyl's initial attempt to identify the Weyl connection with the electromagnetic gauge field failed, the consistent mathematical structure he introduced was developed further in \cite{10.4310/jdg/1214429379,doi:10.1063/1.529582}. In this approach, a Weyl connection $a$ is introduced on the conformal manifold which transforms together with the metric $g$ under a Weyl transformation. One can modify the conformal class $[g]$ to a \emph{Weyl class} $[g,a]$, which is the equivalence class formed by the pairs $(g,a)\sim ({\cal B}(x)^{-2}g, a- \td\ln{\cal B}(x))$. This defines a Weyl manifold $(M,[g,a])$, and the conformal geometry is promoted to \emph{Weyl geometry} \cite{10.4310/jdg/1214429379,doi:10.1063/1.529582,scholz2018unexpected}. Equivalently, a Weyl connection can be thought of as a connection on the \emph{Weyl structure}, which is a principal bundle with the Weyl symmetry group as the structure group \cite{10.4310/jdg/1214429379}. 
\par
Similarly to a conformal-covariant tensor, one can define a Weyl-covariant tensor $T$ on a Weyl manifold $(M,[g,a])$ to be a tensor that transforms covariantly under a Weyl transformation:
\begin{equation}\label{Weyl_tensor}
\begin{split}
T\to {\cal B}^{w_{T}}(x)T\,,\qquad\text{when}\qquad g\to {\cal B}(x)^{-2}g\,,\quad a\to a- \td \ln {\cal B}(x)\,.
\end{split}
\end{equation}
Although conformal-covariant tensors on a conformal manifold $(M,[g])$ are hard to find, Weyl-covariant tensors on a Weyl manifold $(M,[g,a])$ can be constructed quite easily. Recall that on a pseudo-Riemannian manifold $(M,g)$, one can define a Levi-Civita (LC) connection $\nabla$, and it is well-known that diffeomorphism-covariant quantities can be constructed from the metric, Riemann curvature, and covariant derivatives  of the Riemann curvature. On a Weyl manifold $(M,[g,a])$, one can define a Weyl-Levi-Civita connection $\hat\nabla$, and a plethora of Weyl-covariant quantities can similarly be constructed from the metric, Weyl-Riemann curvature, and  Weyl-covariant derivatives $\hat\nabla$ of the Weyl-Riemann curvature. This indicates that the $\text{Diff}(M)\ltimes \text{Weyl}$ symmetry is manifested more naturally on a Weyl manifold, and the representation has a similar structure as that of  $\text{Diff}(M)$ on  pseudo-Riemannian manifolds. 
There are corresponding notions of Weyl metricity, Weyl torsion and a uniqueness theorem giving a Weyl-LC connection \cite{10.4310/jdg/1214429379}. 
\par
A significant step towards incorporating the Weyl connection in the formalism of AdS/CFT was taken in \cite{Ciambelli:2019bzz}. By modifying the Fefferman-Graham (FG) ansatz of an asymptotically locally AdS (AlAdS) spacetime to the Weyl-Fefferman-Graham (WFG) form, it was shown \cite{Ciambelli:2019bzz}  that  the bulk LC connection induces a Weyl connection on the conformal boundary. Thus, the AlAdS bulk geometry in the WFG gauge induces a Weyl geometry instead of only a conformal geometry on the conformal boundary. Applying the holographic dictionary \cite{Witten:1998qj}, they also calculated the holographic Weyl anomaly in the WFG gauge, and found it can be organized in a Weyl-covariant fashion. In the FG ambient construction, the conformal boundary $(M,[g])$ of a $(d+1)$-dimensional AlAdS bulk is associated with a $(d+2)$-dimensional ambient space, and the AlAdS bulk in the FG gauge can be considered as a hypersurface in the ambient space. A natural question to ask is whether such a construction exists for the conformal boundary as a Weyl manifold. In this paper we will provide such a construction. We introduce the Weyl-ambient space $(\tilde M,\tilde g)$ as a modification of the FG ambient space, in which the AlAdS bulk in the WFG gauge is a hypersurface and its boundary is associated with a codimension-2 Weyl manifold $(M,[g,a])$.
\par
Following \cite{Ciambelli:2019bzz}, the AlAdS bulk expansion in the WFG gauge was further investigated in \cite{Jia:2021hgy}. Using the technique of dimensional regularization, the Weyl-obstruction tensors and extended Weyl-obstruction tensors were introduced as the poles in the on-shell metric expansion. It was also found in \cite{Jia:2021hgy} that the holographic Weyl anomaly can be expressed in terms of extended Weyl-obstruction tensors. Although it is convenient to read off the extended Weyl-obstruction tensors from the pole of the AlAdS metric expansion, this should not be regarded as a precise definition since the pole may have an ambiguity when shifted by a finite term. One of the results of the present  paper is to provide a definition of  Weyl-obstruction tensors on a Weyl manifold $(M,[g,a])$ through the Weyl-ambient space $(\tilde M,\tilde g)$, in a way analogous to how extended obstruction tensors were defined in \cite{graham2009extended,Fefferman:2007rka}. Many properties of the extended Weyl-obstruction tensors can also be derived from the Weyl-ambient space.
\par
The main goal of this paper is to provide an ambient construction for Weyl manifolds. We start by introducing the Weyl-ambient metric as a modification of the FG ambient metric. We will then present two perspectives. The first one is a top-down approach. We will see that one naturally obtains a codimension-2 Weyl manifold $(M,[g,a])$. A more formal approach is the bottom-up perspective, where we start from a $d$-dimensional conformal manifold $(M,[g])$, which is then enhanced into a Weyl manifold $(M,[g,a])$ by introducing a connection on the Weyl structure over $M$. A $(d+2)$-dimensional Weyl-ambient space can then be constructed by taking the Weyl structure as an initial surface, which follows the rigorous ambient space construction in \cite{Fefferman:2007rka}.
\par
This paper will be organized as follows. In Section \ref{Sec2} we first briefly review the ambient metric of Fefferman-Graham before we introduce the Weyl-ambient metric $\tilde g$ at the end. To get some intuition on the construction, we start from the example of the flat ambient metric and then generalize to Ricci-flat ambient metrics. Some different coordinate systems presented in Section \ref{Sec2} are described in Appendix \ref{AppA}. Then, from a top-down perspective, in Section \ref{sec:topdown} we demonstrate how $(\tilde M,\tilde g)$ induces a codimension-2 Weyl manifold $(M,[g,a])$. We also discuss how the Weyl-covariant tensors on $(M,[g,a])$ can be derived from the Riemann tensor of $(\tilde M,\tilde g)$, and define the extended Weyl-obstruction tensors as a special example. We work in first order formalism in Section \ref{sec:topdown} by introducing a null frame, with some details of the calculation given in Appendix \ref{App:Null}. Section \ref{sec:bottomup} is the bottom-up construction of the Weyl-ambient metric. We show rigorously that the Weyl-ambient metric has a well-defined perturbative initial value problem, where the Ricci-flatness condition plays the role of the equation of motion. To show this we follow \cite{Fefferman:2007rka} closely using the second order formalism and generalize their definitions properly for the Weyl-ambient space. We also extend some major theorems of \cite{Fefferman:2007rka} with suitable modifications for the Weyl-ambient space.  The details of some proofs are presented in Appendix \ref{AppC}. After that we discuss   Weyl-covariant tensors and extended Weyl-obstruction tensors from the point of view of the second order formalism, and prove the equivalence of the extended Weyl-obstruction tensors defined from both approaches. Finally, our results are summarized in Section \ref{sec:conclu}.

\subsection*{Notation}
We will label the indices in a $d$-dimensional manifold $M$ by lowercase Latin letters $i,j,\cdots$, in a $(d+1)$-dimensional AlAdS bulk by lowercase Greek letters $\mu,\nu,\cdots$, and in a $(d+2)$-dimensional ambient space $\tilde M$ by uppercase Latin letters $I,J,\cdots$. The vectors on $M$ are denoted by $ \un U,\un V$, on the Weyl structure ${\cal P}_W$ over $M$ are denoted by $\un u,\un v$, and on the ambient manifold $\tilde M$ are denoted by $\un {\cal U},\un{\cal V}$.

In Section \ref{sec:topdown}, we mainly use the dual frame $\{\bm e^I\}$, and the ambient frame indices are $I=+,1,\cdots,d,-$. Unless otherwise indicated, in Section \ref{sec:bottomup} we mainly use the ambient coordinate system $\{t,x^i,\rho\}$, and the indices are $I=0,1,\cdots,d,\infty$, where $0$ labels the $t$-component and $\infty$ labels the $\rho$-component. The notation $(0,x^i,\infty)$ is also used for the components in a trivialization ${\cal P}_W\times \bb R\simeq \bb R_+\times M\times\bb R$, even without specifying a choice of coordinates on $M$. The above-mentioned notation is summarized in Table \ref{t1}.

\begin{table}[!h]
\centering
\caption{Notation}
\begin{tabularx}{\textwidth}{c|c|c|X}
\toprule
Dimension & Manifold & Vectors & Indices \\
\midrule
$d$ & $M$ & $ \un U,\un V$ & $i,j,\cdots$ $\quad\{x^i\}$ $\quad i=1,\cdots,d$\\
\midrule
$d+1$ & AlAdS$_{d+1}$ &  & $\mu,\nu,\cdots$ $\quad\{x^\mu\}=\{z,x^i\}$  $\quad i=1,\cdots,d$ \\
\midrule
$d+1$ & ${\cal P}_W$ & $\un u,\un v$ &   \\
\midrule
$d+2$ & $\tilde M$ & $\un {\cal U},\un{\cal V}$ & $I,J,\cdots$\newline  In the frame $\{\bm e^I\}=\{\bm e^+,\bm e^i,\bm e^-\}$, $I=+,1,\cdots,d,-$. \newline In the coordinates $\{x^I\}=\{t,x^i,\rho\}$, $I=0,1,\cdots,d,\infty$.\\
\bottomrule
\end{tabularx}
\label{t1}
\end{table}

\section{Ambient Metrics}
\label{Sec2}
In this section we will review the FG ambient metric and introduce the Weyl-ambient metric. To build up some intuition, we begin with the flat ambient metric and then generalize to Ricci-flat ambient metrics.
\subsection{Flat Ambient Metrics}\label{Sec2:Subs1}
The simplest example of an ambient space is the flat ambient space. Consider the $(d+2)$-dimensional Minkowski spacetime $\mathbb{R}^{1,d+1}$ with the metric
\begin{equation}\label{Flat_Ambient metric}
\eta= - (\td X^{0})^{2}+\sum_{i=1}^{d+1}(\td X^{i})^{2}\,.
\end{equation}
One can describe $(d+1)$-dimensional Euclidean AdS spaces as the following codimension-1 hyperboloids:\footnote{One can also take the signature in \eqref{Flat_Ambient metric} to be $(2,d)$. Then, $g^+$ will be the Lorentzian signature AdS spacetime and the $\delta_{ij}$ in \eqref{Flat_Ambient_3} becomes $\eta_{ij}$. More generally, if one takes the signature in \eqref{Flat_Ambient metric} to be $(p,d+2-p)$, then the signature of $g^+$ will be $(p-1,d+2-p)$.}
\be
(X^{0})^2-R^2=L^2\,,\qquad R^2=\sum_{i=1}^{d+1}(X^{i})^{2}\,,
\ee
where $L$ represents the AdS radius. The hyperboloids with different $L$ form a one-parameter family of hypersurfaces foliating the interior of the future light cone, denoted by ${\cal N}^+$, emanating from the origin of the Lorentzian coordinate system $\{X^{0}, X^{i}\}$. Then, one can also write the Minkowski metric in the following ``cone'' form:
\begin{equation}\label{Flat_Ambient_4}
\eta = -\td \ell^2 + \frac{\ell^2}{L^2} g^{+}\,,\qquad \ell>0\,,
\end{equation}
where the coordinate $\ell=\sqrt{(X^{0})^2-R^2}$, and $g^{+}$ is the $(d+1)$-dimensional Euclidean AdS metric. Now the Euclidean AdS space is represented by the hyperbola $\ell=L$. The metric $g^{+}$ can be expressed in the Fefferman-Graham (FG) form in the following different ways (see Appendix \ref{AppA} for details):
\begin{align}
\label{eq:adsglob}
\qquad g^{+}_{S}&=  \frac{L^2}{z^2} \Big(\td z^2 + L^2(1- \frac{1}{4}(z/L)^2)^2\td\Omega_{d}^2\Big)\,,\qquad 0<z<2L\,,\\
\label{eq:adsPoin}
\qquad g^{+}_{F}&= \frac{L^2}{z^2}\left(\td z^2 + \delta_{ij}\td x^{i}\td x^{j}\right)\,,\qquad i=1,\cdots,d\,,\qquad z>0\,.
\end{align}
The metric \eqref{Flat_Ambient_4} with $g^+=g^{+}_{S}$ or $g^{+}_{F}$ is defined in the whole interior of the light cone ${\cal N}^+$,\footnote{Note that for Lorentzian signature AdS spacetime, the metric \eqref{Flat_Ambient_4} with $g^{+}_{F}$ only covers half of the interior of the future light cone.} while their AdS boundaries have different topologies. It is easy to see that the AdS boundary at $z\to0^+$ of $g^+_S$ in \eqref{eq:adsglob} is conformally a $d$-sphere while that of $g^+_F$ in \eqref{eq:adsPoin} is conformally flat. 

While the metric \eqref{Flat_Ambient_4} is singular in the limit $z\to 0^{+}$ with $\ell$ fixed, it is well-defined when taking both $z$ and $\ell$ to zero with $z/\ell$ fixed. To make this evident we introduce a new coordinate system $\{t,x^{i},\rho\}$, called the \emph{ambient coordinate system}, with $t=\ell/z$ and $\rho=-z^{2}/2$. First we look at the metric \eqref{Flat_Ambient_4} with $g^+_S$ in \eqref{eq:adsglob}, which in the ambient coordinate system becomes
\begin{align}\label{Flat_Ambient_2}
\eta &= 2\rho \td t^2 + 2t \td t \td\rho + t^2 (1+ \frac{\rho}{2 L^2})^2 L^2 \td\Omega_{d}^2 \,.
\end{align} 
The coordinate patch of $\{\ell,x^i,z\}$ which covers the interior of the light cone surface ${\cal N}^{+}$, corresponds to $t\in (0,\infty) $, $\rho \in (-2L^2,0)$ (see Figure~\ref{fig:cones}). However, it is apparent now that the limit $\rho\to 0^{-}$ of the above metric is well-defined, and thus we can extend the coordinate patch of $\{t,x^i,\rho\}$ to include an open neighborhood of the surface ${\cal N}^{+}$ at $\rho=0$. Hence, ${\cal N}^{+}$ is parametrized by $\{t,x^{i}\}$, where $t\in \mathbb{R}_{+}$ and $x^{i}$ are the coordinates of the $d$-sphere $S_{d}$. In other words, ${\cal N}^{+}$ can be regarded as a line bundle over $S^d$ whose fibres are parametrized by $t$.
\par
Suppose $\phi$ is a function on $\bb R^{1,d+1}$, which defines a hypersurface $\Sigma$ by the locus of points $p\in \mathbb{R}^{1,d+1}$ such that $\phi(t,x^{i},\rho)|_p=0$. In order to find the intersection $\Sigma \cap \mathcal{N}^{+}$, one can set $\rho=0$ and solve for $t$ as a function $t(x^{i})$ of the $d$-sphere coordinates from $\phi(t,x,\rho=0)=0$. The pullback metric on the intersection submanifold is $\eta |_{\Sigma \cap {\cal N}^{+}} = t (x)^2L^2 \td\Omega_{d}^2$. The function $t(x)$ depends on the choice of function $\phi$ (which is arbitrary) that defines $\Sigma$, and thus we see that the pullback metric is conformally equivalent to the metric of $S_{d}$. An example is to take $\phi=\ln t$, and to consider the pull back of the metric at $\rho=0$, $t=1$, namely $\eta|_{\rho=0,t=1}= L^2\td\Omega^{2}_{d}$. If we perform a diffeomorphism $t={\cal B}(x)^{-1}t'$ and pull back the metric at $\rho=0$, $t'=1$, then we find $\eta|_{\rho=0,t'=1}= {\cal B}(x)^{-2}L^2\td\Omega^{2}_{d}$. Therefore, at $\rho=0$ we have a \emph{conformal class} $[g]$ of $d$-dimensional metrics, and the $(d+2)$-dimensional Minkowski metric expressed in \eqref{Flat_Ambient_2} is said to be the ambient metric of $[g]$. This implies that the null surface ${\cal N}^{+}$ at $\rho=0$ is associated with a metric bundle, which will be important for the formal construction later in Section \ref{sec:bottomup}.
\par
Similarly, the metric \eqref{Flat_Ambient_4} with $g^+_{F}$ in \eqref{eq:adsPoin} can also be expressed in the ambient coordinates as
\begin{figure}[!htb]
\begin{center}
\includegraphics[width=3.5in]{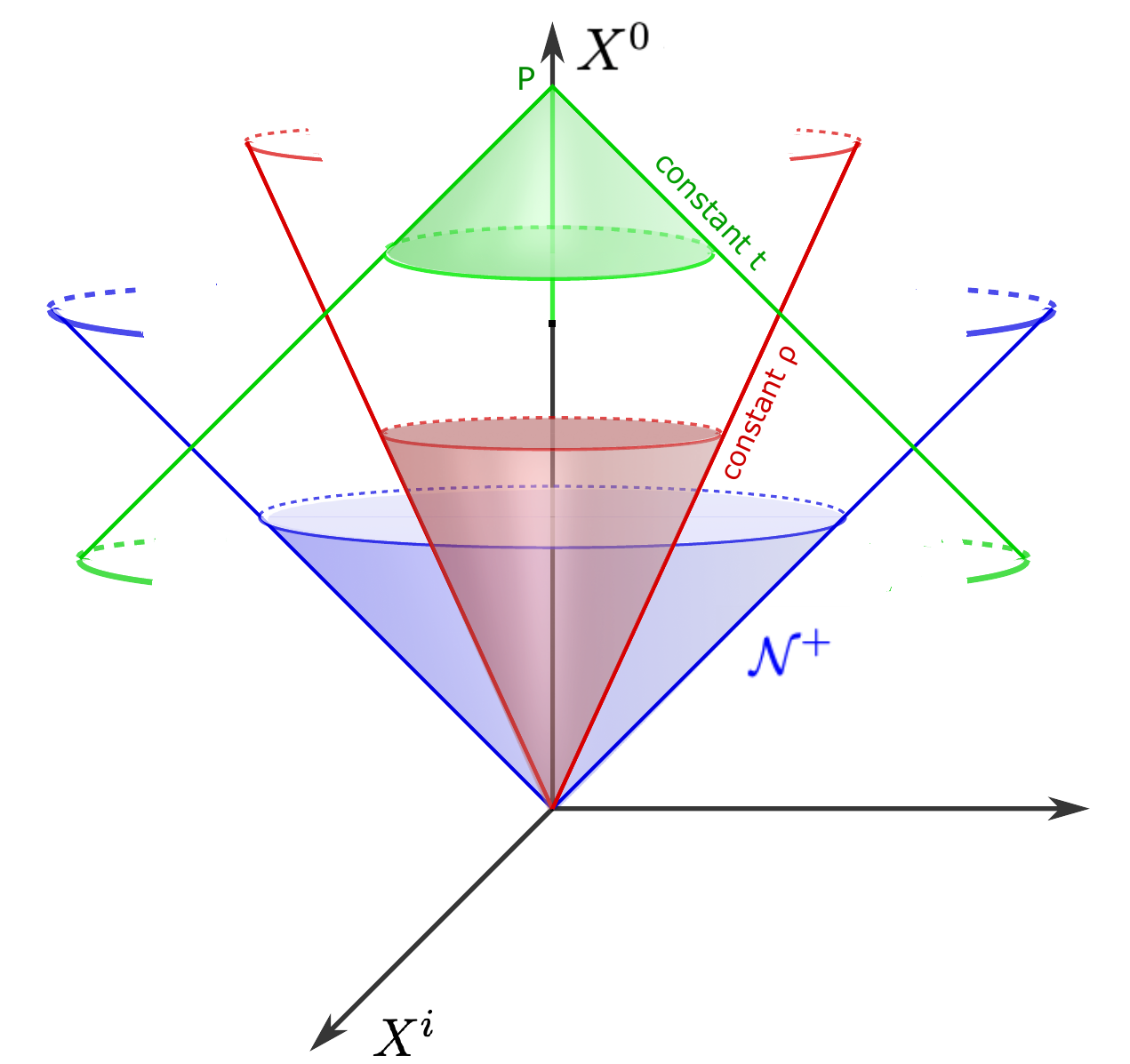}
\caption{Sketch of a constant-$\rho$ surface (red) and a constant-$t$ surface (green) of the flat ambient metric \eqref{Flat_Ambient_2} in the Lorentzian coordinate system $\{X^{0}, X^{i}\}$. Constant-$t$ surfaces are past directed light cones. Changing $t$ moves the apex $P$ of the cone along the $X^{0}$-axes. Constant-$\rho$ surfaces are future directed timelike cones. When $\rho\to 0^{-}$ the constant $\rho$ surface becomes the light cone ${\cal N}^{+}$ (blue).
} 
\label{fig:cones}
\end{center}
\end{figure}

\begin{equation}\label{Flat_Ambient_3}
\eta = 2\rho \td t^2 + 2t\td t \td\rho + t^2\delta_{ij} \td x^{i}\td x^{j}\,,\qquad i=1,\cdots,d\,.
\end{equation}
In this case, the original coordinate patch of $\{\ell,x^i,z\}$ corresponds to $t\in(0,\infty)$, $\rho\in(-\infty,0)$, and the null surface ${\cal N}^+$ is again covered by the $\{t,x^i,\rho\}$ system at $\rho=0$. Intersecting the null surface with a hypersurface and taking the pullback metric on the intersection, we now obtain a $d$-dimensional metric $\td s^2= t(x)^2 \delta_{ij}\td x^{i}\td x^{j}$ that is conformally flat. This metric is also in the conformal class $[g]$ but the topology is different from the $d$-dimensional metric obtained from \eqref{Flat_Ambient_2}. Note that the flat ambient metric in either \eqref{Flat_Ambient_2} or \eqref{Flat_Ambient_3} is homogeneous of degree 2 with respect to the $t$-coordinate; that is, under a constant scaling $t\to s t$ the metric transforms as $\eta \to s^{2}\eta$, or in the infinitesimal form,
\begin{equation}
\label{eq:homo}
{\cal L}_{\un T}\eta= 2\eta \,,\qquad\un T=t\un\pa_{t}\,.
\end{equation}
We will retain this property also for Ricci-flat ambient metrics and the Weyl-ambient metric. For relaxation of this homogeneity condition, see \cite{graham2008inhomogeneous}.

\subsection{Ricci-Flat Ambient Metrics}
The flat ambient metric combines hyperbolic metrics and their conformal boundaries in a unified  framework. Before we describe its utility, we will review the  generalization of flat ambient metrics to Ricci-flat ambient metrics. This will allow us to consider $(d+1)$-dimensional asymptotically locally Anti-de Sitter (AlAdS) spaces which are especially relevant in holographic theories.
\par
The main observation that allows an extension to Ricci-flat ambient metrics is that \eqref{Flat_Ambient_4} can be generalized in the following form:
\begin{equation}\label{R_Flat_Ambient}
\ti g= -\td \ell^2 + \frac{\ell^2}{L^2} g^{+}_{\mu\nu}(x)\td x^{\mu}\td x^{\nu},\qquad \mu,\nu=1,\cdots d+1\,,\quad \ell>0\,,
\end{equation}
where now $g^{+}(x)$ is an arbitrary $(d+1)$-dimensional metric independent of $\ell$. We will refer to this $(d+1)$-dimensional geometry as the ``bulk''. The ambient Ricci tensor $\tilde Ric(\tilde g)$ can be decomposed in terms of the Ricci tensor of $g^+$ as \cite{Fefferman:2007rka,Graham:1991jqw}
\begin{equation}
\ti Ric(\ti g)= Ric(g^{+}) + \frac{d}{L^2} g^{+} \,.
\end{equation}
The right-hand side of the above equation can also be written as $G_{\mu\nu}(g^{+})+ \Lambda g^{+}_{\mu\nu}$ with $\Lambda= -\frac{d(d-1)}{2L^2}$.  Therefore, when the ambient metric $\ti g$ is Ricci-flat, $g^{+}$ is an Einstein metric and thus satisfies the vacuum Einstein equations.
\par
According to the Fefferman-Graham theorem \cite{AST_1985__S131__95_0,Graham:1991jqw}, any AlAdS Einstein metric can be expressed in the Fefferman-Graham form 
\begin{equation}
\label{eq:g+FG}
g^{+}= L^2\frac{\td z^2}{z^2}+ \frac{L^2}{z^2}\gamma_{ij}(x,z)\td x^{i}\td x^{j}\,,\qquad i,j=1,\cdots ,d\,,\quad z>0\,,
\end{equation}
which is a generalization of \eqref{eq:adsPoin}. Then, by a coordinate transformation $t=\ell/z$ and $\rho=-z^{2}/2$, the metric \eqref{R_Flat_Ambient} takes the form
\begin{equation}\label{ambient_metric}
\ti g = 2\rho \td t^2 + 2t\td t\td\rho + t^2 \gamma_{ij}(x,\rho)\td x^{i}\td x^{j} \,,\qquad t>0\,.
\end{equation}
We can see that the flat ambient metrics \eqref{Flat_Ambient_2} and \eqref{Flat_Ambient_3} are nothing but special cases of \eqref{ambient_metric} when $\tilde g=\eta$. The codimension-2 metric is now generalized to an arbitrary $\gamma_{ij}(x,z)$ whose corresponding $g^+$ in \eqref{eq:g+FG} is an Einstein metric.
\par
Note that the advantages of the ambient coordinate system $\{t,x^i,\rho\}$ mentioned before for the flat ambient space are now carried over to the Ricci-flat case. One can see that the surface at $\rho=0$ is still a null hypersurface, denoted by $\cal N$, which is a coordinate singularity in the original $\{\ell,x^i,z\}$ coordinate system. Hence, the ambient coordinate system permits one to extend the spacetime region to include an open neighborhood of the null surface ${\cal N}$. Denoting the extended spacetime manifold as $\tilde M$, then ${\cal N}$ is a hypersurface in $\tilde M$ parametrized by $\{t,x^{i}\}$, which furnishes a conformal class $[\gamma]$ of codimension-2 metrics. Suppose $M$ is a $d$-dimensional manifold equipped with the conformal class $[\gamma]$, then $(\tilde M,\tilde g)$ is called the $(d+2)$-dimensional ambient space of $(M,[\gamma])$. 
\par
Being part of the Ricci-flat ambient space, ${\cal N}$ can be regarded as an initial value surface. Then given the initial data $\gamma_{ij}(x,\rho)|_{\rho=0}$, the Ricci-flatness condition can be used to ``propagate'' the metric beyond the initial surface to a neighborhood around $\rho=0$. That is, the Ricci-flatness condition $\ti Ric(\ti g)=0$ is a set of differential equations for $\tilde g_{ij}(x,\rho)$, which can be solved iteratively in a series around $\rho=0$ given the initial value $\ti g(x,\rho)|_{\rho=0}$. The initial value problem for the Ricci-flat ambient space has been defined and evaluated rigorously in \cite{Fefferman:2007rka}, the results of which will be carried over to the Weyl-ambient space in later sections.

\subsection{Weyl-Ambient Metric}
Now we are ready to introduce the Weyl-ambient metric. We start from the $(d+2)$-dimensional ambient metric in the form of \eqref{R_Flat_Ambient}. The expression of $g^+$ in \eqref{eq:g+FG} is the FG ansatz for an AlAdS spacetime, which is not  preserved under a Weyl diffeomorphism $z\to z/{\cal B}(x)$, $x^i\to x^i$. This motivated the authors of \cite{Ciambelli:2019bzz}  to introduce the Weyl-Fefferman-Graham (WFG) gauge by adding an additional mode $a_\mu$ to \eqref{eq:g+FG} as follows:
\begin{equation}
\label{eq:g+WFG}
g ^+_{\text{WFG}}=L^2\left(\frac{\td z}{z} - a_{i}(x,z)\td x^{i}\right)^{2}+ \frac{L^2}{z^2}\gamma_{ij}(x,z)\td x^{i}\td x^{j}\,,\qquad z>0\,.
\end{equation}
Now we substitute the $g^+$ in \eqref{R_Flat_Ambient} with the WFG ansatz \eqref{eq:g+WFG}, then transforming back to the ambient coordinates $\{t,x^i,\rho\}$, we obtain the \emph{Weyl-ambient metric}\footnote{The form of the metric \eqref{Weyl_ambient} with $a_i$ independent of $\rho$ has been introduced in \cite{Manvelyan:2007tk} in the context of Ricci gauging. We thank Omar Zanusso for pointing this out to us.}
\begin{equation}\label{Weyl_ambient}
\ti g = 2\rho \td t^2 + 2t^2\td\rho\left(\frac{\td t}{t} + a_{i}(x,\rho)\td x^{i}\right) + t^2 g_{ij}(x,\rho)\td x^{i}\td x^{j}\,,\qquad t>0\,,
\end{equation}
where $g_{ij}(x,\rho):= \gamma_{ij}(x,\rho)- 2\rho a_{i}(x,\rho)a_{j}(x,\rho)$. We call the pseudo-Riemannian space $(\tilde M,\tilde g)$  a \emph{Weyl-ambient space}. Having the form of the Weyl-ambient metric,  the \emph{ambient Weyl diffeomorphism}\footnote{In terms of the coordinates $\ell,z$, the ambient Weyl diffeomorphism acts as $(\ell',x'^i,z')=(\ell, x^i, {\cal B}(x)^{-1}z)$.}
\begin{equation}
\label{eq:Weyldiff}
t'={\cal B}(x)t\,,\qquad x'^i=x^i\,,\qquad \rho'={\cal B}(x)^{-2}\rho
\end{equation}
induces a change in the constituents $a_i$ and $\gamma_{ij}$ of the form
\begin{equation}
\label{eq:Weyldiff2}
a'_i(x',\rho')=a_i(x,\rho)-\p_i\ln{\cal B}(x)\,,\qquad\gamma'_{ij}(x',\rho')={\cal B}(x)^{-2}\gamma_{ij}(x,\rho)\,.
\end{equation}
If we regard the AlAdS bulk as a hypersurface of the Weyl-ambient space, the above transformation gives rise to the Weyl diffeomorphism which preserves the WFG ansatz. In addition, we want to point out that just as the ambient metric \eqref{ambient_metric} is homogeneous with respect to $t$, the homogeneity property \eqref{eq:homo}  also pertains for the Weyl-ambient metric \eqref{Weyl_ambient} since both $a_{i}(x,\rho)$ and $\gamma_{ij}(x,\rho)$ are independent of $t$.   This homogeneity property will be repeatedly used throughout this paper. In the following we use this property in order to show how an induced Weyl class arises from the Weyl-ambient metric; it is also crucial for the bottom-up construction and for proving Propositions~\ref{prop1} and \ref{Proposition_6.5_Graham}.
\par
The Ricci-flatness condition $\ti Ric(\ti g)=0$ for the Weyl-ambient metric \eqref{Weyl_ambient}, similar to that for the ambient metric \eqref{ambient_metric}, is a set of differential equations for $\tilde g_{ij}(x,\rho)$ which can be solved order by order in a neighborhood of $\rho=0$ given the initial value $\ti g_{ij}(x,\rho)|_{\rho=0}$. To be precise, in a neighborhood of $\rho=0$ we can expand $\gamma_{ij}$ and $a_{i}$ as\footnote{There will be a second series starting from the $\rho^{d/2}$ order in the expansion \eqref{eq:gexpan}:
\begin{align*}
\gamma_{ij}(x,\rho)&=( \gamma^{(0)}_{ij}(x)+ \gamma^{(1)}_{ij}(x)\rho+ \cdots)+\rho^{{d/2}}( \pi^{(0)}_{ij}(x)+ \pi^{(1)}_{ij}(x)\rho  + \cdots)\,.
\end{align*}
However, to solve for the second series in $\gamma_{ij}$ order by order one needs the interior data $\pi_{ij}^{(0)}$ of the ambient space. This is related to the obstruction tensors as explained in \cite{Jia:2021hgy}.}
\begin{align}
\label{eq:gexpan}
\gamma_{ij}(x,\rho)&= \gamma^{(0)}_{ij}(x)+ \gamma^{(1)}_{ij}(x)\rho +\gamma^{(2)}_{ij}(x)\rho^2 + \cdots\,,\\
\label{eq:aexpan} 
a_{i}(x,\rho)&= a^{(0)}_{i}(x) + a^{(1)}_{i}(x)\rho + a^{(2)}_{i}(x)\rho^2 + \cdots\,.
\end{align}
From the equation $\ti Ric(\ti g)=0$, one can solve for $\gamma^{(n)}_{ij}(x)$ in terms of $\gamma^{(k)}_{ij}(x)$ and $a^{(k)}_{i}(x)$ with $k$ up to $n-1$. However, the modes $a^{(n)}_{i}(x)$ are not determined by the Ricci flatness condition and hence we regard $a^{(k)}_{i}(x,\rho)$ as input data. This initial value problem will be examined in detail in Section~\ref{sec:bottomup} after the Weyl-ambient space is defined in terms of the Weyl structure and the ansatz in \eqref{Weyl_ambient} will be shown to be the uniquely determined Weyl-ambient metric for any given $\gamma^{(0)}_{ij}(x)$ and $a_i(x,\rho)$. 
\par
From the transformation \eqref{eq:Weyldiff2} and the expansions \eqref{eq:gexpan} and \eqref{eq:aexpan}, we can see that $\gamma_{ij}^{(k\geqslant0)}$ and $a^{(k\geqslant1)}_{i}(x)$ transform covariantly under the ambient Weyl diffeomorphism \eqref{eq:Weyldiff}, with Weyl weights $2k-2$ and $2k$, respectively:
\be
\gamma^{(k\geqslant0)}_{ij}(x)\to{\cal B}(x)^{2k-2}\gamma^{(k\geqslant0)}_{ij}(x)\,,\qquad a_i^{(k\geqslant1)}(x)\to{\cal B}(x)^{2k}a_i^{(k\geqslant1)}(x)\,.
\ee
On the other hand, $a^{(0)}_i$ transforms as $a^{(0)}_i\to a^{(0)}_i-\p_i\ln{\cal B}$. Therefore, we should anticipate that $a^{(0)}_i$ can be interpreted as a Weyl connection on the codimension-2 geometry.  An important result of \cite{Ciambelli:2019bzz} was to show that the bulk metric of an AlAdS spacetime in the WFG gauge provides a Weyl geometry on the conformal boundary. In the next section we will show that by introducing $a_i(x,\rho)$ in the ambient metric, we indeed obtain a Weyl geometry at codimension-2, where $\gamma^{(0)}_{ij}$ and $a^{(0)}_i$ play the role of a metric and a Weyl connection, respectively.
\par
Closing this section, we remark that the codimension-1 surface ${\cal N}$ at $\rho=0$ is again a null surface parametrized by $(t,x)$ with $t\in \mathbb{R}_{+}$, just like the case of the ambient metric \eqref{ambient_metric}. This surface in fact has the structure of a line bundle with each fibre parametrized by $t$, which turns out to be a principal bundle with the structure group $\bb R_+$. The new ingredient $a_i$ in the Weyl-ambient metric \eqref{Weyl_ambient} induces naturally a connection on this principal bundle, represented by $a^{(0)}_i=a_i|_{\rho=0}$. We will explore this in Section~\ref{sec:bottomup}.

\section{Weyl-Ambient Space: Top-Down Perspective}
\label{sec:topdown}
\subsection{Induced Weyl Geometry}\label{sec:WeylAmb}
The goal of this section is to analyze the Weyl-ambient metric from a top-down perspective before we introduce the more formal bottom-up construction of the Weyl-ambient space in Section \ref{sec:bottomup}. We will show explicitly that the Weyl-ambient metric \eqref{Weyl_ambient} leads to a Weyl geometry at codimension-2.
\par
Define a dual frame $\{\bm e^{P}\}$ on the $(d+2)$-dimensional manifold $\tilde M$ as follows:
\begin{align}
\label{e+-def}
\bm e^+&=\td t+ta_i(x,\rho)\td x^i\,,\qquad \bm e^i=\td x^i\,,\qquad\bm e^-=t\td\rho+\rho \td t-t\rho a_i(x,\rho)\td x^i\,,
\end{align}
where now $P=\{+,i,-\}$. In this frame the Weyl-ambient metric \eqref{Weyl_ambient} can be written as
\begin{align}
\label{eq:metricnull}
\tilde g=\bm e^+\otimes\bm e^-+\bm e^-\otimes\bm e^++t^2\gamma_{ij}\bm e^i\otimes\bm e^j\,.
\end{align}
It is easy to check that the 1-forms defined in \eqref{e+-def} are covariant under \eqref{eq:Weyldiff} and \eqref{eq:Weyldiff2}, and thus the form of $\tilde g$ in \eqref{eq:metricnull} is preserved under an ambient Weyl diffeomorphism. The corresponding frame $\{\un D_{P}\}$ of \eqref{e+-def} reads
\begin{align}
\label{eq:D+-i}
\un D_+&=\un\p_t-\frac{\rho}{t}\un\p_\rho\,,\qquad\un D_i=\un\p_i-ta_i(x,\rho)\un\p_t+2\rho a_i(x,\rho)\un\p_\rho\,,\qquad\un D_-=\frac{1}{t}\un\p_\rho\,.
\end{align}
From \eqref{eq:metricnull} it is clear that $\un D_+$ and $\un D_-$ are null vectors. $\{\un D_i\}$ form a basis of a $d$-dimensional \emph{distribution} $C_d\subset T\tilde M$, defined as
\begin{align}
C_d=\big\{\un{\mathcal V}\in T\tilde M\,|\,i_{\un{\mathcal V}}\bm e^\pm=0\big\}\,.
\end{align}
It follows from \eqref{eq:D+-i} that
\begin{align}
[\un D_i,\un D_j]=-tf_{ij}\un D_++t\rho f_{ij}\un D_-\,,
\end{align}
where $f_{ij}=D_ia_j-D_ja_i$ is the curvature of $a_i(x,\rho)$. The Frobenius theorem implies that the distribution $C_d$ is integrable when $f_{ij}=0$, though we will not generally assume this to be the case. One should note that the codimension-1 distribution spanned by $\{\un D_i,\un D_+\}$ is integrable at $\rho=0$, and thus defines a codimension-1 subspace (see Appendix \ref{App:Null} for relevant details).

Suppose $M$ is a $d$-dimensional manifold with a local coordinate system $\{y^i\}$ on $U\subset M$, and a point $\tilde p\in\tilde M$ has coordinates $(t,x^i,\rho)$. One can consider the coordinate patch $\tilde U$ of the ambient coordinate system $\{t,x^i,\rho\}$ as a fibre bundle with the projection $\pi:\tilde U\to U$ such that $\pi(\tilde p)=p\in M$ has coordinates $y^i=x^i$, i.e.\ each fibre in $\tilde U$ is parametrized by $(t,\rho)$. For simplicity, in what follows we will refer to $\tilde U$ as $\tilde M$ and $U$ as $M$, and we will not distinguish $\{x^i\}$ and $\{y^i\}$. Now that we have a bundle structure $\pi:\tilde M\to M$, we can see that $a_i(x,\rho)$ plays the role of an Ehresmann connection that specifies the horizontal subspace $H_{\tilde p}=C_d|_{\tilde p}\subset T_{\tilde p}\tilde M$, which defines the horizontal lift $T_pM\to H_{\tilde p}$ with $\un\p_i\mapsto\un D_i$. In general then, we are describing an isolated surface.

\par
Since we have a bundle structure $\pi:\tilde M\to M$, each section defines an embedding $\phi:M\to\tilde M$ such that a point $p\in M$ with coordinates $x^i$ is mapped to $\phi(p)=(t(x),x^i,\rho(x))$. With the horizontal subspace defined, we have $\pi_*:H_p\to T_pM$ such that $\pi_*(\un D_i)=\un\p_i$. Now consider the embedding $\phi$ with $\phi(p)=(t=1,x^i,\rho=0)$. We can define an induced metric $\gamma_{ij}^{(0)}(x)$ on $M$ by ``pulling back''\footnote{Note that we abuse the term as this is technically not a standard pullback by the embedding $\phi$, because $\un D_i$ is not tangent to $\phi[M]$.} $\tilde g_{ij}(t,x,\rho)= \tilde g(\un D_i,\un D_j)$ from the subspace of $\tilde M$ at $t=1$ and $\rho=0$ similar to what we did for the flat ambient space:
\be
\label{eq:inducedmetr}
\gamma^{(0)}_{ij}= \tilde g_{ij}|_{t=1,\rho=0}\,.
\ee
Under the coordinate transformation \eqref{eq:Weyldiff} in $\tilde M$ induced by an ambient diffeomorphism, we can consider the pullback $\gamma'^{(0)}(x')$ of $\tilde g'(t',x',\rho')$ by $\phi'(p)=(t'=1,x'^i,\rho'=0)$:
\be
\label{eq:inducedmetr'}
\gamma'^{(0)}_{ij}= \tilde g'_{ij}|_{t'=1,\rho'=0}\,,
\ee
where $\tilde g'_{ij}=g'(\un D'_i,\un D'_j)$, with $\un D'_i=\un\p'_i-t'a'_i(x',\rho')\un\p'_t+2\rho' a'_i(x',\rho')\un\p'_\rho$. Since $\tilde g'_{ij}=t'^2\gamma'_{ij}(x',\rho')$, we have
\be
\gamma'^{(0)}_{ij}={\cal B}(x)^{-2} \tilde g'_{ij}|_{t'={\cal B}(x),\rho'=0}={\cal B}(x)^{-2} \tilde g_{ij}|_{t=1,\rho=0}={\cal B}(x)^{-2}\gamma^{(0)}_{ij}\,.
\ee
That is, under the ambient Weyl diffeomorphism in $\tilde M$, we obtain two induced metrics which are related by a Weyl transformation in $M$. Hence, the ambient Weyl diffeomorphisms acting on the surface $\rho=0$, namely the null surface $\cal N$, gives rise to a conformal class of metrics on $M$.\footnote{If one only performs a local scaling in the coordinate $t$, i.e.\ $t'=B(x)t, x'^i=x^i, \rho'=\rho$, then one can also get a conformal class of metrics from other constant-$\rho$ surfaces. However, to obtain the induced Weyl connection and a Weyl class, one needs to perform the ambient Weyl diffeomorphism, and thus needs the restriction of $\rho=0$.}

\par
Having a conformal class of induced metrics on $M$, now let us look at how a connection is induced from $\tilde M$ onto $M$. Suppose $\tilde\nabla$ is the Levi-Civita connection of the ambient space $(\tilde M,\tilde g)$, i.e.\ it is torsion-free and has zero metricity $\tilde\nabla_{\un D_P}\tilde g_{MN}=0$. The ambient connection coefficients $\tilde\Gamma^P{}_{MN}$ of $\tilde\nabla$ are defined with respect to the frame $\un D_M$ of $T\tilde M$ as:
\begin{align}
\tilde\nabla_{\un D_M}\un D_N=\tilde\Gamma^i{}_{MN}\un D_i+\tilde\Gamma^+{}_{MN}\un D_++\tilde\Gamma^-{}_{MN}\un D_-\,.
\end{align}
In the following discussion we will denote the covariant derivative $\tilde\nabla_{\un D_P}$ along $\un D_P$ as $\tilde\nabla_P$ for brevity ($P=+,i,-$); we emphasize that these are not however the coordinate frame components.
The ambient connection 1-form $\tilde{\bm\omega}^{M}{}_{N}=\tilde\Gamma^M{}_{PN}\bm e^P$ in this frame is then found to be (the matrix elements are arranged in the order of $+,i,-$)
\begin{align}
\tilde{\bm\omega}^{M}{}_{N}=&\left(\begin{array}{ccc}a_k & -t\psi_{kj}& 0  \\\frac{1}{t}(\delta_k{}^i-\rho\psi_k{}^i) & \tilde\Gamma^i{}_{kj}& \frac{1}{t}\psi_k{}^i  \\0 & -t(\gamma_{kj}-\rho\psi_{kj})& -a_k \end{array}\right)\bm e^k\nn\\
\label{eq:conn1form}
&+\left(\begin{array}{ccc}0 & \rho\varphi_j& 0  \\\frac{\rho^2}{t^2}\varphi^i &\frac{1}{t}(\delta_j{}^i-\rho\psi_{j}{}^i)& -\frac{\rho}{t^2}\varphi^i  \\0& -\rho^2\varphi_j & 0 \end{array}\right)\bm e^++\left(\begin{array}{ccc}0 & -\varphi_j & 0 \\-\frac{\rho}{t^2}\varphi^i &\frac{1}{t}\psi_j{}^i & \frac{1}{t^2}\varphi^i  \\0  & \rho \varphi_j & 0 \end{array}\right)\bm e^- \,,
\end{align}
where the upper $i,j$ indices are raised by $\gamma^{ij}\equiv(\gamma_{ij})^{-1}$, and
\begin{align}
\label{eq:psiphi}
\psi_{ij}&=\frac{1}{2}(\p_\rho \gamma_{ij}+f_{ij})\,,\qquad\varphi_i=\p_\rho a_i \,,\qquad f_{ij}=D_ia_j-D_ja_i\,,\\
\label{eq:Gammaijk}
\tilde\Gamma^{i}{}_{jk}&=\frac{1}{2}\gamma^{im}(D_j\gamma_{mk}+D_k\gamma_{jm}-D_m\gamma_{jk})-(a_j\delta^i{}_k+a_k\delta^i{}_j-a^i\gamma_{jk})\,.
\end{align}

We note that the Levi-Civita condition $\tilde\nabla_{i}\tilde g_{jk}=0$ evaluates to $\nabla_{i}\gamma_{jk}=2a_i\gamma_{jk}$, where $\nabla$ is the connection on the distribution $C_d$ induced by $\tilde\nabla$, with $\nabla_{i}\gamma_{jk}:=D_i\gamma_{jk}-\tilde\Gamma^m{}_{ij} \gamma_{m k}-\tilde\Gamma^m{}_{ik} \gamma_{jm}$. Hence, if we interpret $\gamma_{ij}$, i.e.\ $\tilde g_{IJ}$ restricted to the $i,j$ indices, as giving rise to a metric on the distribution $C_d$ spanned by $\{\un D_i\}$ in $\tilde M$, then the connection $\nabla$ on $C_d$ has a nonvanishing metricity $2a_i\gamma_{jk}$. Equivalently, we say that this connection has vanishing {\it Weyl metricity}, and it is therefore convenient and natural to introduce a connection $\hat\nabla$ on $C_d$, such that \[\hat\nabla_i\gamma_{jk}:=\nabla_i\gamma_{jk}-2a_i\gamma_{jk}=0.\] The vanishing of the Weyl metricity is a Weyl-covariant condition, whereas the vanishing of the usual metricity $\nabla_i\gamma_{jk}$ is not. More generally, for any tensor $T$ defined on $C_d$ (i.e.,\ $ T$ has no $+,-$ components) that transforms covariantly under an ambient Weyl diffeomorphism as $T(t,x^i,\rho)\to {\cal B}(x)^{w_T} T({\cal B}(x)^{-1}t,x^i,{\cal B}^{2}(x)\rho)$, the derivative
\be
\label{eq:hatnabla}
\hat\nabla_i T:=\nabla_iT+w_Ta_iT
\ee
will also transform covariantly with the same weight. For example, it follows from the definitions in \eqref{eq:psiphi} that $\varphi_i(x,\rho)\to{\cal B}(x)^2\varphi_i(x,{\cal B}(x)^2\rho)$ and $\psi_{ij}(x,\rho)\to\psi_{ij}(x,{\cal B}(x)^2\rho)$, and thus we can write their Weyl-covariant derivatives as
\be
\hat\nabla_i\varphi_j=\nabla_i\varphi_j+2a_i\varphi_j\,,\qquad 
\hat\nabla_i\psi_{jk}=\nabla_i\psi_{jk}\,.
\ee
From the above behavior of the induced connection on $C_d$, we can naturally expect that the induced connection on $M$ will give us a codimension-2 Weyl geometry. However, since $\{\un D_i\}$ is not an integrable distribution when $a_i$ is turned on, the connection coefficients \eqref{eq:Gammaijk} cannot be pulled back directly to $M$. As we will see below, this problem does not exist if we focus on the surface at $\rho=0$.

Notice that $\tilde\Gamma^{i}{}_{jk}$ does not depend on $t$, and thus at any value of $t$ at $\rho=0$, the induced connection coefficients can be expressed as
\begin{align}
\label{eq:Weylconn}
\Gamma_{(0)}^{i}{}_{jk}\equiv\tilde\Gamma^{i}{}_{jk}|_{\rho=0}=\frac{1}{2}\gamma_{(0)}^{im}(\p_j\gamma^{(0)}_{mk}+\p_k\gamma^{(0)}_{jm}-\p_m\gamma^{(0)}_{jk})-(a^{(0)}_j\delta^i{}_k+a^{(0)}_k\delta^i{}_j-a_{(0)}^i\gamma^{(0)}_{jk})\,.
\end{align}
To define an induced connection on $M$, let us take $t=1$ as a representative, i.e.\ take $\phi(M)$ to be a $d$-dimensional surface in $\tilde M$ at $\rho=0$ and $t=1$. At  first sight, the connection defined by \eqref{eq:Weylconn} is still an induced connection on the distribution spanned by $\{\un D_i\}$, which does not lie on the codimension-2 surface $\phi[M]$ when $a_i$ is turned on. However, when the dual frame $\{\bm e^P\}$ gets pulled back on $M$, we get $\{\bm e^i=\td x^i\}$, and the corresponding vector basis on $TM$ is $\{\un \p_i\}$. Hence, the ambient LC connection $\tilde\nabla$ defined on $T^*\tilde M$ induces a connection $\nabla^{(0)}$ on $T^*M$ in the following natural manner
\begin{align}
\nabla^{(0)}_{\un \p_j}\bm e^i\equiv\nabla_{\un D_j}\bm e^i|_{\rho=0,t=1}=-\Gamma_{(0)}^i{}_{jk}\bm e^k\,.
\end{align}
Then, $\nabla^{(0)}$ can also be defined on $TM$, which defines the parallel transport of a vector along a curve on $M$:
\begin{align}
\nabla^{(0)}_{\un\p_i}\un\p_j=\Gamma_{(0)}^k{}_{ij}\un\p_k\,.
\end{align}
In this way we get a connection $\nabla^{(0)}$ on $M$ whose connection coefficients are given by \eqref{eq:Weylconn}. This is a connection that satisfies $\nabla^{(0)}_i\gamma^{(0)}_{jk}=2a^{(0)}_i\gamma^{(0)}_{jk}$, i.e.\ it has vanishing Weyl metricity, and $a^{(0)}_i$ plays the role of a Weyl connection on $M$. One can also define a metricity-free connection $\hat\nabla^{(0)}$ on $M$ satisfying $\hat\nabla^{(0)}_{i}\gamma^{(0)}_{jk}=\nabla^{(0)}_i\gamma^{(0)}_{jk}-2a^{(0)}_i\gamma^{(0)}_{jk}=0$, which can be referred to as a Weyl-LC connection. 
\par
An ambient Weyl diffeomorphism in $\tilde M$ induces on $M$ a Weyl transformation $\gamma^{(0)}_{ij}\to{\cal B}^{-2}\gamma^{(0)}_{ij}$, $a^{(0)}_i\to a^{(0)}_i-\p_i\ln{\cal B}$.\footnote{If one considers a more general version of the diffeomorphism \eqref{eq:Weyldiff} where $x'=x'(x)$, then 
\begin{equation*}
\frac{\p x'^j}{\p x^i}a'^{(0)}_j(x')=a_i^{(0)}(x)-\p_i\ln{\cal B}(x)\,,\qquad\frac{\p x'^i}{\p x^k}\frac{\p x'^j}{\p x^l}\gamma_{kl}'^{(0)}(x')={\cal B}(x)^{-2}\gamma_{ij}^{(0)}(x)\,.
\end{equation*}
The transformation $(t,x^i,\rho)\to(t,x'^i(x),\rho)$ realizes the Diff$(M)$ part of the $\text{Diff}(M)\ltimes \text{Weyl}$ symmetry on $M$.} This means that we get a \emph{Weyl class} $[\gamma^{(0)},a^{(0)}]$, which is the equivalence class formed by all the pairs of $\gamma^{(0)}$ and $a^{(0)}$ that are connected by Weyl transformations, i.e.,\
\be
(\gamma^{(0)}_{ij},a^{(0)}_i)\sim({\cal B}(x)^{-2}\gamma^{(0)}_{ij},a^{(0)}_i-\p_i\ln{\cal B}(x))\,.
\ee
With the Weyl class defined on $M$, we obtain a $d$-dimensional Weyl manifold $(M,[\gamma^{(0)},a^{0}])$ (see \cite{10.4310/jdg/1214429379,doi:10.1063/1.529582,Ciambelli:2019bzz}) induced by the Weyl-ambient space $(M,\tilde g)$, where the geometric quantities defined in terms of the Weyl connection are Weyl covariant. For example, one can define on $M$ the Weyl-Riemann tensor $\hat R_{(0)}^i{}_{jkl}$, Weyl-Ricci tensor $\hat R^{(0)}_{ij}$, Weyl-Ricci scalar $\hat R^{(0)}$, etc. (See Appendix A of \cite{Jia:2021hgy} for a review on Weyl geometry and the details of these Weyl-covariant quantities.)

\subsection{Weyl-Obstruction Tensors: First Order Formalism}
\label{sec:Frame}
A very useful property of the ambient metric introduced in \cite{2001math.....10271F} in the context of conformal geometry is the ability to construct conformal-covariant tensors from the ambient Riemann tensor, including the (extended) obstruction tensors, which were generalized to (extended) Weyl-obstruction tensors in \cite{Jia:2021hgy}. The Weyl-obstruction tensors on a $d$-dimensional Weyl manifold $(M,[\gamma^{(0)},a^{(0)}])$ were introduced in \cite{Jia:2021hgy} as the poles of the metric expansion of $\gamma_{ij}$ in the AlAdS bulk. However, they can also be defined in a more explicit way from the Weyl-ambient space $(\tilde M,\tilde g)$. Before we construct the Weyl-ambient space rigorously, we would like to demonstrate how the Weyl-obstruction tensors on $M$ can be derived from $(\tilde M,\tilde g)$ in the first order formalism using the frame introduced in \eqref{e+-def}. 
\par
Starting from the metric \eqref{eq:metricnull}, one can solve $\ti Ric(\ti g)=0$ order by order (which is equivalent to solving the Einstein equations in the AlAdS bulk \cite{Jia:2021hgy}) to find\footnote{Note that the $\gamma^{(k)}_{ij}$ and $a^{(k)}_{i}$ defined here correspond to $(-2)^k\gamma^{(2k)}_{\mu\nu}/L^{2k}$ and $(-2)^ka^{(2k)}_{\mu}/L^{2k}$ in \cite{Jia:2021hgy}, respectively.}
\begin{align}
\label{Pgf}
\gamma^{(1)}_{ij}={}&2\hat P_{(ij)}=2\hat P_{ij}-f^{(0)}_{ij}\,.\\
\label{g4}
\gamma^{(2)}_{ij}={}&\hat{\Omega}^{(1)}_{ij}+\hat P^{k}{}_{i}\hat P_{kj}+\hat\nabla^{(0)}_{(i} a_{j)}^{(1)}\,,\\
\gamma^{(3)}_{ij}={}&\tfrac{1}{3}\hat{\Omega}^{(2)}_{ij}+\tfrac{4}{3}\hat\Omega^{(1)}_{k(i}\hat P^k{}_{j)} +\tfrac{2}{3}\hat\nabla^{(0)}_{(i}a^{(2)}_{j)}\nn+2a^{(1)}_{i}a^{(1)}_{j}-\tfrac{1}{3}a^{(1)}\cdot a^{(1)}\gamma^{(0)}_{ij}\\
\label{g6}
&+\tfrac{1}{3}P^{k}{}_{(i}\hat\nabla^{(0)}_{j)}a^{(1)}_k-\tfrac{1}{3}a_{(1)}^k(\hat\nabla^{(0)}_i\hat P_{kj}+\hat\nabla^{(0)}_i\hat P_{jk}-\hat\nabla^{(0)}_k\hat P_{ji}+2\hat\nabla^{(0)}_j\hat P_{ik}-2\hat\nabla^{(0)}_k\hat P_{ij})\,,
\end{align}
where $f^{(0)}_{ij}=\p_ia^{(0)}_j-\p_ja^{(0)}_i$, and $\hat P_{ij}$ is the Weyl-Schouten tensor on $(M,[\gamma^{(0)},a^{(0)}])$. Treating $d$ as an continuous complex variable, the solution for each $\gamma^{(k\geqslant2)}_{ij}$ has a pole at $d=2k$ (see Proposition \ref{prop1}) represented by $\hat{\Omega}^{(k-1)}_{ij}$. For now one should simply regard $\hat{\Omega}^{(k-1)}_{ij}$ in the above equations as denoting the pole terms of $\gamma^{(k)}_{ij}$ at $d=2k$ ($\hat P_{ij}$ also represents the ``pole'' of $\gamma^{(1)}_{ij}$ at $d=2$, which identically vanishes in $2d$). Later in this section we will recognize them as extended Weyl-obstruction tensors through a precise definition. In terms of $\gamma^{(0)}_{ij}$, these quantities can be written as
\begin{align}
\label{eq:Wsch}
\hat P_{ij}={}&\frac{1}{d-2}\bigg(\hat R^{(0)}_{ij}-\frac{\hat R^{(0)}}{2(d-1)}\gamma^{(0)}_{ij}\bigg)\,,\\
\label{eq:Omega1}
\hat\Omega^{(1)}_{ij}={}&\frac{1}{d-4}\Big(-\hat\nabla^{(0)}_k\hat\nabla_{(0)}^k \hat P_{ij}+\hat\nabla^{(0)}_k\hat\nabla^{(0)}_{j} \hat P_{i}{}^{k}+\hat W^{(0)}_{kjil}\hat P^{lk}\Big)\,,\\
\label{eq:Omega2}
\hat\Omega^{(2)}_{ij}={}&\frac{1}{d-6}\Big(-\hat\nabla_{(0)}^k\hat\nabla^{(0)}_k\hat\Omega^{(1)}_{ij}+2\hat W^{(0)}_{kjil}\hat\Omega_{(1)}^{lk}+4\hat P\hat\Omega^{(1)}_{ij}-2\hat P_{k(j}\hat\Omega_{(1)}^{k}{}_{i)}+2\hat\Omega_{(1)}^{k}{}_{(i}\hat P_{j)k}\nn\\
&\qquad\quad+2\hat\nabla_{(0)}^k\hat  C_{kl(i}\hat P^{l}{}_{j)} -2\hat P^{kl}\hat\nabla^{(0)}_{(i}\hat  C_{j)lk}+4\hat P^{(kl)}\hat\nabla^{(0)}_l\hat C_{(ij)k}+2\hat\nabla^{(0)}_l\hat P^{kl}\hat C_{(ij)k}\nn\\
&\qquad\quad-2\hat C^{k}{}_{i}{}^{l}\hat C_{ljk}+2 \hat\nabla_{(0)}^l\hat P^k{}_{(i}\hat C_{j)kl}-2\hat W^{(0)}_{k(ji)l}\hat P^{l}{}_m\hat P^{mk}\Big)\,,
\end{align}
where $\hat W_{(0)}^i{}_{jkl}$ is the Weyl curvature tensor and {$\hat C_{ijk}\equiv\hat\nabla_{k}^{(0)}\hat P_{ij}-\hat\nabla_{j}^{(0)}\hat P_{ik}$ is the Weyl-Cotton tensor}. Note that indices are lowered with $\gamma^{(0)}_{ij}$ as necessary.
\par
We first look at how the Weyl-Schouten tensor $\hat P_{ij}$ is derived from the Weyl-ambient geometry. Consider the expansion of $\gamma_{ij}$. At $\rho=0$ and $t=1$, the ambient connection 1-form \eqref{eq:conn1form} becomes
\begin{align}
\tilde{\bm\omega}_{(0)}^{M}{}_{N}=&\left(\begin{array}{ccc}a^{(0)}_k & -\hat P_{jk} & 0  \\ \delta^i{}_k & \Gamma_{(0)}^i{}_{kj}& \hat P^i{}_k  \\0 & -\gamma^{(0)}_{jk} & -a^{(0)}_k \end{array}\right)\bm e^k
+\left(\begin{array}{ccc}0 & 0& 0  \\0 &\delta_j{}^i& 0  \\0& 0 & 0 \end{array}\right)\bm e^++\left(\begin{array}{ccc}0 & 0 & 0 \\ 0&\psi_j{}^i & 0  \\0  &0& 0 \end{array}\right)\bm e^- \,.
\end{align}
Notice that the first term, which is the pullback of $\tilde{\bm\omega}_{(0)}^{M}{}_{N}$ from $T^*\tilde M$ to $T^*M$, can be recognized as the Cartan normal conformal connection \cite{10.2996/kmj/1138845392,kobayashi2012transformation}. From here we can see that the Weyl-Schouten tensor of the boundary appears in the leading order ($\rho=0$) of the ambient connection. 
\par
From the connection 1-form \eqref{eq:conn1form}, we can also find the ambient curvature 2-form in the frame $\{\bm e^+,\bm e^i,\bm e^-\}$:
\begin{align}
\label{eq:curv2form}
\tilde{\bm R}^{M}{}_{N}=&\left(\begin{array}{ccc}0 & -t\bm{\mathcal C}_j & 0  \\-\frac{\rho}{t}\bm{\mathcal C}^i & {\bm{\mathcal W}}^i{}_{j} & \frac{1}{t}\bm{\mathcal C}^i \\0 & \rho t\bm{\mathcal C}_j & 0 \end{array}\right)+\left(\begin{array}{ccc}0 &\bm{\mathcal B}_j & 0 \\\frac{\rho}{t^2}\bm{\mathcal B}^i &\frac{1}{t}{\cal C}_{kj}{}^i\bm e^k & -\frac{1}{t^2}\bm{\mathcal B}^i\\0 & -\rho\bm{\mathcal B}_j & 0 \end{array}\right)\wedge(\bm e^--\rho\bm e^+)\,.
\end{align}
Here we defined $\bm{\mathcal B}_i={\cal B}_{ij}\bm e^j$, $\bm{\mathcal C}_i=\frac{1}{2}{\cal C}_{ikj}\bm e^j\wedge \bm e^k$, $\bm{\mathcal W}^i{}_j={\cal W}^i{}_{jkl}\bm e^k\wedge \bm e^l$, with
\begin{align}
\label{eq:Bij}
{\cal B}_{ij}&\equiv\p_\rho\psi_{ij}-\psi_{ik}\psi_j{}^k-\hat\nabla_i\varphi_j-2\rho\varphi_i\varphi_j\,,\\
{\cal C}_{ikj}&\equiv\hat\nabla_j\psi_{ki}-\hat\nabla_k\psi_{ji}-2\rho\varphi_if_{jk}\,,\\
{\mathcal W}^i{}_{jkl}&\equiv\bar R^i{}_{jkl}+\delta_j{}^if_{kl}-\delta_k{}^i\psi_{lj}-\psi_k{}^i\gamma_{lj}+\delta_l{}^i\psi_{kj}+\psi_l{}^i\gamma_{kj}+2\rho(\psi_k{}^i\psi_{lj}-\psi_l{}^i\psi_{kj}-\psi_j{}^if_{kl})\,,
\end{align}
where $\hat\nabla$ was introduced in \eqref{eq:hatnabla}, and
\begin{align}
\bar R^{i}{}_{jkl}=&D_{k}\ti \Gamma^{i}{}_{lj}-D_{l}\ti \Gamma^{i}{}_{kj}+\ti \Gamma^{i}{}_{km}\ti \Gamma^{m}{}_{lj}-\ti \Gamma^{i}{}_{lm}\ti \Gamma^{m}{}_{kj}\,.
\end{align}
Plugging in \eqref{Pgf} and \eqref{g4} from the $\rho$-expansion of $\gamma_{ij}$, one obtains at the leading order
\begin{align}
\label{eq:BCW}
{\cal B}_{ij}^{(0)}&=\hat\Omega_{ij}^{(1)}\,,\qquad{\cal C}_{ikj}^{(0)}=\hat C_{ijk}\,,\qquad
{\mathcal W}_{(0)}^i{}_{jkl}=\hat W_{(0)}^i{}_{jkl}\,.
\end{align}
Therefore, when pulled back from $\tilde M$ to $M$ the Riemann curvature of the Weyl-ambient space gives us on $M$ the Weyl tensor $\hat W_{(0)}^i{}_{jkl}$, Weyl-Cotton tensor $\hat C_{ijk}$ and the tensor $\hat\Omega_{ij}^{(1)}$ we obtained in \eqref{eq:Omega1} as follows:
\begin{align}
\label{eq:ambRiem}
\tilde R_{-ij-}|_{\rho=0,t=1}=\hat{\Omega}^{(1)}_{ij}\,,\qquad \tilde R_{-ijk}|_{\rho=0,t=1}=\hat C_{ijk}\,,\qquad\tilde R_{ijkl}|_{\rho=0,t=1}=\hat{W}^{(0)}_{ijkl}\,.
\end{align}
The corresponding curvature 2-form at $\rho=0, t=1$ can be expressed as
\begin{align}
\label{ea:ambRiem0}
\tilde{\bm R}_{(0)}^{M}{}_{N}=&\left(\begin{array}{ccc}0 & -\hat{\bm{C}}_j & 0  \\0 & \hat{\bm{W}}_{(0)}^i{}_{j} & \hat{\bm{C}}^i \\0 & 0 & 0 \end{array}\right)+\left(\begin{array}{ccc}0 &\hat{\bm\Omega}^{(1)}_j & 0 \\0 &\hat{C}_{kj}{}^i\bm e^k & -\hat{\bm\Omega}_{(1)}^i\\0 & 0 & 0 \end{array}\right)\wedge\bm e^-\,,
\end{align}
where $\hat{\bm\Omega}^{(1)}_i ={\hat\Omega}^{(1)}_{ij}\bm e^j$, $\hat{\bm C}_i=\frac{1}{2}\hat{C}_{ikj}\bm e^j\wedge \bm e^k$, $\hat{\bm W}_{(0)}^i{}_j={\hat W}^i{}_{jkl}\bm e^k\wedge \bm e^l$. As expected, the first matrix in \eqref{ea:ambRiem0}, which represents the components of $\tilde{\bm R}_{(0)}^{M}{}_{N}$ in the $\bm e^i\wedge\bm e^j$ directions, is the curvature 2-form of the Cartan normal connection. The $\bm e^i\wedge\bm e^-$ components, on the other hand, give rise to the tensor $\hat \Omega^{(1)}_{ij}$ on $M$, which is expected to be the first extended Weyl-obstruction tensor. This implies that we can define the extended Weyl-obstruction tensors on the $d$-dimensional manifold $M$ by means of the $(d+2)$-dimensional Weyl-ambient space. Before getting to that, we first provide the following proposition, which shows that diffeomorphism-covariant tensors in the Weyl-ambient space are Weyl-covariant tensors when pulled back to $M$.

\begin{prop}
\label{prop1}
Let $IJKLM_{1}\dots M_{r}$ be a list of indices, $s_{+}$ of which are $+$, $s_{M}$ of which correspond to $x^i$, and $s_{-}$ of which are $-$, then under the ambient Weyl diffeomorphism \eqref{eq:Weyldiff}, we have
\begin{align}
\label{eq:prop1}
\tilde\nabla_{M_1}\cdots\tilde\nabla_{M_r} \tilde R'_{IJKL}|_{\rho'=0,t'=1}={\cal B}(x)^{2s_{-}-2} \tilde\nabla_{M_1}\cdots\tilde\nabla_{M_r} \tilde R_{IJKL}|_{\rho=0,t=1}\,.
\end{align}
\end{prop}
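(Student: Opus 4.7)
The plan is to split the claimed identity into two independent moves. The first is a pure \emph{passive} coframe transformation at a fixed ambient point: I compute how $\{\bm e^+,\bm e^i,\bm e^-\}$ in \eqref{e+-def} compares to its primed counterpart under the coordinate change \eqref{eq:Weyldiff} together with the induced relations \eqref{eq:Weyldiff2}. A direct substitution using $\td t'={\cal B}\,\td t+t\pa_i{\cal B}\,\td x^i$, $\td\rho'={\cal B}^{-2}\td\rho-2{\cal B}^{-3}\rho\pa_i{\cal B}\,\td x^i$ and $a'_i=a_i-\pa_i\ln{\cal B}$ collapses to the clean point-wise relations
\begin{equation*}
\bm e'^+|_p={\cal B}(x)\,\bm e^+|_p\,,\qquad\bm e'^i|_p=\bm e^i|_p\,,\qquad\bm e'^-|_p={\cal B}(x)^{-1}\,\bm e^-|_p\,.
\end{equation*}
Since $\tilde\nabla_{M_1}\cdots\tilde\nabla_{M_r}\tilde R_{IJKL}$ is a $(0,r+4)$ tensor on $\tilde M$, equating its expansion in the two coframes at the same ambient point gives each frame component a passive rescaling by a factor ${\cal B}^{s_--s_+}$.

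A second move is needed because the two evaluations in \eqref{eq:prop1} occur at different ambient points: fixing $t'=1$, $\rho'=0$ in the primed coordinates corresponds to $t={\cal B}^{-1}(x)$, $\rho=0$ in the unprimed coordinates, not to $t=1$. To slide along the $t$-fibre I would exploit the homogeneity \eqref{eq:homo} of the Weyl-ambient metric. The flow $\phi_\lambda:(t,x,\rho)\mapsto(\lambda t,x,\rho)$ of $\un T=t\un\pa_t$ is a homothety of $\tilde g$, $\phi_\lambda^*\tilde g=\lambda^2\tilde g$. Because $\lambda$ is constant the Levi-Civita connection of $\tilde g$ is invariant under $\phi_\lambda$, so the $(0,r+4)$ tensor $\tilde\nabla^r\tilde R$ transforms with overall weight two, $\phi_\lambda^*(\tilde\nabla^r\tilde R)=\lambda^2\,\tilde\nabla^r\tilde R$. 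On the other hand, because $a_i$ and $\gamma_{ij}$ are $t$-independent a direct check on \eqref{e+-def} gives $\phi_\lambda^*\bm e^\pm=\lambda\,\bm e^\pm$ and $\phi_\lambda^*\bm e^i=\bm e^i$. Matching the two expressions for $\phi_\lambda^*(\tilde\nabla^r\tilde R)$ in components therefore fixes the $t$-scaling of each frame component,
\begin{equation*}
\tilde\nabla_{M_1}\cdots\tilde\nabla_{M_r}\tilde R_{IJKL}(\lambda t,x,\rho)=\lambda^{2-s_+-s_-}\,\tilde\nabla_{M_1}\cdots\tilde\nabla_{M_r}\tilde R_{IJKL}(t,x,\rho)\,.
\end{equation*}

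The proof then closes by composing the two moves: I apply the passive coframe relation at the point $p'$ whose unprimed coordinates are $({\cal B}^{-1},x,0)$, and then set $\lambda={\cal B}^{-1}$ in the homogeneity identity to translate the unprimed evaluation from $t={\cal B}^{-1}$ back to $t=1$. The accumulated prefactor is ${\cal B}^{s_--s_+}\cdot{\cal B}^{-(2-s_+-s_-)}={\cal B}^{2s_--2}$, exactly the claim. Note that the $s_M$-type indices drop out of the final exponent because $\bm e^i$ is inert under both operations. The main obstacle here is the bookkeeping: the passive coframe change is point-wise while the active $t$-homothety moves the point along the $\un T$-orbit, and one must keep the two effects separate before composing them so that the distinct ${\cal B}$- and $\lambda$-dependences combine cleanly into the stated exponent.
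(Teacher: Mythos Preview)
Your proof is correct and follows essentially the same two-step strategy as the paper's own proof: a pointwise frame transformation under the ambient Weyl diffeomorphism producing the factor ${\cal B}^{s_--s_+}$, followed by the $t$-homogeneity of $\tilde g$ to slide along the fibre and pick up ${\cal B}^{-(2-s_+-s_-)}$. The only cosmetic differences are that you phrase the first step in terms of the coframes $\bm e^{\pm},\bm e^i$ rather than the dual vectors $\un D_{\pm},\un D_i$, and you apply the homogeneity shift on the unprimed side (moving $t$ from ${\cal B}^{-1}$ to $1$) whereas the paper applies it on the primed side (moving $t'$ from ${\cal B}$ to $1$); these choices are manifestly equivalent.
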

\begin{proof}
Under the ambient Weyl diffeomorphism \eqref{eq:Weyldiff}, the vector basis $\{\un D_P\}$ transforms as
\be
\un D'_+={\cal B}(x)^{-1}\un D_+\,,\qquad\un D'_i=\un D_i\,,\qquad\un D'_-={\cal B}(x)\un D_-\,,
\ee
where
\begin{align}
\label{eq:D+-i'}
\un D'_+&=\un\p'_t-\frac{\rho'}{t'}\un\p'_\rho\,,\qquad\un D'_i=\un\p'_i-t'a'_i(x',\rho')\un\p'_t+2\rho' a'_i(x',\rho')\un\p'_\rho\,,\qquad\un D'_-=\frac{1}{t'}\un\p'_\rho\,.
\end{align}
Hence, 
\begin{align}
\label{eq:R'R}
\tilde\nabla_{M_1}\cdots\tilde\nabla_{M_r} \tilde R'_{IJKL}|_{\rho'=0,t'={\cal B}(x)}={\cal B}(x)^{s_{-}-s_+} \tilde\nabla_{M_1}\cdots\tilde\nabla_{M_r} \tilde R_{IJKL}|_{\rho=0,t=1}\,.
\end{align}
Noticing the fact that $\tilde g$ is homogeneous in $t$ with degree 2, and considering the $t$-dependence of $\un D_+$ and $\un D_-$ in \eqref{eq:D+-i}, we have
\begin{align}
\label{eq:R'R'}
\tilde\nabla_{M_1}\cdots\tilde\nabla_{M_r} \tilde R'_{IJKL}|_{\rho'=0,t'=1}={\cal B}(x)^{s_{-}+s_+-2} \tilde\nabla_{M_1}\cdots\tilde\nabla_{M_r} \tilde R'_{IJKL}|_{\rho'=0,t'={\cal B}(x)}\,.
\end{align}
Combining \eqref{eq:R'R} and \eqref{eq:R'R'} we obtain \eqref{eq:prop1}.
\end{proof}
Since diffeomorphism-covariant tensors can be constructed out of the Riemann tensor and its covariant derivatives \cite{Graham2007jet}, this proposition implies that the pullback of an ambient tensor $\tilde T_{M_1\cdots M_k}$ to $M$:
\be
T_{i_1\cdots i_{s_M}}\equiv \tilde T_{M_1\cdots M_k}|_{\rho=0,t=1},
\ee
is Weyl covariant with Weyl weight $2s_{-}-2$, where among the indices $M_1\cdots M_k$, $s_-$ of which are $-$, and $s_M$ of which correspond to $x^i$. For instance, from Proposition \ref{prop1} we can see that the tensors obtained in \eqref{eq:ambRiem} are all Weyl-covariant tensors on $M$, and the Weyl weights of $\hat{\Omega}^{(1)}_{ij}$, $\hat C_{ijk}$ and $\hat{W}^{(0)}_{ijkl}$ can be read off to be 2, 0, and $-2$, respectively, which are indeed the correct Weyl weights (see Table 1 in Appendix A of \cite{Jia:2021hgy}).
\par
As a special kind of Weyl-covariant tensor, we introduce the extended Weyl-obstruction tensors as follows.
\begin{defn}\label{def1}
Suppose $k$ is a positive integer. The $k^{th}$ extended Weyl-obstruction tensor $\hat \Omega^{(k)}_{ij}$ is defined as
\begin{align}
\label{eq:def2}
\hat \Omega^{(k)}_{ij}= \underbrace{\tilde\nabla_-\cdots\tilde\nabla_-}_{k-1} \tilde R_{- ij- }|_{\rho=0,t=1}.
\end{align}
\end{defn}
Some properties of Weyl-obstruction tensors can be readily seen from the above definition. From the symmetry of the Riemann tensor we can see that $\hat \Omega^{(k)}_{ij}$ is a symmetric tensor. It follows from Proposition~\ref{prop1} that $\hat \Omega^{(k)}_{ij}$ is Weyl covariant with Weyl weight $2k$. Also, from the Ricci-flatness condition we obtain that $\tilde g^{IJ}\tilde\nabla_{M_1}\cdots\tilde\nabla_{M_r}\tilde R_{IKJL}=0$, which gives rise to $\gamma_{(0)}^{ij}\hat \Omega^{(k)}_{ij}=0$, i.e.\ $\hat \Omega^{(k)}_{ij}$ is traceless.
\par
We have seen in \eqref{eq:BCW} that when $k=0$, this definition gives the $\hat\Omega^{(1)}_{ij}$ in \eqref{eq:Omega1}. By computing $\tilde\nabla_-\tilde R_{-ij-}$, one also finds that $\hat \Omega^{(2)}_{ij}$ defined in this way gives exactly the expression in \eqref{eq:Omega2} (see Appendix \ref{App:Null}). Notice again that before Definition \ref{def1} (also in \cite{Jia:2021hgy}), when we referred to $\hat\Omega^{(k)}_{ij}$ as the $k^{th}$ extended Weyl-obstruction tensor, we simply meant that they denote the pole of $\gamma^{(k+1)}_{ij}$ at $d=2k+2$; however, there is an ambiguity since the pole can be shifted by a finite term, and so that was not a precise definition for extended Weyl-obstruction tensors. Now the $\hat\Omega^{(k)}_{ij}$ defined through the Weyl-ambient space is uniquely determined. The proposition below will show that each $\hat\Omega^{(k)}_{ij}$ defined through the Weyl-ambient space indeed has a pole at $d=2k+2$, whose residue is the same as the pole in $\gamma^{(k+1)}_{ij}$. Therefore, the ambiguity of the pole in $\gamma^{(k+1)}_{ij}$ can be fixed by taking it to be the extended Weyl-obstruction tensor in Definition \ref{def1}.

\begin{prop}
\label{prop:obspole}
Let $k\geqslant 2$ be an integer. Both the extended Weyl-obstruction tensor $\hat\Omega^{(k-1)}_{ij}$ and $\gamma^{(k)}_{ij}$ in the expansion \eqref{eq:gexpan} have a simple pole at $d=2k$. The residues satisfy
\be
\textnormal{Res}_{d=2k}\hat\Omega^{(k-1)}_{ij}=\frac{k!}{2}\textnormal{Res}_{d=2k}\gamma^{(k)}_{ij}\,.
\ee
More specifically, $\hat\Omega^{(k-1)}_{ij}$ has the following form:
\be
\hat\Omega^{(k-1)}_{ij}=\frac{(-1)^{k-1}\Gamma(d/2-k)}{2^{k-1}\Gamma(d/2-1)}(\Delta_{(0)}^{k-1}\hat P_{ij}-\Delta_{(0)}^{k-2}\hat\nabla^{(0)}_i\hat\nabla^{(0)}_kP_{j}{}^k+\cdots)\,,
\ee
where $\Delta_{(0)}\equiv\hat\nabla^{(0)}_k\hat\nabla_{(0)}^k$ and the ellipsis represents the terms with fewer number of $\hat\nabla^{(0)}$. The terms inside the brackets represent the Weyl-obstruction tensor.
\end{prop}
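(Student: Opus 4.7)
The plan is to relate the definition \eqref{eq:def2} of $\hat\Omega^{(k-1)}_{ij}$ to the recursion for the Taylor coefficients $\gamma^{(n)}_{ij}$ that follows from Ricci-flatness of the Weyl-ambient metric $\tilde g$, and then to extract both the pole structure at $d=2k$ and the leading derivative content by iterating that recursion from the initial datum $\gamma^{(1)}_{ij}=2\hat P_{(ij)}$.

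First I would expand the $ij$-component of $\tilde R\!ic(\tilde g)=0$ order by order in $\rho$ for the metric \eqref{Weyl_ambient}. Using the connection 1-form \eqref{eq:conn1form} and the formulas \eqref{eq:Bij}--\eqref{eq:Gammaijk}, the coefficient of $\rho^{n-1}$ yields, for each $n\geqslant 2$, an equation of the schematic form
\begin{equation*}
n(d-2n)\,\gamma^{(n)}_{ij}\;=\;-\,\Delta_{(0)}\gamma^{(n-1)}_{ij}\;+\;\Phi^{(n)}_{ij}\bigl[\gamma^{(<n-1)},a^{(<n)};\hat\nabla^{(0)}\bigr]\,,
\end{equation*}
where $\Phi^{(n)}_{ij}$ contains strictly fewer $\hat\nabla^{(0)}$-derivatives than the top-order piece $\Delta_{(0)}\gamma^{(n-1)}_{ij}$. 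Treating $d$ as a complex variable, the prefactor $n(d-2n)$ immediately shows that $\gamma^{(n)}_{ij}$ has at worst a simple pole at $d=2n$. Iterating the recursion starting from $\gamma^{(1)}_{ij}=2\hat P_{ij}-f^{(0)}_{ij}$ and keeping only the top-derivative terms yields, by a straightforward induction,
\begin{equation*}
\gamma^{(k)}_{ij}\;=\;\frac{2\,(-1)^{k-1}\Gamma(d/2-k)}{k!\,2^{k-1}\,\Gamma(d/2-1)}\Bigl(\Delta_{(0)}^{k-1}\hat P_{ij}\;-\;\Delta_{(0)}^{k-2}\hat\nabla^{(0)}_i\hat\nabla^{(0)}_k\hat P_{j}{}^{k}+\cdots\Bigr)\,,
\end{equation*}
with the ellipsis denoting terms of strictly lower derivative order in $\gamma^{(0)}_{ij}$ and $a^{(0)}_i$ whose coefficients are regular at $d=2k$.

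Second I would compute $\hat\Omega^{(k-1)}_{ij}$ directly from Definition \ref{def1}. Since $\un D_-=t^{-1}\un\partial_\rho$ reduces to $\un\partial_\rho$ at $t=1$ and the Christoffel pieces entering $\tilde\nabla_-$ carry only a bounded number of $\hat\nabla^{(0)}$-derivatives on the seed data, they cannot contribute to the highest derivative order; hence
\begin{equation*}
\hat\Omega^{(k-1)}_{ij}\;=\;\partial_\rho^{k-2}\tilde R_{-ij-}\bigr|_{\rho=0,\,t=1}\;+\;(\text{lower derivative order})\,.
\end{equation*}
By \eqref{eq:Bij}, $\tilde R_{-ij-}|_{t=1}={\cal B}_{ij}=\partial_\rho\psi_{ij}+\cdots=\tfrac12\partial_\rho^2\gamma_{ij}+\cdots$, so
\begin{equation*}
\hat\Omega^{(k-1)}_{ij}\;=\;\tfrac12\,\partial_\rho^{k}\gamma_{ij}\bigr|_{\rho=0}\;+\;(\text{lower derivative order})\;=\;\tfrac{k!}{2}\,\gamma^{(k)}_{ij}\;+\;(\text{lower derivative order})\,.
\end{equation*}
Because the remainder is regular at $d=2k$, this at once gives the residue identity $\mathrm{Res}_{d=2k}\hat\Omega^{(k-1)}_{ij}=\tfrac{k!}{2}\mathrm{Res}_{d=2k}\gamma^{(k)}_{ij}$, and, combined with the first step, the claimed closed-form expression for $\hat\Omega^{(k-1)}_{ij}$. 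As a consistency check, the case $k=2$ reduces $\tfrac{(-1)\Gamma(d/2-2)}{2\Gamma(d/2-1)}\Delta_{(0)}\hat P_{ij}$ to $-\Delta_{(0)}\hat P_{ij}/(d-4)$, matching the leading term of \eqref{eq:Omega1}.

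The main obstacle will be the bookkeeping needed to confirm the precise coefficient of the \emph{second} top-derivative term $-\Delta_{(0)}^{k-2}\hat\nabla^{(0)}_i\hat\nabla^{(0)}_k\hat P_j{}^k$. That term has the same total derivative count as $\Delta_{(0)}^{k-1}\hat P_{ij}$, so it receives contributions from several sources: the subleading pieces of $\Phi^{(n)}_{ij}$, commutators $[\hat\nabla^{(0)}_k,\hat\nabla_{(0)}^k]$ generated while moving Laplacians past divergences in the iterated recursion, and the $\psi_{ik}\psi_j{}^k$ and $\hat\nabla_i\varphi_j$ contributions to ${\cal B}_{ij}$ on the $\tilde\nabla_-$ side. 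Establishing the exact numerical coefficient requires showing that all these contributions assemble into a single $\Delta_{(0)}^{k-2}\hat\nabla^{(0)}_i\hat\nabla^{(0)}_k\hat P_j{}^k$ with the claimed prefactor; this is a finite but intricate computation, and conceptually it is just the Weyl-covariant extension of the standard Fefferman--Graham iteration of \cite{Fefferman:2007rka,graham2009extended}.
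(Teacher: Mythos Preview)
Your proposal is correct and follows essentially the same strategy as the paper: establish the recursion coming from the $ij$-component of Ricci-flatness, iterate it to extract the pole and leading-derivative structure of $\gamma^{(k)}_{ij}$, and then observe from \eqref{eq:Bij} that $\hat\Omega^{(k-1)}_{ij}=\tfrac12\partial_\rho^{k}\gamma_{ij}|_{\rho=0}+(\text{regular at }d=2k)$ to obtain the residue identity.

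The one tactical difference worth noting concerns exactly the ``main obstacle'' you flag. The paper does not treat the second top-derivative term as a separate bookkeeping problem. Instead it formulates the induction in terms of $\psi_{ij}=\tfrac12(\partial_\rho\gamma_{ij}+f_{ij})$ and carries \emph{both} top-order pieces $\Delta^{k-1}\psi_{ji}$ and $-\Delta^{k-2}\hat\nabla_i\hat\nabla_k\psi^{k}{}_{j}$ through the inductive hypothesis simultaneously (their equation analogous to your recursion already reads $(d-4)\partial_\rho\psi_{ji}=-(\Delta\psi_{ji}-\hat\nabla_i\hat\nabla_k\psi^{k}{}_{j}+\cdots)+2\rho\partial_\rho^2\psi_{ij}+O(\rho)$). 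In the inductive step the constraint $\tilde R_{i-}=0$, i.e.\ $\hat\nabla_j\psi_i{}^j-\hat\nabla_i\theta=2\rho\varphi^j f_{ji}$, is used to control the divergence term. With this packaging the coefficient of the subleading term is fixed automatically by the same Gamma-function prefactor as the leading one, so no extra computation is needed. Your approach would reach the same answer, but the paper's formulation in $\psi$ together with the $\tilde R_{i-}=0$ constraint is what makes the subleading coefficient come for free.
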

\begin{proof}
First, let us show that $\gamma^{(k\geqslant2)}_{ij}$ has a pole at $d=2k$, which has the form
\be
\label{eq:gammapole}
\gamma^{(k)}_{ij}=\frac{(-1)^{k-1}\Gamma(d/2-k)}{2^{k-2}k!\Gamma(d/2-1)}(\Delta_{(0)}^{k-1}\hat P_{ij}-\Delta^{k-2}_{(0)}\hat\nabla^{(0)}_i\hat\nabla^{(0)}_kP_{j}{}^k+\cdots)\,.
\ee
We have seen this previously for $k=2$ and $3$. Using mathematical induction, now we will prove the following equation for $k\geqslant2$:
\be
\label{eq:psipole}
(d-2k)\p^{k-1}_{\rho}\psi_{ji}=\frac{(-1)^{k-1}\Gamma(d/2-k+1)}{2^{k-2}\Gamma(d/2-1)}(\Delta^{k-1}\psi_{ji}-\Delta^{k-2}\hat\nabla_i\hat\nabla_k\psi^k{}_{j}+{\cdots})+2\rho\p^{k}_{\rho}\psi_{ij}+O(\rho)\,,
\ee
where $\Delta\equiv\hat\nabla_k\hat\nabla^k$. This relation leads to \eqref{eq:gammapole} when $\rho=0$ since $\psi_{ij}=\frac{1}{2}(\p_\rho\gamma_{ij}+f_{ij})$ (the $f_{ij}$ in the left-hand side are combined in the ellipsis). Differentiating the Ricci-flatness condition of the form \eqref{eq:Rij0} with respect to $\rho$ and use the expression \eqref{eq:Rijprho} we can see that
\be
\label{eq:psipole2}
(d-4)\p_{\rho}\psi_{ji}=-(\Delta\psi_{ji}-\hat\nabla_i\hat\nabla_k\psi^k{}_{j}+{\cdots})+2\rho\p^2_{\rho}\psi_{ij}+O(\rho)\,,
\ee
which is \eqref{eq:psipole} in the case $k=2$. Now we assume \eqref{eq:psipole} holds for $k=n$. Differentiating both sides of \eqref{eq:psipole2}  for $n-1$ times with respect to $\rho$ yields
\be
(d-2n-2)\p^{n}_{\rho}\psi_{ji}=-\p^{n-1}_\rho(\Delta\psi_{ji}-\hat\nabla_i\hat\nabla_k\psi^k{}_{j}+\cdots)+2\rho\p^{n+1}_{\rho}\psi_{ij}+O(\rho)\,.
\ee
Note that $\p_\rho$ produces two $\hat\nabla$ when acting on $\psi$, while it only produces one $\hat\nabla$ when acting on $\tilde\Gamma^i{}_{jk}$, and thus when we commute $\p_\rho$ with $\hat\nabla$, the new terms only contribute to the ellipsis. Hence,
\begin{align}
&(d-2n-2)\p^{n}_{\rho}\psi_{ji}=-(\Delta\p^{n-1}_\rho\psi_{ji}-\hat\nabla_i\hat\nabla^k\p^{n-1}_\rho\psi_{kj}+\cdots)+2\rho\p^{n+1}_{\rho}\psi_{ij}+O(\rho)\nn\\
={}&\frac{(-1)^{n}\Gamma(d/2-n)}{2^{n-1}\Gamma(d/2-1)}(\Delta^{n}\psi_{ji}-\Delta^{n-1}\hat\nabla_i\hat\nabla_k\p_\rho\psi^k{}_{j}+\cdots)+2\rho\p^{n+1}_{\rho}\psi_{ij}+O(\rho)\,,
\end{align}
where {we used \eqref{eq:R+i0} and} the assumption that \eqref{eq:psipole} holds for $k=n$. This is exactly \eqref{eq:psipole} for $k=n+1$, and thus \eqref{eq:psipole} is proved for any $k\geqslant2$. Therefore, at $\rho=0$ we have
\be
\p^k_{\rho}\psi_{ji}|_{\rho=0}=\frac{(-1)^{k-1}\Gamma(d/2-k-1)}{2^{k}\Gamma(d/2-1)}(\Delta_{(0)}^{k}\hat P_{ij}-\Delta_{(0)}^{k-1}\hat\nabla^{(0)}_i\hat\nabla^{(0)}_k\hat P_{j}{}^k+\cdots)\,.
\ee
From \eqref{eq:curv2form} we can read off that
\be
\tilde R_{-ij-}={\cal B}_{ij}=\p_\rho\psi_{ij}-\psi_{ik}\psi_j{}^k-\hat\nabla_i\varphi_j-2\rho\varphi_i\varphi_j\,.
\ee
Hence, the Weyl-obstruction tensor $\hat\Omega^{(k)}_{ij}$ has the form
\begin{align}
\hat\Omega^{(k-1)}_{ij}&=\underbrace{\tilde\nabla_-\cdots\tilde\nabla_-}_{k-2}\tilde R_{-ij-}|_{\rho=0,t=1}=\p^{k-1}_\rho\psi_{ij}|_{\rho=0}+\cdots\nn\\
\label{eq:poleOmega}
&=\frac{(-1)^{k-1}\Gamma(d/2-k)}{2^{k-1}\Gamma(d/2-1)}(\Delta_{(0)}^{k-1}\hat P_{ij}-\Delta_{(0)}^{k-2}\hat\nabla^{(0)}_i\hat\nabla^{(0)}_kP_{j}{}^k+{\cdots})\,,
\end{align}
where finite terms at $d=2k$ are shifted into the pole. On the other hand, from \eqref{eq:poleOmega} we also have
\begin{align}
\text{Res}_{d=2k}\hat\Omega^{(k-1)}_{ij}&=\text{Res}_{d=2k}\p^k_\rho\psi_{ij}|_{\rho=0}=\frac{k!}{2}\text{Res}_{d=2k}\gamma^{(k)}_{ij}\,,
\end{align}
where in the second equality we considered that $f_{ij}$ does not contribute to the pole.
\end{proof}
This proposition indicates that both the extended Weyl-obstruction tensor $\hat\Omega^{(k-1)}_{ij}$ and $\gamma^{(k)}_{ij}$ are meromorphic functions, which are holomorphic in the whole complex plane except at even integers $d=4,6,\cdots,2k$. We have seen that the pole at $d=2k$ is a simple pole, while the pole at a lower even dimension could be of higher order. These two tensors only differ by terms that are finite at $d=2k$. Therefore,  we can express $\gamma^{(k)}_{ij}$ in terms of $\hat\Omega^{(k-1)}_{ij}$ plus finite terms as we have seen for $k=1,2$ in \eqref{g4} and \eqref{g6}.
\par

Later in the next section, we will introduce the extended Weyl-obstruction tensors in the second order formalism \`a la \cite{Fefferman:2007rka} and show that the two definitions are equivalent.

\section{Weyl-Ambient Space: Bottom-Up Perspective}
\label{sec:bottomup}
\subsection{Formal Theory}
\label{sec:Formal}
In this section we will present a geometric interpretation of the Weyl-ambient metric \eqref{Weyl_ambient} as well as the Weyl connection therein in terms of a bottom-up construction. By ``bottom-up'' we mean to construct a $(d+2)$-dimensional Weyl-ambient space from a $d$-dimensional manifold $M$. The majority of this section is based on Section 2 and Section 3 of \cite{Fefferman:2007rka} where a more detailed exposition of the ambient construction can be found. We will generalize the main definitions and theorems there with the inclusion of a Weyl connection on the principal $\RR_+$-bundle. The resulting Weyl structure together with the metric bundle, viewed as an associated bundle, will be then used to define the Weyl-ambient metric. For this section to be self-contained we repeat some of the definitions and proofs of \cite{Fefferman:2007rka} when necessary while generalizing them appropriately. 
\par
We start with a $d$-dimensional manifold $M$ and introduce a principal $\bb{R}_{+}$-bundle $\mathcal{P}_W$ over $M$ that we call a {\it Weyl structure}.\footnote{We use this name since ${\cal P}_W$ can be regarded as a $G$-structure of the frame bundle, in which the structure group is reduced from $GL(d,\RR)$ to $\RR_+$.} 

\begin{defn}\label{Weylstructure}
Given a $d$-dimensional manifold $M$, a \emph{Weyl structure} is a $(d+1)$-dimensional manifold ${\cal P}_W$ together with the structure group $\bb R_+$, which is equipped with
\par
(2.1) a free right action $\delta:\mathcal{P}_W\times\bb R_+\to\mathcal{P}_W$, such that $\delta_s(p)= p\cdot s$, $\forall p\in\mathcal{P}_W$, $s\in\bb R_+$;
\par
(2.2) a projection map $\pi: \mathcal{P}_W\rightarrow M$, such that $\pi(p)=\pi(p\cdot s)$, $\forall p\in\mathcal{P}_W$, $s\in\bb R_+$;
\par
(2.3) a local trivialization $T_i:\pi^{-1}(U_i)\to U_i\times \bb{R}_{+}$ for each open set $U_i\subset M$ with $T_i(p)=(\pi(p),t_i(p))$, where $t_i:\pi^{-1}(U_i)\to \bb{R}_{+}$ satisfies $t_i(p\cdot s)=t_i(p)\cdot s$ for all $s\in\bb{R}_{+}$.
\par
For brevity, suppose $U_i\subset M$ has local coordinates $\{x^i\}$, we can express a point $p\in{\cal P}_W$ as $(x,t)$ with $t\in \bb{R}_{+}$. 
\end{defn}
A connection on the Weyl structure can be described as follows. First we note that the push forward $\pi_*:T{\cal P}_W\to TM$ defines the vertical sub-bundle $V\subset T{\cal P}_W$ given at any point $p\in{\cal P}_W$ by
\beq
V_p=\ker(\pi_*)\equiv\{\un v\in T_p{\cal P}_W\,|\,\pi_*(\un v)=\un 0\}.
\eeq
In the present case $V_p$ is a one-dimensional vector space spanned by the fundamental vector field which generates the  group action along the fibres; in the local trivialization, it is expressed as $\un T=t\un\pa_t$. From the perspective of ${\cal P}_W$, we can then think of the action of $\bb{R}_+$ as corresponding to a dilatation of the fibres. To assign a connection on ${\cal P}_W$ is to specify a horizontal sub-space $H_p\subset T_p{\cal P}_W$ such that $T_p{\cal P}_W=H_p\oplus V_p$ at any $p$. In the local trivialization given above, the horizontal bundle can be described as the span of vectors of the form $\un D_i=\un\pa_i-a_i(x)t\un\pa_t$.\footnote{Here we have required that $a(x)$ be independent of $t$ in order to make the Weyl-ambient metric homogeneous of degree 2 with respect to $t$.} Equivalently, it can be described as the kernel of a form $n:=t^{-1}\td t+a_i(x)\td x^i\in T^*{\cal P}_W$, i.e.
\beq
\label{eq:HV}
H_p:= \{\un u\in T_p{\cal P}_W\,|\,i_{\un u}n=0\}\qquad \forall p\in{\cal P}_W.
\eeq
We note that under the Abelian group action $(x,t(x))\mapsto (x,t'(x))=(x,t(x)s(x))$, we have
\beq
\label{eq:horizontaln}
n'= n+\big(a_i'(x)-a_i(x)+\pa_i\ln s(x)\big)\td x^i\,,
\eeq
and so we see that the coefficients $a_i(x)$ transform as connection coefficients.
Note also that it is natural to introduce the projector ${\bf a}:T{\cal P}_W\to V$ as
\beq\label{projector_triv_1}
{\bf a}=t\un\pa_t\otimes \big(t^{-1}\td t+a_i(x)\td x^i\big)\,,
\eeq
which is an alternative way to express the connection on ${\cal P}_W$. We will refer to both $\mathbf a$ and $a_i(x)$ as the \emph{Weyl connection}.

This line bundle has an important representation given by a conformal class of metrics. Indeed, all the non-trivial representations are one-dimensional, and thus a representation of $\bb{R}_+$ is given by specifying a Weyl weight $w$. We call the corresponding associated bundle ${\cal E}_{w}$ and its sections respond to the group action as
\beq
T_x\mapsto s(x)^w T_x\,.
\eeq
Equivalently, this determines the transition functions on the associated bundle.

Suppose a conformal class $[g]$ of smooth metrics of signature $(p,q)$ is given on $M$, in which any two representatives $g$ and $g'$ are related by a smooth function ${\cal B}(x)$ as $g'_x={\cal B}(x)^{-2}g_x$, where $g_x$ is the value of $g$ at a point $x\in M$. Then, $(M,[g])$ is a conformal manifold. One can define a \emph{metric bundle} ${\cal G}$ as follows \cite{Fefferman:2007rka}:
\begin{defn}\label{Metric Bundle}
A \emph{metric bundle} $\mathcal{G}$ is the collection of pairs $(x,h)$ where $h= s^2 g_{x}$, $\forall s\in \mathbb{R}_{+}$ and $\forall x\in M$, which is equipped with 
\par
(3.1) a dilatation map $\tilde\delta_{s}: \mathcal{G}\to \mathcal{G}$ such that $\tilde\delta_{s}(x,h)= (x,s^2 h)$, $\forall s\in \mathbb R_{+}$. 
\par
(3.2) a projection map $\tilde\pi: \mathcal{G}\rightarrow M$ such that $(x,h)\mapsto x$;
\end{defn}
This definition simply identifies a conformal class of metrics with a bundle associated to the Weyl structure given by the weight $w=-2$ representation of $\bb{R}_+$. We note that it is isomorphic to the Weyl structure ${\cal P}_W$, as is any non-trivial associated bundle of ${\cal P}_W$.\footnote{Note that in \cite{Fefferman:2007rka}, the metric bundle $\cal G$ itself is treated as the principal $\bb R_+$-bundle through an isomorphism. Here we introduced the Weyl structure ${\cal P}_W$ and distinguish it from $\cal G$ in order to emphasize that a conformal class of metrics furnishes a representation of the group $\bb{R}_+$ with $w=-2$.} Under a trivialization, assigning an isomorphism between ${\cal P}_W$ and the metric bundle $\cal G$ can be thought of as a choice of representative $g$ of the conformal class $[g]$ if we identify
\begin{equation}
\label{eq:trivial}
(x,t)\in U_i\times\mathbb{R}_{+} \qquad \text{with }\qquad (x,t^2 g_{x})\in {\cal G}\,.
\end{equation}
Given $g\in[g]$, for any $p\in{\cal P}_W$, by means of the corresponding $(x,h)\in{\cal G}$ one can define a symmetric tensor $\mathbf{g}_{0}$ of type $(0,2)$ called the \emph{tautological tensor} that acts on vector fields $\un w_1,\un w_2 \in T_{p}{\cal P}_W$ as follows:
\begin{equation}\label{tautological_tensor}
\mathbf{g}_{0} (\un w_1,\un w_2)\equiv h(\pi_{*}\un w_1,\pi_{*}\un w_2)\,,
\end{equation}
which can be expressed as $\mathbf{g}_0=t^2\pi^*g$ under the identification in \eqref{eq:trivial}. 

If we pick another representative $g'_{x}={\cal B}(x)^{-2}g_{x}$ of the conformal class $[g]$, following the identification in \eqref{eq:trivial}, we obtain another isomorphism between ${\cal P}_W$ and ${\cal G}$ by identifying
\begin{equation}
(x,t')\in U_i\times\mathbb{R}_{+} \qquad \text{with }\qquad (x,t'^2 g'_{x})\in {\cal G}.
\end{equation}
It is easy to see that the two isomorphisms are related by setting $t'= {\cal B}(x)t$. To preserve the horizontal subspace on ${\cal P}_W$, from \eqref{eq:horizontaln} we can see that $a'_{i}(x)$ satisfies
\begin{equation}
a_{i}'(x)= a_{i}(x)- \pa_{i}\text{ln}{\cal B}(x)\,.
\end{equation}
In the present circumstances, it is natural to replace the notion of conformal class $[g]$ by the {\it Weyl class} $[g,a]$, with the property
\beq
\forall (g,a), (g',a')\in [g,a],\quad \exists\; {\cal B}(x)\; {\rm such \; that} \; (g'_x,a'_x)=({\cal B}(x)^{-2}g_x,a_x-\td\ln{\cal B}(x))\,,
\eeq
where $\td$ is the exterior derivative on $M$.
\par
Before we proceed to define the Weyl-ambient space based on the Weyl structure ${\cal P}_W$, we would like to make a few remarks. Recall that for the Weyl-ambient metric \eqref{Weyl_ambient}, the coordinates $t$ and $x^{i}$ parametrize a codimension-1 null hypersurface ${\cal N}$ located at $\rho=0$. One can see that this surface is exactly a Weyl structure. In Section \ref{sec:topdown}, the degenerate ``induced metric" of $\tilde g$ on ${\cal N}$ is the tautological tensor, the induced metric $\gamma^{(0)}$ on $M$ is a representative $g$ in the conformal class, and the Weyl connection $a^{(0)}_i(x)$ on $M$ is the $a_i(x)$ in \eqref{projector_triv_1}. Thus, the Weyl class $[\gamma^{(0)},a^{(0)}]$ corresponds to $[g,a]$ in this section, and $(M,[g,a])$ defines a Weyl manifold. We will discuss more details of the role of the Weyl connection and the horizontal subspace it defines in Theorem \ref{diff_Weyl_normal_form} below. It is noteworthy that the projector in \eqref{projector_triv_1}, which defines the Weyl connection on ${\cal P}_W$, is a special case of the construction presented in \cite{Ciambelli:2019lap} with restricted diffeomorphisms.

\par
Now we will define a Weyl-ambient space for a Weyl manifold generalizing the definition of a Fefferman-Graham ambient space for a conformal manifold introduced in \cite{Fefferman:2007rka}. Consider a $(d+2)$-dimensional space $\tilde M$ which looks at least locally like ${\cal P}_W \times\bb{R}$ where each point can be labeled by $(p,\rho)$ with $\rho\in\bb R$. The inclusion map $\iota:{\cal P}_W\to\tilde M$ is defined such that $p\mapsto (p,0)$. By letting the  map $\delta_s$  act only on $p\in{\cal P}_W$,  we can extend $\delta_s$ to a map on $\tilde M$, which commutes with $\iota$. The vector field $\un T$ which generates the Weyl group action is extended to a vector field $\un{\cal T}=\iota_*\un T=t\un\p_t$ on $\tilde M$.
\begin{defn}\label{Ambient}
Suppose $M$ is a $d$-dimensional manifold equipped with a Weyl class $[g,a]$, and ${\cal P}_W$ is a Weyl structure over $M$. A pseudo-Riemannian space $(\tilde M,\tilde g)$ is called the \emph{Weyl-ambient space} for $(M,[g,a])$ if
\par
(4.1) $\tilde M$ is a dilatation-invariant open neighborhood of ${\cal P}_W\times\{0\}$ in ${\cal P}_W\times\bb R$, and the pullback $\iota^*\tilde g$ is the tautological tensor $\mathbf{g}_0$ defined above;
\par
(4.2) $\tilde g$ is a smooth metric on $\tilde M$ of signature $(p+1,q+1)$, which is homogeneous of degree 2 on $\tilde M$, i.e.,\ $\delta^*_s\tilde g=s^2\tilde g$, $\forall s\in\bb R_+$;
\par
(4.3) $Ric(\tilde g)$ vanishes to infinite order at every point of ${\cal P}_W\times\{0\}$.
\end{defn}
\noindent Without condition (4.3), $(\tilde M,\tilde g)$ is called a \emph{Weyl pre-ambient space} for $(M,[g,a])$. Note that the condition (3) in \cite{Fefferman:2007rka} is presented differently when $d$ is even and odd, and $Ric(\tilde g)$ has an obstruction in the order $O(\rho^{d/2-1})$ for even $d$. 
Here we take the dimension to be a continuous complex variable, and so the Ricci-flatness condition always holds to infinite order. The obstruction at even dimension will be manifested by the pole of the expansion of $\tilde g$ at even $d$, which is identified as the extended Weyl-obstruction tensor. This formulation was described in \cite{Ciambelli:2019bzz} and the equivalence of the two formulations in the level of AlAdS bulk was addressed in \cite{Jia:2021hgy}.
\par
Now we introduce the final ingredient in our Weyl-ambient construction---the \emph{Weyl-normal form}, which is a generalization of the \emph{normal form} defined in \cite{Fefferman:2007rka}.
 \begin{defn}\label{Weyl_normal_form}
 A Weyl pre-ambient space $(\ti M,\ti g)$ for $(M,[g,a])$ is said to be in \emph{Weyl-normal form} with acceleration $\un{\cal A}$ if
 \par
(5.1) For each fixed $p\in \mathcal{P}_W$, the set of $\rho \in \mathbb{R}$ such that $(p,\rho)\in \ti M$ is an open interval $I_{p}\in\bb R$ containing $0$.
\par
(5.2) For each $p\in {\cal P}_W$, the parametrized curve $C_p:I_{p} \to \tilde M$, $\rho\mapsto(p,\rho)$ has a tangent vector $\un{\cal U}$, whose acceleration $\un {\cal A}\equiv \tilde\nabla_{\un {\cal U}}\un {\cal U}$ satisfies $\tilde g(\un {\cal T}, \un {\cal A})=0$, where $\tilde\nabla$ is the Levi-Civita connection of $(\tilde M,\tilde g)$.
\par
(5.3) Let $(t,x,\rho)$ represent a point in $\mathbb{R}_{+}\times M\times \mathbb{R}\simeq {\cal P}_W\times \mathbb{R}$ under the local trivialization induced by $g$. Then, at each point $(t,x,0)\in {\cal P}_W\times \{0\}$, the metric $\ti g$ takes the form
 \begin{equation}\label{form_1}
 \ti g|_{\rho=0}= \mathbf{g}_{0}  + 2t^2(t^{-1}\td t + a_{i}(x)\td x^{i})\td \rho\,,
 \end{equation}
 where $\mathbf{g}_{0}$ is the  tautological symmetric tensor defined in \eqref{tautological_tensor}.
 \end{defn} 
Definition \ref{Weyl_normal_form} is engineered for the purpose of generating the Weyl-ambient metric from the ``initial surface'' at $\rho=0$. At $\rho=0$, the Weyl-ambient metric we have seen in \eqref{Weyl_ambient} has the form \eqref{form_1}, which motivates condition (5.3). Since $\un{\cal T}=t\un\p_t$ everywhere in $\tilde M$, condition (5.2) implies that the covector ${\cal A}$ of the acceleration does not have a $t$-component. Furthermore, one can also parametrize the accelerated curve $C_p$ such that $\tilde g(\un{\cal A},\un{\cal U})=0$, and let $\cal A$ have no $\rho$-component either.\footnote{Suppose $C_p$ has a parameter $\lambda$, then under a reparametrization $\lambda\to f(\lambda)$ we have $\un{\cal U}\to f'\un{\cal U}$, and the acceleration vector transforms  $\un{\cal A}\to f'^2 \un{\cal A} + f'\un{\cal U}(f)\un{\cal U}$, and thus $\tilde g(\un{\cal A}, \un{\cal U})$ can always be set to zero for non-null $\un{\cal U}$ by choosing an appropriate function $f$. For null $\un{\cal U}$ the condition holds automatically.} We will assume that $\rho$ is such a parametrization. Note that in the special case where $\un{\cal A}=0$, the $\rho$-coordinate lines are geodesics, and condition (5.2) goes back to that of normal form in \cite{Fefferman:2007rka}, while condition (5.3) will still be different as long as $a_{i}(x)$ are nonvanishing. The acceleration $\un{\cal A}$ encodes all the higher modes $a^{(k\geqslant 1)}_{i}(x)$ in the expansion \eqref{eq:aexpan} of $a_i(x,\rho)$, as we will see in Lemma \ref{Lemma}. In fact, if both $a_{i}(x)$ and $\un{\cal A}$ are zero, the mode $a_{i}(x,\rho)$ in \eqref{Weyl_ambient} vanishes. 
\par
The following Theorem is a generalization of Proposition 2.8 in \cite{Fefferman:2007rka}.
 \begin{theorem}\label{diff_Weyl_normal_form}
 Let $(M,[g,a])$ be a Weyl manifold, with $(g,a)$ a representative of the Weyl class. Let ${\cal P}_W$ be the Weyl structure over $M$, and $(\ti M,\ti g)$ be a Weyl pre-ambient space for $(M,[g,a])$. Then, there exists a dilatation-invariant open set $\tilde M'\subset {\cal P}_W\times \mathbb{R}$ containing ${\cal P}_W\times \{0\}$ on which there is a unique diffeomorphism $\phi:\tilde M'\to\ti M$ commuting with dilatations with $\phi |_{{\cal P}_W\times \{0\}}$ being the identity map, such that the Weyl pre-ambient space $(\tilde M', \phi^{*}\ti g)$ is in Weyl-normal form with acceleration $\un{\cal A}'$.
\end{theorem}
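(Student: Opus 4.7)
The plan is to generalise the proof of Proposition 2.8 of \cite{Fefferman:2007rka}, replacing its geodesic emanation with an accelerated-curve emanation that accommodates the Weyl connection. I would construct $\phi$ as the map sending $(p,\rho)\in{\cal P}_W\times\bb R$ to the point at parameter $\rho$ along a canonically chosen curve in $\ti M$ starting from $\iota(p)$. The first step is to fix the initial tangent vector: condition (5.3) together with $\phi|_{{\cal P}_W\times\{0\}}=\text{id}$ forces $\un{\cal U}_{0}(p):=\phi_{*}(\un\p_{\rho}|_{\rho=0})\in T_{\iota(p)}\ti M$ to satisfy
\begin{equation}
\ti g(\un{\cal U}_{0},\un{\cal U}_{0})=0\,,\qquad \ti g(\un{\cal U}_{0},\un{\cal T})=t^{2}\,,\qquad \ti g(\un{\cal U}_{0},\un D_{i})=0\,,
\end{equation}
where $\un D_{i}=\un\p_{i}-a_{i}(x)\un{\cal T}$ are the horizontal lifts determined by the Weyl connection. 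Since $\un{\cal T}$ spans the radical of $\mathbf g_{0}=\iota^{*}\ti g$ and $\ti g$ is non-degenerate on $\ti M$, these pairings together with the null condition and transversality to $\iota({\cal P}_W)$ pin $\un{\cal U}_0$ down uniquely; the homogeneity in $t$ of the defining equations makes the resulting vector field smooth and dilatation-equivariant.

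Next I would emanate curves from $\iota({\cal P}_W)$ by solving a second-order ODE on $\ti M$, chosen so that condition (5.2) of Definition \ref{Weyl_normal_form} is preserved along the flow and not merely at $\rho=0$. For each $p$, the unique maximal solution $\gamma_p:I_p\to\ti M$ with $\gamma_p(0)=\iota(p)$ and $\dot\gamma_p(0)=\un{\cal U}_0(p)$ lets me set $\phi(p,\rho):=\gamma_p(\rho)$. Smooth dependence of ODE solutions on initial data gives a dilatation-invariant open neighbourhood $\ti M'$ of ${\cal P}_W\times\{0\}$, and the inverse function theorem (using transversality of $\un{\cal U}_0$ to $\iota({\cal P}_W)$) makes $\phi$ a local diffeomorphism there. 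Dilatation-equivariance of $\phi$ follows from that of the initial data and of the ODE, while uniqueness of $\phi$ follows from uniqueness of $\un{\cal U}_0$ together with uniqueness of ODE solutions. Condition (5.3) then holds for $\phi^{*}\ti g$ by construction of $\un{\cal U}_0$, condition (5.2) by construction of the ODE, and $\un{\cal A}'$ is read off from the resulting acceleration of the $\rho$-curves.

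The main obstacle is pinning down the correct ODE---that is, the prescription for $\un{\cal A}'$---so that the constraint $\ti g(\un{\cal T},\un{\cal A}')=0$ propagates automatically rather than only at the initial surface. The key tool is the homogeneity property $\mathcal{L}_{\un{\cal T}}\ti g=2\ti g$ of Definition \ref{Ambient}(4.2), equivalently $\ti\nabla_{(M}\un{\cal T}_{N)}=\ti g_{MN}$, which yields $\frac{\td}{\td\rho}\ti g(\un{\cal T},\dot\gamma)=\ti g(\un{\cal T},\ti\nabla_{\dot\gamma}\dot\gamma)+\ti g(\dot\gamma,\dot\gamma)$ along any curve. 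Restricting to null curves (consistent with $\ti g(\un{\cal U}_0,\un{\cal U}_0)=0$) and to accelerations preserving null-ness, the constraint $\ti g(\un{\cal T},\un{\cal A}')=0$ is equivalent to $\ti g(\un{\cal T},\dot\gamma)$ being a first integral, so the initial value $t^2$ propagates automatically along the flow. In the degenerate case $a_i\equiv 0$, $\un{\cal A}'\equiv 0$, this reduces to the null-geodesic construction of Proposition 2.8 in \cite{Fefferman:2007rka}.
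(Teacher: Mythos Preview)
Your approach is essentially the same as the paper's: both fix the initial tangent via the $(g,a)$-transversal conditions you wrote, emanate curves from $\iota({\cal P}_W)$ by a second-order ODE, and invoke smooth dependence together with the inverse function theorem to obtain the diffeomorphism, with uniqueness coming from uniqueness of the transversal vector and of ODE solutions.

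The one place you diverge is your third paragraph. You treat ``pinning down the correct ODE'' as the main obstacle and try to \emph{derive} the acceleration from a propagation argument. In the paper's framework the acceleration $\un{\cal A}$ is \emph{input data}: the equation of motion is simply $\tilde\nabla_{\un{\cal U}}\un{\cal U}=\un{\cal A}$ with $\un{\cal A}$ an arbitrarily prescribed vector field on $\tilde M$ satisfying $\tilde g(\un{\cal T},\un{\cal A})=0$ (and, after a reparametrisation, $\tilde g(\un{\cal U},\un{\cal A})=0$). This freedom is precisely what encodes the higher modes $a_i^{(k\geqslant 1)}(x)$ in the expansion \eqref{eq:aexpan}, as spelled out around Lemma~\ref{Lemma}. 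So there is no propagation problem to solve---the constraint holds everywhere by fiat---and your homogeneity identity, while correct, is not needed here. In particular, the Weyl connection $a_i(x)$ enters only through the initial transversal vector, not through the ODE itself; even with $a_i\not\equiv 0$ one may take $\un{\cal A}=0$ and run a geodesic emanation, so your closing remark conflates two independent pieces of data.
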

This theorem indicates that given a representative pair $(g,a)$, any Weyl pre-ambient space can be put into Weyl-normal form by a diffeomorphism $\phi$. $(\tilde M, \ti g)$ and $(\tilde M', \phi^{*}\ti g)$ are also said to be ambient-equivalent (see Definition 2.2 in \cite{Fefferman:2007rka} for the precise definition of ambient equivalence).
For a proof of the theorem, see Appendix \ref{Appthm5pt1}.

\par
Before we move on to the main result of this section, namely Theorem \ref{Weyl_ambient_existence}, let us introduce some useful notation. Given a local coordinate system $\{x^i\}$ ($i=1,\cdots,d$) on $M$, the fibre coordinate $t$ of ${\cal P}_W$ and the parameter $\rho$ naturally defines an \emph{ambient coordinate system} $\{t,x^i,\rho\}$ on $\tilde M$. Later on, we will follow \cite{Fefferman:2007rka} and use $I,J,\cdots=(0,i,\infty)$ to label the ambient coordinate indices, where $0$ labels the $t$-component and $\infty$ labels the $\rho$-component. It is also convenient to interpret the notations $(0,i,\infty)$ as representing the components in a trivialization ${\cal P}_W\times \bb R\simeq \bb R_+\times M\times\bb R$, even without specifying a choice of coordinates on $M$.

We will now present Theorem \ref{Weyl_ambient_existence}, which is a natural generalization of Theorem 2.9 of \cite{Fefferman:2007rka}, based on our definition of Weyl-normal form. As a corollary of this theorem, we will show that for a Weyl-ambient space in Weyl-normal form, the Weyl-ambient metric \eqref{Weyl_ambient} emerges from the initial surface uniquely under the Ricci-flatness condition. We emphasize again that we consider the dimension $d$ of the manifold $M$ formally as a complex parameter, and do not need to distinguish between even and odd dimensions. This is in alignment with previous considerations in \cite{Ciambelli:2019bzz,Jia:2021hgy}. 

\begin{theorem}\label{Weyl_ambient_existence}
Let $(M,[g,a])$ be a Weyl manifold, and let $(g,a)$ be a representative in the Weyl class.
 \begin{enumerate}[label=$\rm (\Alph*)$]
 \item There exists a Weyl-ambient space $(\ti M,\ti g)$ for $(M,[g,a])$ which is in Weyl-normal form with acceleration $\un{\cal A}$.
 \item Suppose that $(\ti M_{1},\ti g_{1})$ and $(\ti M_{2},\ti g_{2})$ are two Weyl-ambient spaces for $(M,[g,a])$, both of which are in Weyl-normal form with acceleration $\un{\cal A}$. Then $\ti g_{1}- \ti g_{2}$ vanishes to infinite order at every point of ${\cal P}_W\times \{0\}$.
 \end{enumerate}
 \end{theorem}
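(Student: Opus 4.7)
The plan is to adapt the Fefferman--Graham argument to the Weyl setting, exploiting the fact that Weyl-normal form is a complete gauge fixing along the fibres of ${\cal P}_W\times\bb R\to {\cal P}_W$. Conditions (4.1)--(4.2) together with (5.1)--(5.3) and the homogeneity $\delta_s^*\tilde g=s^2\tilde g$ force $\tilde g$ to take the form
\[
\tilde g = 2\rho\, \td t^2 + 2t^2\, \td\rho\bigl(t^{-1}\td t + a_i(x,\rho)\td x^i\bigr) + t^2 g_{ij}(x,\rho)\, \td x^i\td x^j,
\]
with prescribed initial data $g_{ij}(x,0)$ and $a_i(x,0)$ given by the chosen representative of $[g,a]$ and with $\p_\rho a_i|_{\rho=0}$ fixed by the acceleration $\un{\cal A}$ via Lemma~\ref{Lemma}. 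The only unknowns are therefore the higher Taylor coefficients $\gamma^{(n\geqslant 1)}_{ij}(x)$ of $g_{ij}$ and $a^{(n\geqslant 2)}_i(x)$ of $a_i$ in a $\rho$-expansion around $\rho=0$.

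For part (A), I would compute the components of $\tilde{Ric}(\tilde g)$ in the ambient coordinates and expand around $\rho=0$. The evolution equation $\tilde R_{ij}=0$ at order $\rho^{n-1}$ should take the schematic form $F_n(d)\gamma^{(n)}_{ij}=\Phi^{(n)}_{ij}[\gamma^{(k)},a^{(k)};\,k<n]$ for a polynomial $F_n(d)$; treating $d$ as a complex parameter ensures $F_n(d)\neq 0$ generically, so the recursion determines each $\gamma^{(n)}_{ij}$ uniquely in terms of previous coefficients and the acceleration. The remaining constraint components $\tilde R_{0I}=\tilde R_{\infty I}=0$ should reduce, via homogeneity and the contracted Bianchi identity applied to the partial sums, to identities that propagate along the $\rho$-flow once the evolution equations hold. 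This is the Weyl-covariant analogue of the Fefferman--Graham propagation argument, with the Weyl connection entering through the Weyl-covariant derivatives $\hat\nabla$ of \eqref{eq:hatnabla}. Once the formal Taylor series in $\rho$ has been constructed to all orders, Borel's lemma produces an actual smooth metric on an open neighborhood of ${\cal P}_W\times\{0\}$ in ${\cal P}_W\times\bb R$ whose Taylor expansion at $\rho=0$ matches the formal solution, so that $\tilde{Ric}(\tilde g)$ vanishes to infinite order there.

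For part (B), suppose $(\tilde M_1,\tilde g_1)$ and $(\tilde M_2,\tilde g_2)$ are both Weyl-ambient spaces for $(M,[g,a])$ in Weyl-normal form with acceleration $\un{\cal A}$. Their zeroth-order Taylor coefficients at $\rho=0$ coincide by (5.3), and the first-order coefficients of $a_i$ coincide by Lemma~\ref{Lemma}. I would then argue by strong induction: assuming $\tilde g_1-\tilde g_2$ vanishes to order $\rho^{n-1}$ at $\rho=0$ and applying the recursion from (A) to each metric, the $n$-th order Taylor coefficient of the difference satisfies a linear homogeneous equation with the same nonzero coefficient $F_n(d)$, hence vanishes as well. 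The main obstacle will be verifying that the recursion is genuinely non-degenerate and that the constraint equations really do propagate: in the conformal case $F_n(d)$ vanishes at $d=2n$ and gives rise to the obstruction tensors, and Proposition~\ref{prop:obspole} shows that the analogous phenomenon persists here with the extended Weyl-obstruction tensors, which is why treating $d$ as a complex parameter is essential. The Bianchi propagation step is also more delicate than in \cite{Fefferman:2007rka} because the Weyl connection $a_i(x,\rho)$ mixes the $t$- and $x$-directions in the off-diagonal block of $\tilde g$, so compatibility between the evolution and constraint equations must be tracked using $\hat\nabla$ rather than the naive coordinate derivatives.
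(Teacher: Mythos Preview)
Your opening claim---that conditions (4.1)--(4.2) together with (5.1)--(5.3) and homogeneity force $\tilde g$ into the form \eqref{Weyl_ambient} with $\tilde g_{00}=2\rho$ and $\tilde g_{0i}=0$---is false, and this gap undermines both parts of your argument. Weyl-normal form, via Lemma~\ref{Lemma}, fixes only the $\tilde g_{I\infty}$ components: $\tilde g_{0\infty}=t$, $\tilde g_{i\infty}=t^2 a_i(x,\rho)$, $\tilde g_{\infty\infty}=0$. Homogeneity and condition (5.3) then tell you that $\tilde g_{00}=c(x,\rho)$ and $\tilde g_{0i}=t\,b_i(x,\rho)$ with $c(x,0)=b_i(x,0)=0$, but nothing more. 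The functions $c$ and $b_i$ are genuine unknowns to be determined by Ricci-flatness; that they come out to $c=2\rho$ and $b_i=0$ to all orders is precisely the content of the subsequent Theorem~\ref{thm:Weyl_ambient}, proved as a corollary of Theorem~\ref{Weyl_ambient_existence}, not an input to it. Consequently your recursion for (A) is incomplete: the equations $\tilde R_{00}=0$ and $\tilde R_{0i}=0$ are not constraints to be propagated but evolution equations determining the $\rho^m$-coefficients of $c$ and $b_i$ at each step (see \eqref{Ricci_inductive}), and only $\tilde R_{I\infty}=0$ is handled by the Bianchi argument. For (B), your induction only establishes uniqueness among metrics already in the restricted form \eqref{Weyl_ambient}, not among all Weyl-normal-form metrics, so it does not prove the theorem as stated.

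A second, smaller error: Lemma~\ref{Lemma} gives $a_i(x,\rho)=a_i(x)+t^{-2}\int_0^\rho{\cal A}_i\,\td\rho'$, so the acceleration $\un{\cal A}$ determines the \emph{entire} function $a_i(x,\rho)$, not just $\p_\rho a_i|_{\rho=0}$. All of the $a^{(n\geqslant 1)}_i$ are therefore prescribed input data, not unknowns to be solved for, and your recursion should not include them.
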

\par
The proof of Theorem \ref{Weyl_ambient_existence} employs the following lemma. 
\begin{lemma}\label{Lemma}
Let $({\ti M},\ti g)$ be a Weyl pre-ambient space for $(M,[g,a])$. Suppose for each $p\in {\cal P}_W$, the set of all $\rho\in \mathbb{R}$ such that $(p,\rho)\in \ti  M$ is an open interval $I_{p}$ containing $0$. Let $g$ be a metric in the representative $(g,a)$ of the Weyl class, which provides a local trivialization ${\cal P}_W\times \mathbb{R}\simeq \mathbb{R}_{+}\times M\times \mathbb{R}$. Then $(\ti M,\ti g)$ is in Weyl-normal form with acceleration $\un{\cal A}$ if and only if one has on $\ti M$:
\begin{equation}\label{lemma_conditions}
\ti g_{0\infty}=t\,,\qquad \ti g_{i\infty}=t^{2} a_{i}(x,\rho)\,,\qquad \ti g_{\infty\infty}=0\,,
\end{equation}
where $a_{i}(x,\rho)\equiv a_{i}(x)+t^{-2} \int_{0}^{\rho}{\cal A}_{i}(t,x,\rho')\td\rho' $.
\end{lemma}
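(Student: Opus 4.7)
The plan is to reduce both directions to a single identity. Since the curve $C_p$ has tangent $\un{\cal U} = \un{\pa}_\rho$ (so $\cal U^I = \delta^I{}_\infty$), the acceleration covector along $\rho$-lines is
\begin{equation*}
{\cal A}_P \;=\; \tilde g_{PM}\,\tilde\Gamma^M{}_{\infty\infty} \;=\; \pa_\rho \tilde g_{P\infty} - \tfrac{1}{2}\pa_P \tilde g_{\infty\infty}\,.
\end{equation*}
This formula is the only bridge needed between the geometric conditions of Definition~\ref{Weyl_normal_form} and the component equations~\eqref{lemma_conditions}.

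For the ``only if'' direction, assume $(\tilde M,\tilde g)$ is in Weyl-normal form with acceleration $\un{\cal A}$. As discussed after Definition~\ref{Weyl_normal_form}, condition~(5.2) together with the reparametrization that kills the $\rho$-component of $\un{\cal A}$ forces ${\cal A}_0 = {\cal A}_\infty = 0$. First I would read off ${\cal A}_\infty = \tfrac{1}{2}\pa_\rho \tilde g_{\infty\infty} = 0$, so $\tilde g_{\infty\infty}$ is $\rho$-independent; condition~(5.3) fixes its value at $\rho=0$ to zero, giving $\tilde g_{\infty\infty}\equiv 0$. Using this, ${\cal A}_0 = \pa_\rho \tilde g_{0\infty} = 0$ together with $\tilde g_{0\infty}|_{\rho=0} = t$ from~(5.3) yields $\tilde g_{0\infty}\equiv t$. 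Finally $\pa_\rho \tilde g_{i\infty} = {\cal A}_i$ integrates with initial datum $\tilde g_{i\infty}|_{\rho=0} = t^2 a_i(x)$ from~(5.3) to produce exactly $\tilde g_{i\infty} = t^2\, a_i(x,\rho)$ with $a_i(x,\rho)$ in the stated integral form.

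For the ``if'' direction, assume~\eqref{lemma_conditions}. Condition~(5.3) follows by evaluating the metric at $\rho = 0$: the $\infty$-components match by direct substitution, while the remaining $00,\,0i,\,ij$ entries are supplied automatically by the pre-ambient hypothesis $\iota^*\tilde g = \mathbf{g}_0$ from Definition~\ref{Ambient}(4.1). Condition~(5.2) then drops out of the acceleration formula: $\tilde g_{\infty\infty}=0$ gives ${\cal A}_\infty = 0$; $\pa_\rho \tilde g_{0\infty}=0$ gives ${\cal A}_0 = 0$, hence $\tilde g(\un{\cal T},\un{\cal A}) = t\,{\cal A}_0 = 0$; and $\pa_\rho(t^2 a_i(x,\rho)) = {\cal A}_i(t,x,\rho)$ by the very definition of $a_i(x,\rho)$ in the statement. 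The only (mild) technical point is bookkeeping --- matching the covector $\un{\cal A}$ appearing in Definition~\ref{Weyl_normal_form} with the ${\cal A}_i$ inside the integral representation of $a_i(x,\rho)$ --- but once the acceleration identity above is in hand, both directions collapse to elementary first-order ODEs in $\rho$ whose initial data are fixed by condition~(5.3).
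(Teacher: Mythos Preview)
Your proposal is correct and follows essentially the same approach as the paper: both reduce the equivalence to the lowered Christoffel identity $\tilde\Gamma_{\infty\infty P} = {\cal A}_P$ (your formula ${\cal A}_P = \pa_\rho \tilde g_{P\infty} - \tfrac{1}{2}\pa_P \tilde g_{\infty\infty}$ is exactly this), use ${\cal A}_0 = {\cal A}_\infty = 0$ from condition~(5.2) and the choice of parametrization, and then integrate the resulting first-order ODEs in $\rho$ against the initial data supplied by~(5.3). Your treatment of the ``if'' direction is slightly more explicit than the paper's (which remarks that the computation is reversible), but the substance is identical.
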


\begin{proof}
Suppose $\tilde g$ satisfies \eqref{lemma_conditions}, then it follows from the condition $\iota^{*}\ti g= \mathbf{g}_{0}$ for the pre-ambient space that $\ti g|_{\rho=0}$ must have the form \eqref{form_1}. Thus, all we have to prove is that for $\ti g$ satisfying \eqref{form_1} at $\rho=0$, the condition that the $\rho$-coordinate lines have acceleration $\un{\cal  A}$ with $\tilde g(\un{\cal T}, \un{\cal A})=0$ is equivalent to \eqref{lemma_conditions}. The fact that the $\rho $-coordinate lines have an acceleration $\un{\cal A}$ implies
\begin{equation}\label{Gamma_condition_rho_lines}
\ti\Gamma\indices{_{\infty \infty I}}= \mathcal{A}_{I}\,,
\end{equation}
where $\ti\Gamma_{IJK}\equiv\ti g_{KL}\ti\Gamma\indices{^{L}_{IJ}}$.
The condition $\tilde g(\un{\cal T},\un{\cal A})=0$ leads to ${\cal A}_0=0$. As we have mentioned, one can also parametrize the curve $C_p:I_p\to\tilde M$ such that $\tilde g(\un{\cal U}, \un{\cal A})=0$, then we also have ${\cal A}_{\infty}=0$, and thus $\un{\cal A}_{I}=\left({\cal A}_{0}, {\cal A}_{i},{\cal A}_{\infty}\right)= \left(0,t^{2}\varphi_{i}(x,\rho),0\right)$. The functions $\varphi_{i}(x,\rho)$ are considered as external input and cannot be determined from the initial conditions. The factor $t^2$ is derived from the homogeneity property of $\ti g$ and \eqref{Gamma_condition_rho_lines}. If we set $I=\infty$ in \eqref{Gamma_condition_rho_lines} we get 
\begin{equation}
\ti\Gamma\indices{_{\infty\infty \infty}}= {\cal A}_{\infty}= 0\,\implies\,\pa_{\rho}g_{\infty\infty}=0\,\implies\, g_{\infty\infty}=0\,,
\end{equation}
where in the last step we used the initial condition $g_{\infty\infty}|_{\rho=0}=0$. Similarly, setting $I=0$ in \eqref{Gamma_condition_rho_lines} we find
\begin{equation}
\pa_{\infty }g_{\infty 0}=0\, \implies\, g_{\infty 0}= t\,,
\end{equation}
where we used the initial condition $g_{0\infty}|_{\rho=0}=t$. Finally, setting $I=i$ yields
\begin{equation}
\pa_{\rho}g_{\infty i}={\cal A}_{i}(t,\rho;x)\implies g_{\infty i}= t^{2}a_{i}(x)+ t^{2}\int_{0}^{\rho}\varphi_{i}(\rho;x)\td\rho \equiv  t^2 a_{i}(\rho;x)\,,
\end{equation}
where we used the initial condition $\ti g_{\infty i}|_{\rho=0}= t^2 a_{i}(x)$.
 \end{proof}
The main logic of the proof of Theorem \ref{Weyl_ambient_existence} will follow part of Section 3 in \cite{Fefferman:2007rka}. To show part (A) of Theorem \ref{Weyl_ambient_existence}, namely the existence of the Weyl-ambient space ${\tilde M}$ in Weyl-normal form, we need to show the following: for a Weyl manifold $(M,[g,a])$, given a representative $(g,a)$ of the Weyl class and $a_i(x,\rho)$ determined by $\un{\cal A}$, there exists a metric $\ti g$ on an open neighborhood $\ti M$ of ${\cal P}_W\times\{0\}$ with the following properties:
 \begin{enumerate}[label={(\arabic*)}]
 \item $\delta^{*}_{s}\ti g= s^2 \ti g$, $\forall s>0$ (homogeneity property);
 \item $\ti g = t^2 g(x)+ 2t^2(t^{-1}\td t+ a_{i}(x)\td x^{i})\td\rho$ when $\rho=0$;
 \item $\ti g_{0\infty}=t,\quad \ti g_{i\infty}=t^{2} a_{i}(x,\rho),\quad \ti g_{\infty\infty}=0$;
 \item $\ti Ric(\ti g)=0$ to infinite order at $\rho=0$.
 \end{enumerate}
 The first property above is the homogeneity property which is still taken to be true for the Weyl-ambient metric. Property (3) is equivalent to condition (5.2) of  Definition \ref{Weyl_normal_form} due to Lemma \ref{Lemma}, which indicates that $\ti g_{I\infty}$ components are known, while the rest are now regarded as unknown functions. Property (2) can be considered as the initial data of these components at the initial surface at $\rho=0$, while the Ricci-flatness property (4) is a system of partial differential equations that one can solve to find the metric components beyond the initial surface. We will show that this is a well defined initial value problem so that the unknown components of the Weyl-ambient metric can be uniquely determined in a series expansion in $\rho$, which will prove part (B) of Theorem \ref{Weyl_ambient_existence}. The complete proof can be found in Appendix \ref{Appthm5pt1}.

As an important corollary, we now show in Theorem \ref{thm:Weyl_ambient} that the metric $\tilde g$ determined from Theorem \ref{Weyl_ambient_existence} has exactly the form of the Weyl-ambient metric \eqref{Weyl_ambient}. First we need the following lemma.
\begin{lemma}\label{Lemma_Ricci_Weyl_ambient}
Suppose a metric $\ti g$ has the following form:
\begin{equation}\label{WFG_ambient}
 \ti g_{IJ}=\left(\begin{array}{ccc}
2\rho & 0 &  t\\
0 & t^2 g_{ij}(x,\rho)& t^2 a_{j}(x,\rho)\\
t&t^2 a_{i}(x,\rho)&0
\end{array}\right)\,.
\end{equation}
Then the Ricci curvature of $\ti g$ satisfies $\ti R_{0I}=0$.
\end{lemma}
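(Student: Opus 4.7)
The strategy is to recognize that the Euler vector $\un{\cal T}=t\un\p_t$ is a \emph{closed} homothetic Killing vector of $\ti g$, from which $\ti R_{0I}=0$ follows immediately via the Ricci identity. First I would observe that every component of $\ti g$ in \eqref{WFG_ambient} scales with exactly the right power of $t$ to make $\ti g$ homogeneous of degree $2$: $\delta_s^*\ti g=s^2\ti g$, or equivalently, $\ti\nabla_{(I}{\cal T}_{J)}=\ti g_{IJ}$. Second, lowering the index on $\un{\cal T}$ using \eqref{WFG_ambient} gives ${\cal T}_I=\ti g_{IJ}{\cal T}^J=t\,\ti g_{I0}$, whose three components read $({\cal T}_0,{\cal T}_i,{\cal T}_\infty)=(2\rho t,0,t^2)$. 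These are precisely the partial derivatives of the function $\rho t^2$, so ${\cal T}=\td(\rho t^2)$ is an exact (and hence closed) $1$-form. Because the Levi-Civita connection is torsion-free, this immediately yields $\ti\nabla_{[I}{\cal T}_{J]}=\p_{[I}{\cal T}_{J]}=0$.

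Combining the symmetric and antisymmetric parts gives $\ti\nabla_I{\cal T}_J=\ti g_{IJ}$ on the nose. Applying a second covariant derivative and using metricity,
\[
\ti\nabla_K\ti\nabla_I{\cal T}_J=\ti\nabla_K\ti g_{IJ}=0,
\]
so the commutator $[\ti\nabla_K,\ti\nabla_I]{\cal T}_J$ vanishes identically. By the Ricci identity this same commutator equals (up to a sign fixed by conventions) $\ti R_{KIJ}{}^{L}{\cal T}_L=\ti R_{KIJL}{\cal T}^L=t\,\ti R_{KIJ0}$, so $\ti R_{KIJ0}=0$ for every choice of $K,I,J$. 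Contracting $K$ with $J$ against $\ti g^{KJ}$ then yields $\ti R_{I0}=0$, which is the desired statement $\ti R_{0I}=0$ by the symmetry of the Ricci tensor.

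The two inputs that must be checked carefully are: (i) the homogeneity of $\ti g$, which is immediate entry-by-entry from \eqref{WFG_ambient} once one notes that $g_{ij}(x,\rho)$ and $a_i(x,\rho)$ are independent of $t$; and (ii) the exactness of ${\cal T}$, which relies crucially on the three specific values $\ti g_{00}=2\rho$, $\ti g_{0i}=0$, and $\ti g_{0\infty}=t$ matching $\p_I(\rho t^2)$ exactly---the remaining components $\ti g_{ij}$, $\ti g_{i\infty}$, $\ti g_{\infty\infty}$ (and in particular the Weyl connection $a_i$) play no role here. No Christoffel symbols or inverse-metric components need be computed; the $\ti R_{0I}$ block of the Ricci tensor collapses purely kinematically once the closed-homothety property of $\un{\cal T}$ is identified. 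The main obstacle is bookkeeping---keeping the sign convention for $\ti R_{IJK}{}^L$ consistent when applying the Ricci identity---rather than any genuine computational difficulty.
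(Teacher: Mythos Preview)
Your proof is correct and takes a genuinely different route from the paper. The paper proves Lemma~\ref{Lemma_Ricci_Weyl_ambient} by brute force: it writes out the inverse metric \eqref{inverse_Weyl_ambient} and all Christoffel symbols \eqref{Christoffel_Weyl_ambient}, substitutes into the formula \eqref{Ricci} for the Ricci tensor, and verifies $\ti R_{0I}=0$ componentwise. Your argument is structural: you identify $\un{\cal T}=t\un\p_t$ as a \emph{closed} homothety (${\cal T}=\td(\rho t^2)$ together with ${\cal L}_{\un{\cal T}}\ti g=2\ti g$), deduce $\ti\nabla_I{\cal T}_J=\ti g_{IJ}$ exactly, and then the Ricci identity forces $\ti R_{KIJ0}=0$ without ever touching a Christoffel symbol. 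This is cleaner and makes transparent \emph{which} entries of \eqref{WFG_ambient} are actually doing the work---only the first column $\ti g_{0I}=(2\rho,0,t)$ enters, while $g_{ij}$ and $a_i$ are irrelevant, a point the direct computation obscures. It is worth noting that the paper does invoke essentially the same observation $\ti\nabla_J{\cal T}^I=\delta^I{}_J$ later, in the proof of Lemma~\ref{lem:lemmaR0}, so your approach is very much in the spirit of the paper; it simply brings that idea forward to this lemma rather than reserving it for the derivative identity. The trade-off is that the paper's explicit Christoffel symbols \eqref{Christoffel_Weyl_ambient} are reused elsewhere (e.g.\ in the proof of Theorem~\ref{thm:Weyl_ambient}), so the computation is not wasted, whereas your argument is self-contained but produces no by-products.
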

\begin{proof}
For $\ti g$ of the form \eqref{WFG_ambient}, we can write the inverse metric as
\begin{equation}\label{inverse_Weyl_ambient}
\begin{split}
\ti g^{IJ}&= \frac{1}{1+ 2\rho a^2}
\left(\begin{array}{ccc}
a^2 & -t^{-1}a^{j} &t^{-1}\\
-t^{-1}a^{i} &t^{-2}(1+ 2\rho a^2 )g^{ij}- 2t^{-2}\rho a^{i}a^{j}&2t^{-2}\rho a^{i}\\
t^{-1}&2t^{-2}\rho a^{j}& -t^{-2}2\rho
\end{array}\right)\,,
\end{split}
\end{equation}
and the Christoffel symbols $\ti\Gamma_{IJK}=\ti g_{KL}\ti\Gamma\indices{^{L}_{IJ}}$ are given by
\begin{equation}\label{Christoffel_Weyl_ambient}
\begin{split}
\ti \Gamma_{IJ0}&=
\left(\begin{array}{ccc}
0& 0 &1\\
0 &-tg_{ij}&-ta_{i}\\
1&-ta_{j}& 0
\end{array}\right)\,,\quad
\ti \Gamma_{IJ\infty}=
\left(\begin{array}{ccc}
0& t a_{j} &0\\
 t a_{i} &-t^{2}\left(\frac{1}{2}\p_\rho g_{ij} - \pa_{(i}a_{j)}\right)&0\\
0&0& 0
\end{array}\right)\,,\\
\ti \Gamma_{IJk}&=
\left(\begin{array}{ccc}
0& t g_{jk} &t a_{k}\\
t g_{ik} &t^{2}\Gamma_{ijk}&\frac{t^2}{2}\left(\p_\rho g_{ik} + F_{ik}\right)\\
 t a_{k}&\frac{t^2}{2}\left(\p_\rho g_{jk} + F_{jk}\right)& t^{2}\pa_{\rho}a_{k}
\end{array}\right)\,,
\end{split}
\end{equation}
where $\Gamma_{ijk}=g_{kl}\Gamma\indices{^{l}_{ij}}$ are the Christoffel symbols of $g_{ij}(x,\rho)$, and $F_{jk}= \pa_{j}a_{k}- \pa_{k}a_{j}$. Substituting \eqref{inverse_Weyl_ambient} and  \eqref{Christoffel_Weyl_ambient} into \eqref{Ricci} we can compute $\ti R_{0I}$ explicitly and find that $\ti R_{0I}=0$.
\end{proof}

\begin{theorem}
\label{thm:Weyl_ambient}
Suppose $(M,[g,a])$ is a Weyl manifold. Let $(\ti M,\ti g)$ be the unique ambient space for $(M,[g,a])$ which is in Weyl-normal form with acceleration $\un{\cal A}$. Then, for any representative $(g,a)$, the uniquely determined metric $\tilde g$ has the following form 
\begin{equation}
\ti g = 2\rho \td t^2 + 2td\rho\left(\frac{\td t}{t} +  a_{i}(x,\rho)\td x^{i}\right) + t^2 g_{ij}(x,\rho)\td x^{i}\td x^{j}\,,
\end{equation}
where $a_{i}(x,\rho)\equiv a_{i}(x)+t^{-2} \int_{0}^{\rho}{\cal A}_{i}(t,x,\rho')$. This metric is exactly the Weyl-ambient metric introduced in \eqref{Weyl_ambient}. 
\end{theorem}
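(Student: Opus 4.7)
The plan is to exhibit a candidate metric $\tilde g_0$ of precisely the asserted form \eqref{Weyl_ambient} and then show it coincides (as a formal series in $\rho$) with the unique $\tilde g$ by invoking the uniqueness already established in Theorem \ref{Weyl_ambient_existence}. The two key observations are that the shape \eqref{WFG_ambient} is forced on the $I\infty$ components by the acceleration data via Lemma \ref{Lemma}, and that Lemma \ref{Lemma_Ricci_Weyl_ambient} guarantees metrics of this shape automatically satisfy the ``$0I$'' components of the Ricci-flatness condition.

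First, I would define $\tilde g_0$ on a dilatation-invariant neighborhood of ${\cal P}_W\times\{0\}$ by writing out its components in the ambient coordinates $\{t,x^i,\rho\}$ exactly as in \eqref{Weyl_ambient}, with $a_i(x,\rho)\equiv a_i(x)+t^{-2}\int_0^\rho {\cal A}_i(t,x,\rho')\,\td\rho'$ (this expression is well-defined and $t$-independent because homogeneity together with $\tilde g(\un{\cal T},\un{\cal A})=0$ forces ${\cal A}_i=t^2\varphi_i(x,\rho)$), and with $g_{ij}(x,\rho)$ a yet-to-be-chosen smooth symmetric tensor normalized by $g_{ij}(x,0)=g_{ij}(x)$. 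Direct inspection of \eqref{Weyl_ambient} then yields: homogeneity $\delta_s^*\tilde g_0=s^2\tilde g_0$ (the explicit $t$-factors compensate the scalings of $\td t$, $\td x^i$, $\td\rho$); the initial-surface condition (5.3), $\tilde g_0|_{\rho=0}=\mathbf{g}_0+2t^2(t^{-1}\td t+a_i(x)\td x^i)\,\td\rho$; and the hypotheses of Lemma \ref{Lemma}, namely $\tilde g_{0\infty}=t$, $\tilde g_{i\infty}=t^2 a_i(x,\rho)$, $\tilde g_{\infty\infty}=0$. The ``if'' direction of Lemma \ref{Lemma} then delivers condition (5.2), so $\tilde g_0$ is in Weyl-normal form with exactly the prescribed acceleration $\un{\cal A}$.

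Next, because $\tilde g_0$ is of the shape \eqref{WFG_ambient}, Lemma \ref{Lemma_Ricci_Weyl_ambient} gives $\tilde R_{0I}(\tilde g_0)=0$ identically, imposing no constraint on the undetermined $g_{ij}(x,\rho)$. The remaining Ricci-flatness equations $\tilde R_{ij}=\tilde R_{i\infty}=\tilde R_{\infty\infty}=0$ then become a recursive system on the Taylor coefficients $g^{(k)}_{ij}(x)$ of $g_{ij}(x,\rho)$ around $\rho=0$, whose iterative solvability to all orders is precisely the content of the existence statement (A) of Theorem \ref{Weyl_ambient_existence}. Choosing $g_{ij}(x,\rho)$ as dictated by this recursion therefore produces a bona fide Weyl-ambient space $(\tilde M,\tilde g_0)$ in Weyl-normal form with acceleration $\un{\cal A}$.

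Finally, the uniqueness statement (B) of Theorem \ref{Weyl_ambient_existence} forces $\tilde g-\tilde g_0$ to vanish to infinite order on ${\cal P}_W\times\{0\}$, which is exactly the sense in which the Weyl-ambient metric is determined, so $\tilde g$ has the form \eqref{Weyl_ambient} as claimed. The principal obstacle is thus outsourced to the existence step, i.e.\ the iterative solvability of the $\tilde R_{ij}$ system (handled in Appendix \ref{Appthm5pt1}); with that machinery in hand, the present theorem reduces to exhibiting the correct candidate metric and checking its compatibility with the two earlier lemmas.
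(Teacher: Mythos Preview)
Your argument is correct but proceeds along a different line than the paper's proof. The paper works \emph{forward} from the unique $\tilde g$ produced by Theorem~\ref{Weyl_ambient_existence}: it runs a short induction on the components $\tilde g_{00}$ and $\tilde g_{0i}$, using Lemma~\ref{Lemma_Ricci_Weyl_ambient} to see that $\tilde R^{[m-1]}_{00}=\tilde R^{[m-1]}_{0i}=0$ under the inductive hypothesis, and then reads off $\phi_{00}=\phi_{0i}=0$ directly from the recursion \eqref{Ricci_inductive}. You instead work \emph{backward}: you posit a candidate $\tilde g_0$ already in the shape \eqref{WFG_ambient}, verify via Lemma~\ref{Lemma} that it sits in Weyl-normal form with the right acceleration, kill the $0I$-Ricci equations wholesale with Lemma~\ref{Lemma_Ricci_Weyl_ambient}, solve the residual $ij$-system for $g_{ij}(x,\rho)$, and then invoke uniqueness (B) to identify $\tilde g_0$ with $\tilde g$. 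Both routes hinge on the same two lemmas; the paper's is shorter because it never has to re-establish solvability within the restricted ansatz, whereas your route is conceptually cleaner (existence of a special solution plus uniqueness) but tacitly re-runs the machinery of Appendix~\ref{Appthm5pt1} in that restricted setting. One small imprecision: the solvability of the residual $\tilde R_{ij}=0$ recursion is not literally ``the content of statement (A)'' but rather of its \emph{proof} (specifically the third line of \eqref{Ricci_inductive} with $\phi_{00}=\phi_{0i}=0$, together with the Bianchi argument that disposes of the $I\infty$ equations); you acknowledge this in your closing sentence, so the logic is sound.
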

\begin{proof}
Based on Theorem \ref{Weyl_ambient_existence}, all we have to prove is that $\tilde g_{00}=2\rho$ and $\tilde g_{0i}=0$ to all orders. Let $\ti g^{(m)}$ be the $m^{th}$ order of $\ti g$, and let  $\ti g^{[k]}$ represent $\tilde g$ with all the orders higher than $O(\rho^{k})$ in the $\rho$-expansion excluded, i.e. $\tilde g=\tilde g^{[k]}+O(\rho^{k+1})$. From \eqref{metric_components_order_1} we find to the first order that $\tilde g^{[1]}_{00}=2\rho$ and $\tilde g^{[1]}_{0i}=0$. Assuming that $\tilde g^{[m-1]}_{00}=2\rho$ and $\tilde g^{[m-1]}_{0i}=0$, it follows from Lemma \ref{Lemma_Ricci_Weyl_ambient} that $\tilde R^{[m-1]}_{00}=\tilde R^{[m-1]}_{0i}=0$. Then, from \eqref{Ricci_inductive} we obtain that $\phi_{00}=\phi_{0i}=0$, and hence $\tilde g^{(m)}_{00}=\tilde g^{(m)}_{0i}=0$ [see \eqref{eq:g+phi}], $\forall m>1$. Therefore, by induction we can deduce to infinite order that $\tilde g_{00}=2\rho$ and $\tilde g_{0i}=0$, which completes the proof.
\end{proof}

\subsection{Weyl-Obstruction Tensors: Second Order Formalism}
\label{sec:WeylObs}
In Section \ref{sec:topdown} we have seen that Weyl-obstruction tensors can be defined as the derivatives of the ambient Riemann tensor in the first order formalism. In this section we will follow the setup of the present section in the second order formalism and show that appropriate ambient tensors constructed from the Weyl-ambient Riemann tensor on $\tilde M$ behave as Weyl-covariant tensors on $M$, through which Weyl-obstruction tensors can again be defined as a special case.   Then we will show that the Weyl-obstruction tensors defined in this way agree with the Weyl-obstruction tensors we defined previously in Definition \ref{def1}. 
\par

We have proven in the last subsection that for any pair of $(g,a)$ on $M$, there exists a unique Weyl-ambient space $(\tilde M,\tilde g)$ for the Weyl manifold $(M,[g,a])$ where $\tilde g$ has the form of \eqref{Weyl_ambient}. In Section \ref{sec:topdown} we saw that the ambient Weyl diffeomorphism
\be
\label{eq:ambientdiff}
(t',x'^i,\rho')=({\cal B}(x)t,x^i,{\cal B}^{-2}(x)\rho)
\ee
induces a Weyl transformation on $M$. Therefore, to find a Weyl-covariant tensor on $(M,[g,a])$, we can find an ambient tensor which is covariant under an ambient Weyl diffeomorphism, and its pullback on $M$ will be Weyl covariant.
\par
The first main result of this section is the following proposition. This  provides  the Weyl transformations of tensors constructed from covariant derivatives of the Riemann tensor of a Weyl-ambient metric, from which we can see which tensors are Weyl covariant when pulled back to $M$.

\begin{prop}\label{Proposition_6.5_Graham}
Suppose $(\tilde M,\tilde g)$ is the Weyl-ambient space for $(M,[g,a])$, and let $(g,a)$ and $(g',a')$ be two representatives of $[g,a]$, with $g'_{ij}={\cal B}^{-2}g_{ij}$ and $a'_i=a_i-\p_i\ln{\cal B}$.
Let $IJKLM_{1}\dots M_{r}$ be a list of indices, $s_{0}$ of which are $0$, $s_{M}$ of which are $x^i$ on $M$, and $s_{\infty}$ of which are $\infty$. Then, the following components of the covariant derivatives of the Riemann tensor $\tilde R_{ABCD}$ of $\ti g$ in the trivialization defined by $g$ satisfy the transformation law
\begin{equation}\label{Proposition_6.5_Graham_main}
\ti R'_{IJKL;M_{1}\cdots M_{r}}|_{\rho'=0,t'=1}={\cal B}(x)^{2(s_{\infty }-1)} \ti R_{ABCD;F_{1}\cdots F_{r}}|_{\rho=0,t=1}p\indices{^{A}_{I}}\cdots p\indices{^{F_{r}}_{M_{r}}}
\end{equation}
under an ambient Weyl diffeomorphism \eqref{eq:ambientdiff}, where $p\indices{^{A}_{I}}$ is the matrix 
\begin{equation}\label{p_matrix}
p\indices{^{I}_{J}}= \begin{blockarray}{cccc}
 & 0 & j & \infty  \\
\begin{block}{c(ccc)}
  0 & 1& \Upsilon_{j} &0\\
   i & 0  & \delta\indices{^{i}_{j}}& 0 \\
  \infty & 0&0& 1\\
\end{block}
\end{blockarray}\,\,,
\end{equation}
and $\Upsilon(x)\equiv - \ln {\cal B}(x)$, $\Upsilon_{j}\equiv \pa_{j}\Upsilon(x)$.
$\ti R'_{IJKL;M_{1}\dots M_{r}}$ denotes covariant derivatives of the Riemann tensor of $\tilde g$ in the coordinates $X'^{I}=(t', x'^{i},\rho')$ given by the trivialization provided by $g'$.
\end{prop}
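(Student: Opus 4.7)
The plan is to split the transformation rule \eqref{Proposition_6.5_Graham_main} into two independent ingredients. The first is the ordinary tensorial change of coordinates from $X^{I}=(t,x^{i},\rho)$ to $X'^{I}=(t',x'^{i},\rho')$ induced by the ambient Weyl diffeomorphism \eqref{eq:ambientdiff}. The second is a rescaling in $t$ which corrects for the fact that the point $(t'=1,x',\rho'=0)$ corresponds to $(t={\cal B}^{-1},x,\rho=0)$ in the unprimed trivialization, not to $(t=1,x,\rho=0)$; this is controlled by the homogeneity $\delta_{s}^{*}\tilde g=s^{2}\tilde g$ of the Weyl-ambient metric established in Section~\ref{sec:Formal}.

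First, by direct differentiation of $t={\cal B}(x)^{-1}t'$, $x^{i}=x'^{i}$, $\rho={\cal B}(x)^{2}\rho'$, one computes the inverse Jacobian $(J^{-1})^{A}{}_{I}=\partial X^{A}/\partial X'^{I}$ at $(t'=1,x',\rho'=0)$. Using $\Upsilon_{j}=-\partial_{j}\ln{\cal B}$, this factorizes as
\begin{equation*}
(J^{-1})^{A}{}_{I}={\cal B}(x)^{\sigma(A)}\,p^{A}{}_{I},\qquad \sigma(0)=-1,\quad\sigma(i)=0,\quad\sigma(\infty)=2,
\end{equation*}
with $p^{A}{}_{I}$ exactly the matrix \eqref{p_matrix}. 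The structural property of $p$ that will matter below is that its support---the entries $(0,0)$, $(0,j)$, $(i,j)$, $(\infty,\infty)$---satisfies $\delta^{\infty}_{A}=\delta^{\infty}_{I}$. Independently, using $\delta_{s}^{*}\tilde g=s^{2}\tilde g$ and the Christoffel formula, an induction on $r$ shows that $\tilde R_{IJKL;M_{1}\cdots M_{r}}$ is homogeneous in $t$ of degree $2-s_{0}^{\rm tot}$, where $s_{0}^{\rm tot}$ counts all $0$-indices among $(I,J,K,L,M_{1},\ldots,M_{r})$. Evaluating at $t={\cal B}^{-1},\,\rho=0$ therefore supplies an overall factor ${\cal B}^{-2+s_{0}^{\rm tot}(A,B,\ldots,F_{r})}$ relative to $t=1,\,\rho=0$.

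Assembling the two pieces, the exponent of ${\cal B}$ accumulated over the $r+4$ contracted upper indices is
\begin{equation*}
\sum_{\bullet\in\{A,B,C,D,F_{1},\ldots,F_{r}\}}\!\!\bigl[\sigma(\bullet)+\delta^{0}_{\bullet}\bigr]\,-\,2\;=\;\sum_{\bullet}2\delta^{\infty}_{\bullet}\,-\,2 ,
\end{equation*}
using $\sigma(X)+\delta^{0}_{X}=2\delta^{\infty}_{X}$ for $X\in\{0,i,\infty\}$. Each nonzero term in the full contracted sum requires every Jacobian factor to sit on the support of $p$, where $\delta^{\infty}_{A}=\delta^{\infty}_{I}$ pair by pair. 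Hence every $\delta^{\infty}$ on a summed upper index may be replaced by the $\delta^{\infty}$ on the corresponding external lower index, collapsing the exponent to the universal value $2s_{\infty}-2$. This is precisely \eqref{Proposition_6.5_Graham_main}. The main obstacle is exactly this last bookkeeping: the clean universal weight $2(s_{\infty}-1)$ emerges only because the $\infty$-block of $p^{A}{}_{I}$ decouples from the $0$- and $i$-blocks, while the homogeneity induction for nested covariant derivatives requires care with $\partial_{0}$ lowering the weight by one versus $\partial_{i},\partial_{\infty}$ preserving it.
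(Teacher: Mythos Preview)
Your proof is correct and follows essentially the same approach as the paper: factor the Jacobian of the ambient Weyl diffeomorphism through the matrix $p^{A}{}_{I}$ of \eqref{p_matrix}, and combine with the $t$-homogeneity of $\tilde R_{IJKL;M_1\cdots M_r}$ inherited from $\delta_s^*\tilde g = s^2\tilde g$. The only cosmetic difference is the order of operations. The paper first rescales in the primed coordinates from $t'=1$ to $t'={\cal B}(x)$, then applies the Jacobian at $t'={\cal B}$, $\rho'=0$; at that point the Jacobian factorizes as $d_1 p d_2$ with $d_1$ the identity, so the diagonal ${\cal B}$-factors attach directly to the external lower indices and the exponent $2s_\infty - s_0$ (together with ${\cal B}^{s_0-2}$ from homogeneity) drops out immediately. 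You instead apply the Jacobian at $t'=1$ and rescale in the unprimed coordinates, which makes the ${\cal B}$-weights attach to the summed upper indices; you then need the additional observation that $p^{A}{}_{I}$ is block-diagonal in the $\infty$-slot to transfer $\delta^\infty_A\to\delta^\infty_I$. Both routes are valid and equivalent; the paper's ordering avoids that last combinatorial step.
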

\begin{proof}
The logic for the proof of this Proposition follows the proof of Proposition 6.5 in \cite{Fefferman:2007rka} closely. We start by observing that the ambient Weyl diffeomorphism $\psi:(t',x'^{i},\rho')\mapsto(t,x^{i},\rho)$ has the following properties:
\begin{equation}\label{psi_requirements}
\psi(t',x'^{i},0)= \big(t' \E^{\Upsilon (x)},x'^{i},0 \big)\,,\qquad\psi^{*}\ti g|_{\rho'=0}= 2t'\td\rho'\td t'+ t'^2 g'_{ij}\td x'^{i}\td x'^{j}+ 2t'^2 a'_{i}\td x'^{i}\td\rho'\,,
\end{equation}
where the Weyl-ambient metric $\ti g$ has the form of \eqref{Weyl_ambient}, and $g'_{ij}= {\cal B}(x)^{-2}g_{ij}$, $a'_{i}= a_{i}+\Upsilon_i$. The Jacobian $(\psi)\indices{^{A}_{I}}= \left(\frac{\pa X^{I}}{\pa X'^{J}}\right)$ of this diffeomorphism is
\begin{equation}\label{Jacobian_Weyl_diff}
(\psi)\indices{^{I}_{J}}\equiv 
\left(\begin{array}{ccc}
\psi\indices{^{t}_{t'}} & \psi\indices{^{t}_{j'}} & \psi\indices{^{t}_{\rho'}}\\
\psi\indices{^{i}_{t'}} & \psi\indices{^{i}_{j'}} & \psi\indices{^{i}_{\rho'}}\\
\psi\indices{^{\rho}_{t'}}& \psi\indices{^{\rho}_{j'}}& \psi\indices{^{\rho}_{\rho'}}
\end{array}\right)
=
\left(\begin{array}{ccc}
\E^{\Upsilon(x) } &t' \E^{\Upsilon (x)}  \Upsilon_{j} & 0\\
0 & \delta\indices{^{i}_{j}}& 0\\
0&-2\rho' \E^{-2\Upsilon(x)}\Upsilon_{j}& \E^{-2\Upsilon(x)}
\end{array}\right)\,,
\end{equation}
where $\Upsilon(x)\equiv -\text{ln}{\cal B}(x)$ and $\Upsilon_{i}\equiv \pa_{i}\Upsilon(x)$.
At $\rho'=0$ the Jacobian matrix \eqref{Jacobian_Weyl_diff} reads
\begin{equation}\label{Jacobian_Weyl_diff_rho=0}
(\psi)\indices{^{A}_{I}}|_{\rho' =0}= 
\left(\begin{array}{ccc}
\E^{\Upsilon(x) } & t' \E^{\Upsilon (x)}  \Upsilon_{j} & 0\\
0 & \delta\indices{^{i}_{j}}& 0\\
0&0& \E^{-2 \Upsilon(x)}
\end{array}\right)\,.
\end{equation}
The above matrix can be written as the following matrix product:
\begin{equation}
(\psi)\indices{^{A}_{I}}|_{\rho' =0} =d_{1}p d_{2}\,,
\end{equation}
with 
\begin{equation}\label{pd_matrix}
p\indices{^{I}_{J}}=
\left(\begin{array}{ccc}
1& \Upsilon_{j} &  0\\
0 & \delta\indices{^{i}_{j}}& 0\\
0&0& 1
\end{array}\right),\quad
d_{1}=
\left(\begin{array}{ccc}
 t' \E^{\Upsilon(x)}& 0 &  0\\
0 & \delta\indices{^{i}_{j}}& 0\\
0&0& 1
\end{array}\right),\quad
d_{2}=
\left(\begin{array}{ccc}
 t'^{-1} & 0 &  0\\
0 & \delta\indices{^{i}_{j}}& 0\\
0&0& \E^{-2\Upsilon(x)}
\end{array}\right).
\end{equation}

Since the Weyl-ambient metric is homogeneous of degree 2 under dilatations $\delta_{s}^{*}\ti g= s^{2}\ti g$, it follows that the left-hand side of \eqref{Proposition_6.5_Graham_main} satisfies
\begin{equation}\label{homog_Riemann}
\ti R'_{IJKL;M_{1}\cdots M_{r}}|_{\rho'=0,t'=1}= {\cal B}(x)^{s_{0}-2}\ti R'_{IJKL;M_{1}\cdots M_{r}}|_{\rho'=0,t'={\cal B}(x)}\,.
\end{equation}
Under the ambient Weyl diffeomorphism \eqref{eq:ambientdiff} the covariant derivatives of the ambient Riemann curvature components transform tensorially as
\begin{equation}\label{WFG_tensorially}
\ti R'_{IJKL;M_{1}\cdots M_{r}}|_{\rho',t'}= \ti R_{ABCD;F_{1}\cdots F_{r}}|_{\rho,t} (\psi)\indices{^{A}_{I}}\cdots (\psi)\indices{^{F_{r}}_{M_{r}}}\,.
\end{equation}
Evaluating both sides of \eqref{WFG_tensorially} at $\rho'=0$, $t'=e^{-\Upsilon(x)}$ and using \eqref{pd_matrix} we have
\begin{equation}\label{Riemann_tensor_tensorially} 
\ti R'_{IJKL;M_{1}\cdots M_{r}}|_{\rho'=0,t'=e^{-\Upsilon(x)}}={\cal B}(x)^{2s_{\infty}- s_{0}} \ti R_{ABCD;F_{1}\cdots F_{r}}|_{\rho=0,t=1}p\indices{^{A}_{I}}\cdots p\indices{^{F_{r}}_{M_{r}}}\,.
\end{equation}
Plugging \eqref{homog_Riemann} into \eqref{Riemann_tensor_tensorially}, we  obtain \eqref{Proposition_6.5_Graham_main}.
\end{proof}

Theorem \ref{Proposition_6.5_Graham} helps us to find  Weyl-covariant tensors on $(M,[g,a])$. First let us look at the case without derivatives. In the coordinate basis, the nonvanishing components of the Weyl-ambient Riemann tensor $\tilde R_{IJKL}$ are $\ti R_{\infty jk\infty}$, $\ti R_{\infty jkl}$ and $\ti R_{ijkl}$.  Evaluating at $\rho=0$ and $t=1$, they are

\begin{align}
\label{eq:Rijkl}
\ti R_{\infty jk\infty}|_{\rho=0,t=1}= \hat\Omega^{(1)}_{jk}\,,\qquad\ti R_{\infty jkl}|_{\rho=0,t=1}= \hat C_{jkl}\,,\qquad\ti R_{ijkl}|_{\rho=0,t=1}= \hat W_{ijkl}\,.
\end{align}
Here $\hat C_{jkl}$ and $\hat W_{ijkl}$ are the Weyl-Cotton tensor and the Weyl curvature tensor on $M$, respectively, and $\hat\Omega^{(1)}_{jk}$ for now simply denotes the tensor defined in \eqref{eq:Omega1}. Then, applying \eqref{Proposition_6.5_Graham_main} we get $\hat C'_{jkl}= \hat C_{jkl}$ and $\hat W'_{ijkl}= {\cal B}^{2}(x) \hat W_{ijkl}$  under Weyl transformation as expected, we can also read off from \eqref{Proposition_6.5_Graham_main} that the Weyl weight of $\hat\Omega^{(1)}_{jk}$ is $+2$.

\par
Now we will define Weyl-obstruction tensors as the derivatives of $\ti R_{\infty jk\infty}$. 
 \begin{defn}\label{Weyl_ext_Obst}
Suppose $k$ is a positive integer. The $k^{th}$ extended Weyl-obstruction tensor $\hat \Omega^{(k)}_{ij}$ is defined as
\begin{equation}
\label{eq:def1}
\hat \Omega^{(k)}_{ij}= \ti R_{\infty ij\infty ;\underbrace{\scriptstyle \infty\cdots\infty}_{k-1}}|_{\rho=0,t=1}.
\end{equation}
For $k=1$ we can see from \eqref{eq:Rijkl} that $\ti R_{\infty jk\infty}|_{\rho=0,t=1}= \hat\Omega^{(1)}_{jk}$ is indeed the first extended Weyl-obstruction tensor.
\end{defn}

From the symmetry of the Weyl-ambient Riemann tensor we can immediately see that $\hat \Omega^{(k)}_{ij}$ given by Definition \ref{Weyl_ext_Obst} is symmetric. From the Ricci-flatness condition $\ti Ric(\ti g)=0$ and the fact that $\ti R_{0 IJK}=0$, we can see that $\hat \Omega^{(k)}_{ij}$ is traceless. Now we will show another important property of the extended Weyl-obstruction tensors defined in this way, namely that they are Weyl covariant. 
\begin{lemma}
\label{lem:lemmaR0}
The components of the Riemann tensor of the Weyl-ambient metric $\tilde g$ satisfy
\be
\label{eq:lemmaR0}
\tilde R_{IJK0;M_1\cdots M_r}=-\frac{1}{t}\sum^r_{s=1}\tilde R_{IJKM_s;M_1\cdots\hat{M}_s\cdots M_r}\,,
\ee
where $\hat{M}_s$ means to remove $M_s$ from the indices.
\end{lemma}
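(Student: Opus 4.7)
The key observation is that $\un{\cal T}=t\un\p_t$ is a \emph{closed} homothetic vector field for $\tilde g$. Homogeneity $\delta_s^*\tilde g=s^2\tilde g$ built into Definition~\ref{Ambient} is equivalent to $\mathcal{L}_{\un{\cal T}}\tilde g=2\tilde g$, which fixes the symmetric part $\tilde\nabla_{(I}{\cal T}_{J)}=\tilde g_{IJ}$; meanwhile, the Weyl-normal form of Theorem~\ref{thm:Weyl_ambient} gives ${\cal T}_I=t\tilde g_{I0}=(2t\rho,0,t^2)=\p_I(t^2\rho)$, so ${\cal T}^\flat=\td(t^2\rho)$ is exact and hence $\tilde\nabla_{[I}{\cal T}_{J]}=0$. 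Combining the two pieces,
\[
\tilde\nabla_I{\cal T}_J=\tilde g_{IJ}\,,\qquad\text{equivalently}\qquad \tilde\nabla_I{\cal T}^K=\delta_I{}^K\,.
\]
In particular $\tilde\nabla{\cal T}$ is a parallel tensor, so $\tilde\nabla_M\tilde\nabla_N{\cal T}^L=0$.

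For the base case $r=0$, note that $\tilde\nabla_I\tilde\nabla_J{\cal T}_K=\tilde\nabla_I\tilde g_{JK}=0$, so the Ricci identity forces ${\cal T}^L\tilde R_{LKIJ}=0$; with ${\cal T}^L=t\delta_0{}^L$ and $t>0$ this is $\tilde R_{0KIJ}=0$, which after the Riemann symmetry $\tilde R_{IJK0}=-\tilde R_{0KIJ}$ gives the lemma at $r=0$ (empty sum). For general $r$ I would apply $\tilde\nabla_{M_r}\cdots\tilde\nabla_{M_1}$ to the scalar identity ${\cal T}^L\tilde R_{LKIJ}=0$ and expand by the covariant Leibniz rule. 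Because all second and higher derivatives of ${\cal T}^L$ vanish, only the terms with zero or exactly one derivative on ${\cal T}^L$ survive, leaving
\[
{\cal T}^L\tilde R_{LKIJ;M_1\cdots M_r}+\sum_{s=1}^r\tilde R_{M_sKIJ;M_1\cdots\hat M_s\cdots M_r}=0\,.
\]
Substituting ${\cal T}^L=t\delta_0{}^L$ and dividing by $t$ gives $\tilde R_{0KIJ;M_1\cdots M_r}=-\tfrac{1}{t}\sum_s\tilde R_{M_sKIJ;M_1\cdots\hat M_s\cdots M_r}$. A final use of the Riemann symmetries $\tilde R_{IJK0}=-\tilde R_{0KIJ}$ and $\tilde R_{M_sKIJ}=-\tilde R_{IJKM_s}$ (each from pair exchange plus antisymmetry within a pair) turns this into the stated identity, the two sign flips conspiring to preserve the overall factor of $-1/t$.

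The substantive content is in the exactness of ${\cal T}^\flat$ and the resulting identity $\tilde\nabla{\cal T}=\tilde g$; after that the argument is a clean Leibniz expansion. The only step that needs genuine care is the Riemann-symmetry bookkeeping in the final relabeling, but this is routine.
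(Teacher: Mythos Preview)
Your proof is correct and follows essentially the same approach as the paper: both establish $\tilde\nabla_I{\cal T}^J=\delta_I{}^J$ (hence $\tilde\nabla^2{\cal T}=0$) and then Leibniz-expand the iterated covariant derivative of ${\cal T}^L\tilde R_{IJKL}=0$. Your derivation of $\tilde\nabla{\cal T}=\tilde g$ from homogeneity together with the exactness ${\cal T}^\flat=\td(t^2\rho)$ is a slightly cleaner alternative to the paper's direct Christoffel-symbol check, and contracting ${\cal T}$ into the last Riemann slot from the start (as the paper does) would spare you the final Riemann-symmetry relabeling, but these are cosmetic differences.
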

\begin{proof}
Computing the Christoffel symbols of the Weyl-ambient metric $\tilde g$ in \eqref{Weyl_ambient}, one finds $\tilde\Gamma^i{}_{j0}=\frac{1}{t}\delta^i{}_j$ and $\tilde\Gamma^\infty{}_{\infty0}=\frac{1}{t}$. Differentiating $\un{\cal T}=t\un\p_t$ we have ${\cal T}^I{}_{;J}=\delta^I{}_{J}$ and ${\cal T}^I{}_{;JK}=0$, then
\begin{align*}
({\cal T}^L\tilde R_{IJKL})_{;M_1\cdots M_r}&=\tilde R_{IJKM_1;M_2\cdots M_r}+({\cal T}^L\tilde R_{IJKL,M_1})_{;M_2\cdots M_r}\\
&=\tilde R_{IJKM_1;M_2\cdots M_r}+\tilde R_{IJKM_2;M_2\cdots M_r}+({\cal T}^L\tilde R_{IJKL,M_1M_2})_{;M_3\cdots M_r}\\
&=\cdots\\
&=\tilde R_{IJKM_1;M_2\cdots M_r}+\cdots+\tilde R_{IJKM_r;M_1\cdots M_{r-1}}+{\cal T}^L\tilde R_{IJKL;M_1\cdots M_r}\,.
\end{align*}
The left-hand side of this equation vanishes since $R_{IJK0}=0$, and thus the above equation leads to \eqref{eq:lemmaR0}.
\end{proof}

\begin{prop}
The extended Weyl-obstruction tensor $\hat \Omega^{(k)}_{ij}$ defined in \eqref{eq:def1} is a Weyl-covariant tensor with Weyl weight $2k$.
\end{prop}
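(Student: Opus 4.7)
The natural strategy is to apply Proposition~\ref{Proposition_6.5_Graham} directly to the specific index configuration of \eqref{eq:def1}. For $\hat\Omega^{(k)}_{ij}$, the Riemann tensor carries indices $IJKL=(\infty,i,j,\infty)$ and the $r=k-1$ covariant-derivative indices are all $\infty$. Counting, $s_\infty=2+(k-1)=k+1$, $s_M=2$, $s_0=0$, so the prefactor in \eqref{Proposition_6.5_Graham_main} is ${\cal B}(x)^{2(s_\infty-1)}={\cal B}(x)^{2k}$, which is already the expected weight. What remains is to show that the $p$-matrix contractions on the right-hand side collapse to a single term equal to $\hat\Omega^{(k)}_{ij}$ itself.

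From the explicit form of $p\indices{^A_I}$ in \eqref{p_matrix}, one sees $p\indices{^A_\infty}=\delta^A{}_\infty$, so all the $\infty$-slots (the outer pair on the Riemann tensor and every derivative slot) simply force the dummy indices to be $\infty$. Only the two slots corresponding to $J=i$ and $K=j$ produce a nontrivial expansion: each contributes $p\indices{^B_i}=\delta^B{}_i+\Upsilon_i\delta^B{}_0$ (and similarly for $j$). Thus the right-hand side expands into four terms,
\begin{equation*}
{\cal B}^{2k}\Big[\tilde R_{\infty ij\infty;\infty\cdots\infty}+\Upsilon_i\tilde R_{\infty 0 j\infty;\infty\cdots\infty}+\Upsilon_j\tilde R_{\infty i 0\infty;\infty\cdots\infty}+\Upsilon_i\Upsilon_j\tilde R_{\infty 0 0\infty;\infty\cdots\infty}\Big]_{\rho=0,t=1}.
\end{equation*}
The first term is precisely $\hat\Omega^{(k)}_{ij}$, so the proposition will follow once we show that the remaining three terms vanish.

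To eliminate the three unwanted terms I would invoke Lemma~\ref{lem:lemmaR0} together with the algebraic symmetries of the Riemann tensor. Using $R_{IJKL}=-R_{JIKL}=R_{KLIJ}=-R_{IJLK}$, any Riemann tensor with a single $0$ among its four indices can be brought to the form $\tilde R_{\cdots\cdots\cdots 0}$; with the covariant-derivative indices all equal to $\infty$ and evaluated at $t=1$, Lemma~\ref{lem:lemmaR0} rewrites it as a sum of $\tilde R_{\cdots\cdots\cdots\infty;\infty\cdots\infty}$ terms, each of which vanishes because of the antisymmetry of the last pair of Riemann indices (two of them are $\infty$). For instance, $\tilde R_{\infty 0 j\infty}=\tilde R_{j\infty\infty 0}$, and Lemma~\ref{lem:lemmaR0} gives $\tilde R_{j\infty\infty 0;\infty\cdots\infty}=-(k-1)\,\tilde R_{j\infty\infty\infty;\infty\cdots\infty}=0$. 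The same argument applies to $\tilde R_{\infty i 0\infty;\cdots}$, and iterating twice (or noting two $0$'s give at least one in the fourth slot after symmetry moves) kills $\tilde R_{\infty 0 0\infty;\cdots}$ as well. Hence $\hat\Omega'^{(k)}_{ij}={\cal B}(x)^{2k}\hat\Omega^{(k)}_{ij}$, which is the claim. The only delicate book-keeping is making sure the derivative indices commute cleanly through the antisymmetries invoked on the Riemann tensor, which is immediate because covariant differentiation preserves those algebraic symmetries; this is the main step deserving care but is otherwise routine.
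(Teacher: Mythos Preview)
Your proposal is correct and follows essentially the same route as the paper: apply Proposition~\ref{Proposition_6.5_Graham} to the index string $(\infty,i,j,\infty;\infty,\dots,\infty)$, read off the prefactor ${\cal B}^{2k}$, and then kill the extra terms produced by the off-diagonal entry of $p$ using Lemma~\ref{lem:lemmaR0} together with the algebraic symmetries of the Riemann tensor. Your manipulation $\tilde R_{\infty 0 j\infty}=\tilde R_{j\infty\infty 0}$ followed by Lemma~\ref{lem:lemmaR0} is exactly what the paper does in one line. The only cosmetic difference is that you write out the quadratic term $\Upsilon_i\Upsilon_j\,\tilde R_{\infty 00\infty;\infty\cdots\infty}$ explicitly and dispose of it by iterating the same argument, whereas the paper's displayed expansion \eqref{eq:obsWeylcov} lists only the three terms up to linear order in $\Upsilon$; your treatment is if anything slightly more complete on this point.
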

\begin{proof}
According to Proposition \ref{Proposition_6.5_Graham}, if we choose $(IJKL;M_{1}\dots M_{r})= (\infty, i,j, \infty;\underbrace{\infty\dots \infty}_{(k-1)})$, then $s_{\infty}= k+1$ and under a Weyl transformation we have
\begin{align}
\label{eq:obsWeylcov}
\ti R'_{\infty ij\infty ;\underbrace{\scriptstyle \infty\cdots\infty}_{k-1}}|_{\rho'=0,t'=1}={\cal B}(x)^{2k}\big(\ti R_{\infty ij\infty ;\underbrace{\scriptstyle \infty\cdots\infty}_{k-1}}+\Upsilon_{i}\ti R_{\infty 0j\infty ;\underbrace{\scriptstyle \infty\cdots\infty}_{k-1}}+\Upsilon_{j}\ti R_{\infty i0\infty ;\underbrace{\scriptstyle \infty\cdots\infty}_{k-1}}\big)|_{\rho'=0,t'=1}\,.
\end{align}
It follows from Lemma \ref{lem:lemmaR0} that
\begin{align}
R_{\infty i 0\infty;\underbrace{\scriptstyle \infty\cdots\infty}_{k-1}}=\frac{k-1}{t}R_{\infty i \infty\infty;\underbrace{\scriptstyle \infty\cdots\infty}_{k-2}}=0\,.
\end{align}
Therefore, we obtain from \eqref{eq:obsWeylcov} that $\hat \Omega'^{(k)}_{ij}={\cal B}(x)^{2k}\hat \Omega^{(k)}_{ij}$ under a Weyl transformation, i.e.\ $\hat\Omega^{(k)}_{ij}$ is a Weyl-covariant tensor with Weyl weight $2k$.
\end{proof}

Finally, we would like to show that Definition \ref{Weyl_ext_Obst} and Definition \ref{def1} are equivalent; that is, the Weyl-obstruction tensors defined by the derivatives of the ambient Riemann tensor in the frame $\{\bm e^+,\bm e^i,\bm e^-\}$ and the coordinate basis $\{\td t,\td x^i,\td\rho\}$ are equivalent. To start, let us look at the transformation between $\{\bm e^+,\bm e^i,\bm e^-\}$ and the coordinate basis $\{\td t,\td x^i,\td\rho\}$:
\begin{align}
\label{eq:basistrans}
\left(\begin{array}{c}\bm e^+ \\\bm e^j \\\bm e^-\end{array}\right)=\left(\begin{array}{ccc}1 &ta_i & 0 \\0 & \delta^j{}_i & 0 \\ \rho & -\rho t a_i & t\end{array}\right)\left(\begin{array}{c}\td t \\\td x^i \\ \td\rho\end{array}\right)\,.
\end{align}
Denote the transformation matrix as $\Lambda$, i.e.\ $\bm e^{J}=\Lambda^{J}{}_{I'}\td x^{I'}$ ($J=\{+,i,-\}$, $I'=\{0,i,\infty\}$), then the inverse matrix reads
\begin{align}
\Lambda^{-1}=\left(\begin{array}{ccc}1 &-ta_j & 0 \\0  & \delta^i{}_j & 0\\ -\frac{\rho}{t} &2\rho a_j & \frac{1}{t}\end{array}\right)\,.
\end{align}

Comparing \eqref{eq:ambRiem} and \eqref{eq:Rijkl}, we can see that the components $R_{ijkl}$, $R_{-ijk}$ and $R_{-ij-}$ in the null frame match the corresponding components $R_{ijkl}$, $R_{\infty ijk}$ and $R_{\infty ij\infty}$ in the coordinate basis when $\rho=0$ and $t=1$. Now let us show that any Weyl-obstruction tensor defined in \eqref{eq:def2} is equivalent to that in \eqref{eq:def1}. First, notice that although the components $\tilde R_{-+MN}$ of $\tilde R_{IJKL}$ in the frame $\{+,i,-\}$ vanish, the components $\ti\nabla_{P} \ti R_{-+MN}$ are not necessarily zero. (Using the notation in Section \ref{sec:topdown}, here we denote $\tilde\nabla_{\un D_P}$ as $\tilde\nabla_P$ for $P=+,i,-$.) The following lemma will be used in the proof of Proposition \ref{prop:obsequiv}.
\begin{lemma}
\label{lem:Riem}
$\tilde\nabla_{P}\underbrace{\tilde\nabla_-\cdots\tilde\nabla_-}_{n}\tilde R_{-+MN}=-\frac{1}{t}\delta^{i}{}_{P}\underbrace{\tilde\nabla_-\cdots\tilde\nabla_-}_{n}\tilde R_{-iMN}$ for any integer $n\geqslant0$.
\end{lemma}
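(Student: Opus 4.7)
The plan is to exploit the homothetic nature of $\un{\cal T}$, which supplies the two tensor identities already used in Lemma \ref{lem:lemmaR0}: $\tilde\nabla_I{\cal T}^J=\delta^J{}_I$ and ${\cal T}^J\tilde R_{IJKL}=0$. In the null frame \eqref{e+-def} the vector $\un{\cal T}=t\un\p_t$ has frame components ${\cal T}^+=t$, ${\cal T}^i=0$, ${\cal T}^-=t\rho$, so the second identity reads $t\,\tilde R_{I+KL}+t\rho\,\tilde R_{I-KL}=0$. I will also use repeatedly the antisymmetry of $\tilde R$ in its first pair of indices, a property inherited by all of its covariant derivatives; in particular $\tilde\nabla_{Q_m}\cdots\tilde\nabla_{Q_1}\tilde R_{--KL}=0$ for any sequence of derivative indices.

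The key step is to derive, by induction on $m\geq 0$, the iterated Leibniz identity
\begin{equation}
\label{eq:TJ_iter}
{\cal T}^J\,\tilde\nabla_{Q_m}\cdots\tilde\nabla_{Q_1}\tilde R_{IJKL}=-\sum_{k=1}^{m}\tilde\nabla_{Q_m}\cdots\widehat{\tilde\nabla}_{Q_k}\cdots\tilde\nabla_{Q_1}\tilde R_{IQ_kKL}\,,
\end{equation}
where $\widehat{\tilde\nabla}_{Q_k}$ indicates that this particular factor is omitted. The base case $m=0$ is exactly ${\cal T}^J\tilde R_{IJKL}=0$, and the inductive step follows at once by applying $\tilde\nabla_{Q_{m+1}}$ to both sides of \eqref{eq:TJ_iter} and invoking the Leibniz rule together with $\tilde\nabla_{Q_{m+1}}{\cal T}^J=\delta^J{}_{Q_{m+1}}$; the new term $\tilde\nabla_{Q_m}\cdots\tilde\nabla_{Q_1}\tilde R_{IQ_{m+1}KL}$ that appears on the LHS upon Leibniz differentiation is precisely the missing $k=m+1$ contribution to the $(m+1)$-case on the RHS.

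Next, I specialize \eqref{eq:TJ_iter} to $m=n+1$, $I=-$, $Q_1=\cdots=Q_n=-$, and $Q_{n+1}=P$. The left-hand side collapses to $t\,\tilde\nabla_P\tilde\nabla_-^n\tilde R_{-+KL}$, since the ${\cal T}^-$ piece is proportional to $\tilde\nabla_P\tilde\nabla_-^n\tilde R_{--KL}=0$ by antisymmetry. On the right-hand side, every term with $k\leq n$ is proportional to $\tilde\nabla_P\tilde\nabla_-^{n-1}\tilde R_{--KL}=0$, so that only the $k=n+1$ term survives and contributes $-\tilde\nabla_-^n\tilde R_{-PKL}$. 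This yields $t\,\tilde\nabla_P\tilde\nabla_-^n\tilde R_{-+KL}=-\tilde\nabla_-^n\tilde R_{-PKL}$. To finish, I apply \eqref{eq:TJ_iter} once more at level $m=n$ with all $Q_k=-$ and $I=-$: both sides vanish identically by antisymmetry, forcing $\tilde\nabla_-^n\tilde R_{-+KL}=0$. Combined with the trivial $\tilde\nabla_-^n\tilde R_{--KL}=0$, this collapses $\tilde\nabla_-^n\tilde R_{-PKL}$ to $\delta^i{}_P\,\tilde\nabla_-^n\tilde R_{-iKL}$, giving the stated equality.

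The only real subtlety I anticipate is the ordering of successive covariant derivatives in \eqref{eq:TJ_iter}, which could in principle generate curvature commutator terms upon rearrangement. Fortunately this never arises in the present application: the inner derivatives are all equal to $\tilde\nabla_-$, so their relative order is immaterial, and the outer derivative $\tilde\nabla_P$ is placed naturally at the outside throughout the iteration, so no additional commutator identities are needed.
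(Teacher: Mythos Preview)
Your proof is correct and takes a genuinely different route from the paper's. The paper's proof in Appendix~\ref{App:Lemproof} proceeds by a direct inductive computation in the null frame: for each of $P=+,i,-$ it expands the outer covariant derivative using the explicit connection coefficients listed in \eqref{eq:conn1form}, invokes the induction hypothesis term by term, and exploits the algebraic relations $\tilde\Gamma^i{}_{M+}=-\rho\,\tilde\Gamma^i{}_{M-}$ and $\tilde R_{+jMN}=-\rho\,\tilde R_{-jMN}$ to see the cancellations. Your argument instead recycles the homothety identity $\tilde\nabla_I{\cal T}^J=\delta^J{}_I$ already used in Lemma~\ref{lem:lemmaR0}, transports it to the null frame via ${\cal T}^+=t$, ${\cal T}^-=t\rho$, and derives the iterated Leibniz relation \eqref{eq:TJ_iter} once and for all; specializing then reduces the claim to the antisymmetry $\tilde R_{--MN}=0$. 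This is cleaner and more conceptual: it never touches individual Christoffel symbols and makes manifest that the result is a consequence of the homothetic structure alone, whereas the paper's approach, while more elementary, obscures this origin behind explicit component manipulations. One small point worth making explicit in your write-up is that $\tilde\nabla_I{\cal T}^J=\delta^J{}_I$, though verified in the coordinate basis in the proof of Lemma~\ref{lem:lemmaR0}, is a tensor equation and therefore holds unchanged in the null frame.
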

\begin{proof}
See Appendix \ref{App:Lemproof}.
\end{proof}

\begin{prop}
\label{prop:obsequiv}
$\tilde R_{\infty ij\infty;\underbrace{\scriptstyle\infty\cdots\infty}_{n}}=t^{n+2}\underbrace{\tilde\nabla_-\cdots\tilde\nabla_-}_{n}\tilde R_{-ij-}$ for any integer $n\geqslant0$.
\end{prop}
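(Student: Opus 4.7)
The claim is really a frame-change identity for the rank-$(n{+}4)$ tensor $\tilde\nabla^n\tilde R$, so my plan is to translate its null-frame component $\tilde R_{-ij-;\underbrace{\scriptstyle -\cdots-}_{n}}$ into its coordinate-basis component $\tilde R_{\infty ij\infty;\underbrace{\scriptstyle \infty\cdots\infty}_{n}}$ using the inverse transition matrix read off from \eqref{eq:D+-i}. The factor $t^{n+2}$ comes out cleanly and all that remains is to show every cross-term in the two $i,j$ slots is zero. The hard step will be ruling out the ``$0$'' cross-terms, for which one combines the first Bianchi identity with Lemma~\ref{lem:lemmaR0}.

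\textbf{Setup.} From \eqref{eq:D+-i} one reads off $(\Lambda^{-1})^{I}{}_{-}=\frac{1}{t}\delta^{I}{}_{\infty}$ and $(\Lambda^{-1})^{A}{}_{i}=\delta^{A}{}_{i}-t a_{i}\delta^{A}{}_{0}+2\rho a_{i}\delta^{A}{}_{\infty}$. The covariant tensor $\tilde\nabla^n\tilde R$ has two Riemann ``$-$'' slots plus $n$ derivative ``$-$'' slots, and the transformation law forces the corresponding coordinate indices to be $\infty$, producing the overall prefactor $1/t^{\,n+2}$. Hence
\begin{equation}
\tilde R_{-ij-;\underbrace{\scriptstyle -\cdots-}_{n}}=\frac{1}{t^{\,n+2}}\,(\Lambda^{-1})^{A}{}_{i}(\Lambda^{-1})^{B}{}_{j}\,\tilde R_{\infty AB\infty;\underbrace{\scriptstyle \infty\cdots\infty}_{n}}\,.
\end{equation}
The target term is $A=i$, $B=j$; I need to dispose of the eight remaining $(A,B)$ combinations.

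\textbf{Killing the cross-terms.} Any cross-term with $A=\infty$ or $B=\infty$ gives a Riemann component of the form $\tilde R_{\infty\infty\cdots}$ or $\tilde R_{\cdots\infty\infty}$. The first- and second-pair antisymmetries $\tilde R_{IJKL}=-\tilde R_{JIKL}=-\tilde R_{IJLK}$ are tensor identities, hence inherited by every covariant derivative, so all such components vanish. The genuinely non-trivial cross-terms involve a ``$0$'', giving components like $\tilde R_{\infty 0 K\infty;\underbrace{\scriptstyle\infty\cdots\infty}_{n}}$. For these I first apply the differentiated first Bianchi identity
\begin{equation}
\tilde R_{\infty 0 K\infty;\underbrace{\scriptstyle\infty\cdots\infty}_{n}}+\tilde R_{\infty K\infty 0;\underbrace{\scriptstyle\infty\cdots\infty}_{n}}+\tilde R_{\infty\infty 0K;\underbrace{\scriptstyle\infty\cdots\infty}_{n}}=0\,,
\end{equation}
whose last term already vanishes by first-pair antisymmetry. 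I then invoke Lemma~\ref{lem:lemmaR0} on the middle term, which has a ``$0$'' in the fourth Riemann slot: it equals $-\frac{1}{t}\sum_{s}\tilde R_{\infty K\infty\infty;\underbrace{\scriptstyle\infty\cdots\hat\infty\cdots\infty}_{n-1}}$, and each summand is zero by second-pair antisymmetry. Hence $\tilde R_{\infty 0 K\infty;\underbrace{\scriptstyle\infty\cdots\infty}_{n}}=0$, and by the same mechanism $\tilde R_{\infty i 0\infty;\underbrace{\scriptstyle\infty\cdots\infty}_{n}}=0$ and the double-zero term $\tilde R_{\infty 0 0\infty;\underbrace{\scriptstyle\infty\cdots\infty}_{n}}=0$.

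\textbf{Conclusion.} Only the $(A,B)=(i,j)$ term survives, giving
\begin{equation}
\tilde R_{-ij-;\underbrace{\scriptstyle -\cdots-}_{n}}=\frac{1}{t^{\,n+2}}\,\tilde R_{\infty ij\infty;\underbrace{\scriptstyle \infty\cdots\infty}_{n}}\,,
\end{equation}
which rearranges to the stated identity. The main obstacle, as anticipated, is justifying the vanishing of the ``$0$'' cross-terms with an arbitrary number of $\infty$-derivatives; everything else is a bookkeeping exercise in the transition matrix $\Lambda^{-1}$ and the Riemann symmetries.
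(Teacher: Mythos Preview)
Your argument is correct, and the overall strategy (frame change plus a lemma to kill cross-terms) is the same as the paper's, but you run it in the opposite direction and invoke a different lemma. The paper starts from the coordinate-basis component and converts to the null frame via $\Lambda$; because $\Lambda^{M}{}_{\infty}=t\,\delta^{M}{}_{-}$ the $\infty$-slots collapse immediately, and the only cross-terms are $+$-indices in the null frame, namely $\tilde\nabla_-^{\,n}\tilde R_{-+j-}$, $\tilde\nabla_-^{\,n}\tilde R_{-i+-}$, $\tilde\nabla_-^{\,n}\tilde R_{-++-}$. These are eliminated by Lemma~\ref{lem:Riem}, whose inductive proof in Appendix~\ref{App:Lemproof} is somewhat lengthy. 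You instead go from null frame to coordinate basis via $\Lambda^{-1}$; the cross-terms are now $0$- and $\infty$-indices in the coordinate basis, and you dispatch the $\infty$-terms by antisymmetry and the $0$-terms by combining the (differentiated) first Bianchi identity with Lemma~\ref{lem:lemmaR0}. Your route is arguably more economical, since Lemma~\ref{lem:lemmaR0} has a three-line proof from $\mathcal{T}^{I}{}_{;J}=\delta^{I}{}_{J}$, whereas Lemma~\ref{lem:Riem} requires an explicit induction through the null-frame connection coefficients. The paper's route, on the other hand, keeps the analysis entirely inside the null frame, which is natural given that Definition~\ref{def1} is stated there.
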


\begin{proof}
For $n=0$ one can see this readily from \eqref{eq:Rijkl}. Since $\un\p_{\tilde N'}=\Lambda^{M}{}_{N'}\un D_{M}$, for $n\geqslant1$ the left-hand side of the above equation can be written as (primes are dropped for simplicity)
\begin{align}
\label{eq:Riem1}
\tilde R_{\infty ij\infty;\underbrace{\scriptstyle\infty\cdots\infty}_{n}}&=\Lambda^{M_1}{}_{\infty}\cdots\Lambda^{M_n}{}_{\infty}\Lambda^{K}{}_{\infty}\Lambda^{I}{}_{i}\Lambda^{J}{}_{j}\Lambda^{L}{}_{\infty}\tilde\nabla_{M_1}\cdots\tilde\nabla_{M_n}\tilde R_{KIJL}\nn\\
&=t^{n+2}\Lambda^{I}{}_{i}\Lambda^{J}{}_{j}\underbrace{\tilde\nabla_{-}\cdots\tilde\nabla_{-}}_{n}\tilde R_{-IJ-}\,,
\end{align}
where $\Lambda^{M}{}_\infty=t\delta^{M}{}_-$ [see \eqref{eq:basistrans}] is used in the second equality. Using the symmetries of the Riemann tensor, we have
\begin{align}
\label{eq:Riem2}
\Lambda^{I}{}_{i}\Lambda^{J}{}_{j}\underbrace{\tilde\nabla_{-}\cdots\tilde\nabla_{-}}_{n}\tilde R_{-IJ-}={}&\underbrace{\tilde\nabla_{-}\cdots\tilde\nabla_{-}}_{n}\tilde R_{-ij-}+\Lambda^{+}{}_{i}\underbrace{\tilde\nabla_{-}\cdots\tilde\nabla_{-}}_{n}\tilde R_{-+j-}\nn\\
&+\Lambda^{+}{}_{j}\underbrace{\tilde\nabla_{-}\cdots\tilde\nabla_{-}}_{n}\tilde R_{-i+-}+\Lambda^{+}{}_{i}\Lambda^{+}{}_{j}\underbrace{\tilde\nabla_{-}\cdots\tilde\nabla_{-}}_{n}\tilde R_{-++-}\nn\\
={}&\underbrace{\tilde\nabla_{-}\cdots\tilde\nabla_{-}}_{n}\tilde R_{-ij-}\,,
\end{align}
where $\Lambda^{i}{}_j=\delta^{i}{}_j$ is used in the first equality and Lemma \ref{lem:Riem} is used in the second equality. Plugging \eqref{eq:Riem2} into \eqref{eq:Riem1} completes the proof.
\end{proof}
From Proposition \ref{prop:obsequiv} we can directly see that the $\hat\Omega^{(k)}_{ij}$ defined in \eqref{eq:def2} is equivalent to \eqref{eq:def1}. Therefore, the descriptions of the Weyl-obstruction tensors in the first order and second order formalisms are equivalent. Each of these two formalisms have their own advantages. The first order formalism is suited for the top-down approach as the metric $\tilde g$ has a simple form in the dual frame $\{\bm e^I\}$. It is also more convenient to construct Weyl-covariant tensors in the first order formalism since \eqref{eq:prop1} gives a covariant transformation while \eqref{Proposition_6.5_Graham_main} has the matrix $p$ with an off-diagonal element. On the other hand, the second order formalism is designed for the bottom-up approach, as one can evaluate the initial value problem more naturally in the coordinate basis.

 \section{Conclusions}
 \label{sec:conclu}
In this paper we have generalized the ambient construction for conformal manifolds to that for Weyl manifolds. Inspired by the WFG gauge for AlAdS \cite{Ciambelli:2019bzz}, we introduced the Weyl-ambient metric $\tilde g$ in \eqref{Weyl_ambient}. From a top-down perspective we showed how the Weyl-ambient space $(\tilde M,\tilde g)$ induces a Weyl geometry on a codimension-2 manifold $M$. The metric $\tilde g$ and the LC connection on $\tilde M$ give rise to a Weyl class $[\gamma^{(0)},a^{(0)}]$ on $M$, in which a representative includes an induced metric $\gamma^{(0)}_{ij}$ together with a Weyl connection $a^{(0)}_i$. The ambient Weyl diffeomorphisms on $\tilde M$ act as Weyl transformations on the $M$. This enhances the codimension-2 conformal geometry in the usual ambient construction to a Weyl geometry $(M,[\gamma^{(0)},a^{(0)}])$. 
\par
From a bottom-up perspective, we formulated the $(d+2)$-dimensional Weyl-ambient space from a $d$-dimensional Weyl manifold $(M,[g,a])$. We first introduced a Weyl structure ${\cal P}_W$ on $M$ together with a Weyl connection. We then generalized the definition of ambient spaces to Weyl-ambient spaces, and proved that any Weyl-ambient space can be put in Weyl-normal form by a diffeomorphism. Besides assigning the Weyl connection $a_{i}$ on ${\cal P}_W$, the $\rho$-coordinate lines of a Weyl-ambient space in Weyl-normal form are not required to be geodesics but can acquire an acceleration $\un{\cal A}$. By taking the Weyl structure as an initial surface, we have shown that there exists a unique Weyl-ambient space in Weyl-normal form for any given Weyl manifold provided the data $(g_{ij},a_i,\un{\cal A})$ is given. The metric generated order by order from the initial value problem is exactly the $\tilde g$ we introduced in \eqref{Weyl_ambient} from the top-down approach, where $g_{ij}$ corresponds to $\gamma^{(0)}_{ij}$, and $(a_i,\un{\cal A})$ corresponds to $a_i(x,\rho)$. 
\par
Based on the Weyl-ambient construction, we investigated Weyl-covariant quantities induced by the ambient tensors in both first and second order formalisms. As an important example, the extended Weyl-obstruction tensor $\hat\Omega^{(k)}_{ij}$ is defined through covariant derivatives of the ambient Riemann tensor, and its definition in the first and second order formalisms are shown to be equivalent. We also proved that $\hat\Omega^{(k-1)}_{ij}$ corresponds to the pole of $\gamma^{(k)}_{ij}$ at $d=2k$ in the ambient metric expansion, which justifies the description of Weyl-obstruction tensors in \cite{Jia:2021hgy}. Compared with the extended obstruction tensor $\Omega^{(k-1)}_{ij}$, whose residue is only conformally covariant in $d=2k$, the extended Weyl-obstruction tensor $\hat\Omega^{(k-1)}_{ij}$ is Weyl covariant in any dimension. 
\par

There are many possible extensions of this work. It would be interesting to reframe the Weyl-ambient geometry in the framework of Atiyah Lie algebroids, as set up in \cite{Ciambelli:2021ujl}. Using the machinery of Lie algebroids, one can naturally formulate the BRST cohomology and quantum anomalies in a geometrical way \cite{JKL}. Since obstruction tensors, which are defined canonically through the ambient construction, also appear in the type B Weyl anomaly, we hope that the Weyl-ambient space can provide a geometric interpretation for the Weyl anomaly in the Lie algebroid language. We will explore this in future work.
\par
The Weyl-ambient space induces the $\text{Diff}(M)\ltimes\text{Weyl}$ symmetry on the codimension-2 manifold $M$, which can be regarded as an asymptotic corner symmetry \cite{Ciambelli:2022cfr,Ciambelli:2022vot}. The algebra of corner symmetries and their Noether charges have been studied in \cite{Ciambelli:2021vnn,Ciambelli:2022cfr} (see also \cite{Freidel:2021cjp}), it is possible to apply the results therein to the Weyl-ambient space and study the asymptotic corner symmetries of the Weyl-ambient space. Moreover, since the surface $\cal N$ at $\rho=0$ of the Weyl-ambient space is null, there is an induced Carroll structure  \cite{Ciambelli:2018xat, Ciambelli:2019lap}. This is evident from the fact that the ambient Weyl diffeomorphism acts on the null surface as (a special case of) a Carrollian diffeomorphism. It would be interesting to see if the results of \cite{Mittal:2022ywl} for a Ricci-flat space with a Killing vector can be emulated for the Weyl-ambient space, which is Ricci-flat but possesses a conformal Killing vector $\un T=t\un\p_t$.
\par
One also expects intriguing holographic applications of the Weyl-ambient construction, for example in the context of celestial holography \cite{Pasterski:2016qvg,Raclariu:2021zjz,Pasterski:2021raf} and codimension-2 holography \cite{Akal:2020wfl,Ogawa:2022fhy}. In particular, the $\text{Diff}(M)\ltimes\text{Weyl}$ symmetry on $M$ corresponds to the Weyl-BMS symmetry on $\tilde M$ \cite{Freidel:2021fxf} (with supertranslations turned off). Therefore, we expect that the Weyl-ambient construction will provide a new arena for realizing the holographic principle.
\par
The Weyl-ambient metric construction is part of a bigger program of introducing the Weyl connection back into physics. Viewed as an ordinary gauge symmetry, the Weyl symmetry can provide an organizing principle for constructing effective field theories (e.g.,\ for conformal hydrodynamics). Weyl manifolds would be the proper geometric setup for such future investigations. More recently, the ambient construction was used to study correlators of CFTs on general curved backgrounds \cite{Parisini:2022wkb}. We hope the Weyl-ambient geometries can be utilized in similar contexts.

\section*{Acknowledgements}
We thank Luca Ciambelli and Tassos Petkou for constructive comments on the manuscript. We would also like to thank George Katsianis, Marc Klinger, Pin-Chun Pai and Antony Speranza for discussions. This work was supported by the U.S. Department of Energy under contract DE-SC0015655.

\appendix
\renewcommand{\theequation}{\thesection.\arabic{equation}}
\setcounter{equation}{0}

\section{Coordinate Systems of the Flat Ambient Space}\label{AppA}
In this appendix we demonstrate the transformation between the flat ambient metric in different coordinate systems introduced in Section \ref{Sec2}.
\par
Start with Minkowski spacetime $\mathbb{R}^{1,d+1}$ in Lorentzian coordinates $\{X^{0},X^{i}\}$ with $i=1,\dots , d+1$:
\begin{equation}\label{app:1}
\eta= - (\td X^{0})^{2}+\sum_{i=1}^{d+1}(\td X^{i})^{2}\,.
\end{equation}
First, we can define a stereographic coordinate system $\{\ell,r,x^{i}\}$ as follows:
\begin{equation}\label{Lorentzian_to_stereo}
X^{0}= \ell \frac{L^2+ r^2}{L^2-r^2}\,,\qquad X^{i}= \ell \frac{2L}{L^2-r^2} x^{i},\qquad i=1,\dots, d+1\,,
\end{equation}
where $r^2 = \sum\limits_{i=1}^{d+1} (x^{i})^2 $ and $L$ is a positive constant. In this system, the Minkowski metric \eqref{app:1} becomes
\begin{equation}\label{app:2}
\eta= -\td \ell^2 + \frac{\ell^2}{L^2} \frac{4}{(1- (r/L)^2)^2}\sum_{i=1}^{d+1}(\td x^i)^2 = -\td \ell^2 + \frac{\ell^2}{L^2} \frac{4}{(1- (r/L)^2)^2}\left(\td r^2 + r^2 \td\Omega^{2}_{d}\right)\,,
\end{equation}
where in the second equality we expressed $\{x^{i}\}$ in the spherical coordinates. The coordinate patch is $\ell>0$, $0 \leqslant r<L$, which covers the interior of the future light cone. Notice that in  these coordinates the metric has a ``cone'' form \eqref{Flat_Ambient_4}, with $g^+$ given in \eqref{eq:adsglob}, which is the $(d+1)$-dimensional Euclidean AdS metric $g^+_G$ in global coordinates. This AdS metric can be converted into the FG from by transforming the coordinate $r$ to a coordinate $z$
\begin{equation}\label{stereo_to_FG}
r= L\left(\frac{2L-z}{2L+z}\right)\,.
\end{equation}
Then, the metric \eqref{app:2} takes the form
\begin{equation}
\label{eq:etaFG}
\eta = - \td \ell^2 + \frac{\ell^2}{z^2} \left(\td z^2 + L^2(1- \frac{1}{4}(z/L)^2)^2\td\Omega_{d}^2\right)\,,
\end{equation}
and the interior of the future light cone is now covered by  $\ell>0$, $ 0<z< 2L$. We can further convert \eqref{eq:etaFG} into the ambient form \eqref{ambient_metric} by setting
\begin{equation}\label{FG_to_ambient}
\ell=z t\,,\qquad z^2 =-2\rho\,,
\end{equation}
and the metric turns into the form shown in \eqref{Flat_Ambient_2}:
\begin{equation}\label{app:3}
\eta = 2\rho \td t^2 + 2t \td t \td\rho + t^2 (1+ \frac{\rho}{2 L^2})^2 L^2 \td\Omega_{d}^2 \,.
\end{equation} 
Plugging \eqref{FG_to_ambient} and \eqref{stereo_to_FG} into \eqref{Lorentzian_to_stereo} we find that
\beq
X^{0}+ R= 2L t\,,\qquad\tan\alpha\equiv\frac{R}{X^{0}}= \frac{1+ \frac{\rho}{2L^2}}{1- \frac{\rho}{2L^2}}\,,
\eeq
where $R^2 = \sum\limits_{i=1}^{d+1} (X^{i})^2 $. From the above equation one can see that the constant-$t$ and constant-$\rho$ surfaces are indeed the cones depicted in Figure~\ref{fig:cones}, with  $\alpha$ the angle of the constant-$\rho$ cone with respect to the $X^{0}$-axis.
\par
The Minkowski metric \eqref{app:1} can also be written in the cone form with $g^+=g^+_P$ the Euclidean AdS metric in Poincar\'e coordinates given in \eqref{eq:adsPoin}. Introduce another coordinate system $\{\ell,x^{i},z\}$ as follows:
\begin{equation}
X^{0}= \frac{\ell}{2L z}\left(L^{2}+ \sum_{i=1}^{d}(x^{i})^{2} + z^{2}\right)\,,\quad X^{d+1}= \frac{\ell}{2Lz}\left(L^{2}- \sum_{i=1}^{d}(x^{i})^{2}- z^{2}\right)\,,\quad X^{i}= \frac{\ell x^{i}}{z}\,.
\end{equation}
The metric \eqref{app:1} becomes 
\begin{equation}\label{app:4}
\eta= -\td \ell^2 + \frac{\ell^2}{z^2}\left(\td z^2+ \delta_{ij} \td x^{i}\td x^{j}\right),\qquad i=1,\cdots,d\,,\qquad z>0\,.
\end{equation}
Define the ambient coordinate system $\{t,x^{i},\rho\}$ as
\begin{equation}
\ell=zt\,,\qquad z^2 =-2\rho\,,
\end{equation}
then the metric \eqref{app:4} will have the form shown in \eqref{Flat_Ambient_3}
\begin{equation}\label{app:5}
\eta = 2\rho \td t^2 + 2t\td t \td\rho + t^2\delta_{ij} \td x^{i}\td x^{j}\,,\qquad i=1,\cdots,d\,.
\end{equation}

\section{Details of Null Frame Calculations}
\label{App:Null}
In Section \ref{sec:topdown} we introduced the following frame:
\begin{align}
\label{e+-}
\bm e^+&=\td t+ta_i\td x^i\,,\qquad\bm e^-=t\td\rho+\rho \td t-t\rho a_i\td x^i\,,\qquad \bm e^i=\td x^i\,,\\
\un D_+&=\un\p_t-\frac{\rho}{t}\un\p_\rho\,,\qquad\un D_-=\frac{1}{t}\un\p_\rho\,,\qquad\un D_i=\un\p_i-ta_i\un\p_t+2\rho a_i\un\p_\rho\,.
\end{align}
The metric \eqref{Weyl_ambient} can be written in this frame as
\begin{align*}
\tilde g=\bm e^+\otimes\bm e^-+\bm e^-\otimes\bm e^++t^2\gamma_{ij}\bm e^i\otimes\bm e^j\,,
\end{align*}
and the metric components read
\begin{align*}
\tilde g_{+-}&=\tilde g_{-+}=1\,,\qquad \tilde g_{ij}=t^2\gamma_{ij}\,,\qquad \tilde g^{+-}=\tilde g^{-+}=1\,,\qquad \tilde g^{ij}=\frac{1}{t^2}\gamma^{ij}\,.
\end{align*}
The commutation relations of the frame are as follows:
\begin{align}
\begin{split}
[\un D_+,\un D_i]&=-(a_i-\rho \varphi_i)\un D_+-\rho^2 \varphi_i\un D_-\,,\qquad[\un D_+,\un D_-]=0\,,\\
[\un D_-,\un D_i]&=(a_i+\rho\varphi_i)\un D_--\varphi_i\un D_+\,,\qquad[\un D_i,\un D_j]=-tf_{ij}\un D_++t\rho f_{ij}\un D_-\,,
\end{split}
\end{align}
where $\varphi=\p_\rho a_i$, and $f_{ij}=D_ia_j-D_ja_i$. From the above commutators we can read off the commutation coefficients:
\begin{align}
\begin{split}
C_{+i}{}^+&=-a_i+\rho \varphi_i\,,\qquad C_{+i}{}^-=-\rho^2 \varphi_i\,,\qquad C_{-i}{}^+=-\varphi_i\,,\\
C_{-i}{}^-&=a_i+\rho \varphi_i\,,\qquad C_{ij}{}^+=-tf_{ij}\,,\qquad C_{ij}{}^-=t\rho f_{ij}\,.
\end{split}
\end{align}
Then, we can compute the connection coefficients $\tilde\Gamma^{P}{}_{MN}$ of the ambient LC connection:
\begin{align}
\begin{split}
\tilde\Gamma^{P}{}_{MN}=&\frac{1}{2}\tilde g^{PQ}(D_M\tilde g_{NQ}+D_N\tilde g_{QM}-D_Q\tilde g_{MN})\\
&-\frac{1}{2} \tilde g^{PQ}(C_{MQ}{}^{R}\tilde g_{RN}+C_{NM}{}^{R}\tilde g_{RQ}-C_{QN}{}^{R}\tilde g_{RM})\,.
\end{split}
\end{align}
The nonvanishing components are
\begin{align}
\tilde\Gamma^{+}{}_{i+}&=a_i\,,\qquad
\tilde\Gamma^{+}{}_{ij}=-\frac{t}{2}(\p_\rho\gamma_{ij}+f_{ij})\,,\qquad
\tilde\Gamma^{-}{}_{ij}=-t\gamma_{ij}+\frac{\rho t}{2}(\p_\rho \gamma_{ij}+f_{ij})\,,\nn\\
\tilde\Gamma^{-}{}_{i-}&=-a_i\,,\qquad\tilde\Gamma^{i}{}_{j-}=\frac{1}{2t}\gamma^{ik}(\p_\rho\gamma_{jk}+f_{jk})\,,\qquad\tilde\Gamma^{i}{}_{j+}=\frac{1}{t}\delta^i{}_j-\frac{\rho}{2t}\gamma^{ik}(\p_\rho\gamma_{jk}+f_{jk})\,,\nn\\
\tilde\Gamma^{i}{}_{jk}&=\frac{1}{2}\gamma^{il}(\p_j\gamma_{lk}+\p_k\gamma_{jl}-\p_l\gamma_{jk})-(a_j\delta^i{}_k+a_k\delta^i{}_j-a^i\gamma_{jk})+\rho\gamma^{il}(a_j\p_\rho \gamma_{lk}+a_k\p_\rho \gamma_{jl}-a_l\p_\rho \gamma_{jk})\,,\nn\\
\tilde\Gamma^{+}{}_{+i}&=\rho \varphi_i\,,\qquad\tilde\Gamma^{i}{}_{++}=\frac{\rho^2}{t^2}\gamma^{ij}\varphi_j\,,\qquad\tilde\Gamma^{-}{}_{+i}=-\rho^2\varphi_i\,,\qquad\tilde\Gamma^{i}{}_{+-}=-\frac{\rho}{t^2}\gamma^{ij}\varphi_j\,,\nn\\
\tilde\Gamma^{+}{}_{-i}&=-\varphi_i\,,\qquad\tilde\Gamma^{i}{}_{-+}=-\frac{\rho}{t^2}\gamma^{ij}\varphi_j\,,\qquad\tilde\Gamma^{-}{}_{-i}=\rho \varphi_i\,,\qquad\tilde\Gamma^{i}{}_{--}=\frac{1}{t^2}\gamma^{ij}\varphi_j\,,\nn\\
\tilde\Gamma^{i}{}_{+j}&=\frac{1}{t}\delta^i{}_j-\frac{\rho}{2t}\gamma^{ik}(\p_\rho\gamma_{jk}+f_{jk})\,,\qquad\tilde\Gamma^{i}{}_{-j}=\frac{1}{2t}\gamma^{ik}(\p_\rho\gamma_{jk}+f_{jk})\,,
\end{align}
which constitute the connection 1-form $\tilde{\bm\omega}^{M}{}_{N}$ presented in \eqref{eq:conn1form}. Then, using Cartan's second structure equation
\begin{align}
\tilde{\bm R}^{M}{}_{N}=\td\tilde{\bm\omega}^{M}{}_{N}+\tilde{\bm\omega}^{M}{}_{P}\wedge\tilde{\bm\omega}^{P}{}_{N}\,,
\end{align}
we can find the ambient curvature 2-form, the nonvanishing components are
\begin{align}
\tilde{\bm R}^+{}_i={}&-t(\hat\nabla_j\psi_{ki}-\rho \varphi_if_{jk})\bm e^j\wedge \bm e^k+(\p_\rho\psi_{ji}-\psi_{jk}\psi_i{}^k-\hat\nabla_j\varphi_i-2\rho\varphi_i\varphi_j)\bm e^j\wedge(\bm e^--\rho\bm e^+)\nn\,,\\
\tilde{\bm R}^-{}_i={}&\rho t(\hat\nabla_j\psi_{ki}-\rho \varphi_if_{jk})\bm e^j\wedge \bm e^k-\rho(\p_\rho\psi_{ji}-\psi_{jk}\psi_i{}^k-\hat\nabla_j\varphi_i-2\rho\varphi_i\varphi_j)\bm e^j\wedge(\bm e^--\rho\bm e^+)\nn\,,\\
\tilde{\bm R}^i{}_+={}&-\frac{\rho}{t}(\hat\nabla_j\psi_k{}^i-\rho\varphi^if_{jk})\bm e^j\wedge \bm e^k+\frac{\rho}{t^2}(\p_\rho\psi_j{}^i+\psi_k{}^i\psi_j{}^k-\hat\nabla_j\varphi^i-2\rho\varphi^i\varphi_j) \bm e^j\wedge(\bm e^--\rho\bm e^+)\nn\,,\\
\tilde{\bm R}^i{}_-={}&\frac{1}{t}(\hat\nabla_j\psi_k{}^i-\rho\varphi^if_{jk})\bm e^j\wedge \bm e^k\nn-\frac{1}{t^2}(\p_\rho\psi_j{}^i+\psi_k{}^i\psi_j{}^k-\hat\nabla_j\varphi^i-2\rho\varphi^i\varphi_j)\bm e^j\wedge(\bm e^--\rho\bm e^+)\nn\,,\\
\tilde{\bm R}^i{}_j={}&\frac{1}{2}(\bar R^i{}_{jkl}+\delta^i{}_jf_{kl})\bm e^k\wedge \bm e^l-(\delta_k{}^i\psi_{lj}+\psi_k{}^i\gamma_{lj}-2\rho\psi_k{}^i\psi_{lj}+\rho\psi_j{}^if_{kl})\bm e^k\wedge \bm e^l\nn\\
\label{eq:tildeRcomp}
&+\frac{1}{t}\gamma^{il}(\hat\nabla_l\psi_{jk}-\hat\nabla_j\psi_{lk}+2\rho f_{jl}\varphi_k) \bm e^k\wedge(\bm e^--\rho\bm e^+)\,,
\end{align}
where $\hat\nabla$ is introduced in \eqref{eq:hatnabla}, $\psi_{ij}\equiv\frac{1}{2}(\p_\rho\gamma_{ij}+f_{ij})$, and
\begin{align}
\bar R^{i}{}_{jkl}\equiv D_{k}\ti \Gamma^{i}{}_{lj}-D_{l}\ti \Gamma^{i}{}_{kj}+\ti \Gamma^{i}{}_{km}\ti \Gamma^{m}{}_{lj}-\ti \Gamma^{i}{}_{lm}\ti \Gamma^{m}{}_{kj}\,.
\end{align}
The components in \eqref{eq:tildeRcomp} constitute the curvature 2-form $\tilde{\bm R}^{M}{}_{N}$ presented in \eqref{eq:curv2form}. 
\par
Now one can derive the extended Weyl-obstruction tensors according to Definition \ref{def1}. For example, $\hat\Omega^{(1)}_{ij}$ and $\hat\Omega^{(2)}_{ij}$ can be computed as follows:
\begin{align*}
\tilde R_{-ij-}={}&\p_\rho\gamma_{ij}-\psi_{ik}\psi_j{}^k-\hat\nabla_{(i}\varphi_{j)}-2\rho\varphi_i\varphi_j\,,\\
\nabla_-\tilde R_{-ij-}={}&\frac{1}{t}\Big[\p^2_\rho\gamma_{ij}-2\psi_j{}^k{\cal B}_{ki}-2\psi_{i}{}^{k}{\cal B}_{kj}-\hat\nabla_{(i}(\p_\rho\varphi_{j)})-6\varphi_i\varphi_j+\varphi^k\varphi_k\gamma_{ji}-\psi_i{}^k\hat\nabla_j\varphi_k-\psi_j{}^k\hat\nabla_i\varphi_k\\
&\quad+\varphi^k(\hat\nabla_i\psi_{jk}+2\hat\nabla_j\psi_{ki}-2\hat\nabla_k\psi_{ji}+\hat\nabla_i\psi_{kj}-\hat\nabla_k\psi_{ij})\\
&\quad+2\rho\big(\varphi^k(\varphi_j\psi_{ik}+\varphi_i\psi_{kj}-\varphi_k\psi_{ij})-2\varphi^k\varphi_{(i}\psi_{j)k}-3\p_\rho\varphi_{(i}\varphi_{j)}-2\varphi^k\varphi_{(i}f_{j)k}\big)\Big]\,.
\end{align*}
Plugging the on-shell solution \eqref{Pgf}--\eqref{g6} in to the above expressions, one obtains the extended Weyl-obstruction tensors $\hat\Omega^{(1)}_{ij}$ and $\hat\Omega^{(2)}_{ij}$ given in \eqref{eq:Omega1} and \eqref{eq:Omega2}.
\par
From the components of the ambient Riemann curvature, we can also find the Ricci components in this frame:
\begin{align*}
\tilde R_{++}&=-\rho\tilde  R_{+-}=-\rho\tilde  R_{-+}=\rho^2\tilde R_{--}=-\frac{\rho^2}{t^2}(\gamma^{ij}\p_\rho\psi_{ji}+\psi_k{}^i\psi_i{}^k-\hat\nabla_i\varphi^i-2\rho\varphi^i\varphi_i)\,, \\
\tilde R_{i+}&=\tilde R_{+i}=-\rho \tilde R_{i-}=-\rho \tilde R_{-i}=-\frac{\rho}{t}(\hat\nabla_j\psi_i{}^j-\hat\nabla_i\theta-2\rho\varphi^jf_{ji})\,,\\
\tilde R_{ij}&=\bar R_{ij}+f_{ij}-(d-2)\psi_{ji}-\theta\gamma_{ji}+2\rho({\cal B}_{ij}+\theta\psi_{ji}-\psi_j{}^k\psi_{ki}-\psi_i{}^kf_{kj})\,,
\end{align*}
where ${\cal B}_{ij}$ is defined in \eqref{eq:Bij}. The Ricci-flatness condition gives the following three equations:
\begin{align}
\label{eq:R++0}
0&=\gamma^{ij}\p_\rho\psi_{ji}+\psi_k{}^i\psi_i{}^k-\hat\nabla_i\varphi^i-2\rho\varphi^i\varphi_i \,,\\
\label{eq:R+i0}
0&=\hat\nabla_j\psi_i{}^j-\hat\nabla_i\theta-2\rho\varphi^jf_{ji}\,,\\
\label{eq:Rij0}
0&=\bar R_{ij}+f_{ij}-(d-2)\psi_{ji}-\theta\gamma_{ji}+2\rho({\cal B}_{ij}+\theta\psi_{ji}-\psi_j{}^k\psi_{ki}-\psi_i{}^kf_{kj})\,.
\end{align}
In the leading order when $\rho=0$, the condition \eqref{eq:R++0} leads to the fact that $\hat\Omega^{(1)}_{ij}$ is traceless, and \eqref{eq:Rij0} gives the Bianchi identity $\hat\nabla^{(0)}_i\hat P^i{}_j=\hat\nabla^{(0)}\hat P$, where $\hat P$ is the trace of $\hat P_{ij}$.
\par
Differentiating $\bar R_{ij}$ with respect to $\rho$ yields
\begin{align}
\p_\rho\bar R_{ij}={}&\hat\nabla_k\hat\nabla_j\psi^{k}{}_i+\hat\nabla_k\hat\nabla_i\psi_j{}^k-\hat\nabla_k\hat\nabla^k\psi_{ji}-\hat\nabla_j\hat\nabla_i\theta-\hat\nabla_i\varphi_j+(d-1)\hat\nabla_i\varphi_j+\gamma_{ij}\hat\nabla_k\varphi^k\nn\\
&+4\rho a_k(\varphi_j\psi^k{}_i+\varphi_i\psi^k{}_j-\varphi^k\psi_{ij})-4\rho a_j\varphi_i\theta+2\rho\hat\nabla(\varphi_j\psi^k{}_i+\varphi_i\psi^k{}_j-\varphi^k\psi_{ij})-2\rho\hat\nabla_i\theta\nn\\
&+2\rho\varphi_k(\nabla_j\psi^k{}_i+\nabla_i\psi_j{}^k-\nabla^k\psi_{ji})-2\rho\varphi_j\hat\nabla_i\theta-2\rho\big((d+2)\varphi_i\varphi_j-\varphi_k\varphi^k\gamma_{ij}\big)\nn\\
\label{eq:Rijprho}
&+2\rho \varphi_k(\varphi_j\psi^k{}_i+\varphi_i\psi^k{}_j-\varphi^k\psi_{ij})-2\rho \varphi_j\varphi_i\theta\,,
\end{align}
which leads to \eqref{eq:psipole2} when differentiating \eqref{eq:Rij0}.

\section{Proofs}\label{AppC}
\subsection{Proof of Theorem \ref{diff_Weyl_normal_form}}
\label{Appthm5pt1}
To prove Theorem \ref{diff_Weyl_normal_form}, we first need to introduce a \emph{$(g,a)$-transversal vector} (generalized from the concept of a $g$-transversal vector in \cite{Fefferman:2007rka}), where the horizontal subspace $H_p$ defined by the Weyl connection plays an important role. Once we pick a representative $(g,a)$ in the Weyl class, $g$ induces an isomorphism between ${\cal P}_W$ and $\cal G$ through \eqref{eq:trivial}, which determines the fibre coordinate $t$ of ${\cal P}_W$; $a$ defines for any $p\in {\cal P}_W$ a horizontal subspace $H_p\subset T_p{\cal P}_W$ given in \eqref{eq:HV}, which can also be viewed as a subspace of $T_{(p,0)}({\cal P}_W\times \mathbb{R})$ via the inclusion map $\iota : {\cal P}_W\to {\cal P}_W\times \mathbb{R}$. We define a vector $\un {\cal V}\in T_{(p,0)}({\cal P}_W\times \mathbb{R})$ to be a \emph{$(g,a)$-transversal vector} for $\tilde g$ if it satisfies the following three conditions at $(p,0)$:
\begin{equation}\label{Weyl g-transversal vector}
\text{\ding{172} }\ti g(\un {\cal V}, \un{\cal T})= t^2\,,\qquad \text{\ding{173} }\ti g(\un{\cal V},\un{\cal H})=0\quad\forall\, \un{\cal H}\in H_{p}\,,\qquad\text{\ding{174} }\ti g(\un{\cal V},\un{\cal V})=0\,.
\end{equation}
When $a_i(x)=0$ in \eqref{projector_triv_1}, i.e.,\ $\mathbf{a}=\un\pa_{t} \otimes \td t$, the $(g,a)$-transversal vector for $\tilde g$ goes back to the $g$-transversal vector for $\tilde g$ defined in \cite{Fefferman:2007rka}. From \eqref{form_1} one can see that for $(\tilde M,\tilde g)$ in Weyl-normal form, $\un \pa_{\rho}$ is $(g,a)$-transversal for $\tilde g$ at $(p,0)$.
Following the proof of Lemma 2.10 in \cite{Fefferman:2007rka}, it is straightforward to show that the $(g,a)$-transversal vector is unique and dilatation-invariant (i.e.\ $\delta_{s*}V_p=V_{\delta_s(p)}$) for $\tilde g$ at $(p,0)$.  

The proof of Theorem \ref{diff_Weyl_normal_form} proceeds similar to the proof of Proposition 2.8 in \cite{Fefferman:2007rka}; one only has to let the $g$-transversal vector $\un{\cal V}$ to be a $(g,a)$-transversal vector. Here we will not repeat all the details but only outline the proof and elaborate on the steps when the Weyl connection $a$ is relevant. 
\begin{proof}[Proof of Theorem \ref{diff_Weyl_normal_form}]
Suppose $p\in{\cal P}_W$ and let $\un {\cal V}_{p}$ be the $(g,a)$-transversal vector for $\tilde g$ at $(p,0)$.
One can parametrize the (non-geodesic) curve $C_p:\lambda\mapsto \phi(p,\lambda)\in \ti M$ with initial conditions
\begin{equation}
\label{phi_initial_conditions}
\phi(p,0)= (p,0)\,,\qquad \pa_{\lambda}\phi(p,\lambda)|_{\lambda=0}= \un{\cal V}_{p}\,,
\end{equation} 
with the ``equation of motion'' $\nabla_{\un{\cal U}}\un{\cal U}=\un{\cal A}$, where $\un{\cal U}= \frac{\td}{\td\lambda}$ is the tangent vector to the accelerated curve $C_p$, and the acceleration vector $\un{\cal A}$ satisfies $\tilde g(\un{\cal T}, \un{\cal A})=0$. Suppose the domain of $\phi$ is $\tilde U_0\subset{\cal P}_W\times\bb R$, which is dilatation-invariant. Then $\phi:\tilde U_0\to\tilde M$ is a smooth map commuting with dilatation, and it can be proved that there exists  $\tilde U_1\subset \tilde U_0$ as a dilatation-invariant neighborhood of ${\cal P}_W\times\bb \{0\}$ such that $\phi:\tilde U_1\to \tilde M$ is a diffeomorphism (see \cite{Fefferman:2007rka}).

\par
Furthermore, one can define $\tilde M'=\{(p,\lambda)\in\tilde U_1|\,(p,\mu)\in\tilde U_1, \forall\mu\in\bb R$ satisfying $|\mu|\leqslant|\lambda|\}$. It is easy to verify that $(\tilde M',\phi^*\tilde g)$ satisfies the conditions of Definition \ref{Ambient} and thus is a Weyl pre-ambient space for $(M,[g,a])$. It follows that for each $p\in {\cal P}_W$, the set for $\lambda$ such that $(p,\lambda)\in\tilde M'$ is an open interval $I_p$ containing $0$, and the parametrized curve $C'_p:\lambda\mapsto(p,\lambda)$ with tangent vector $\un{\cal U}'$ and the acceleration $\un {\cal A}'=\nabla'_{\cal U'}{\cal U'}$ satisfies $\phi^*\tilde g(\un {\cal T}', \un {\cal A}')=0$, where $\un{\cal T}'\equiv \phi^*\un{\cal T}$, and $\nabla'$ is the Levi-Civita connection associated with $\phi^*\tilde g$. Hence, conditions (5.1) and (5.2) of Definition \ref{Weyl_normal_form} are satisfied by $(\tilde M',\phi^*\tilde g)$.
\par
Finally let us verify condition (5.3) of Definition \ref{Weyl_normal_form}. Since $\un{\cal V}$ satisfies the conditions in \eqref{Weyl g-transversal vector} and $\phi$ satisfies \eqref{phi_initial_conditions}, under the identification $\mathbb{R}_{+}\times M\times \mathbb{R}\simeq {\cal P}_W\times \mathbb{R}$ induced by $g$ we have at $(\lambda=0,p)$:
 \begin{align}
 (\phi^{*}\ti g)(\un\pa_{\lambda},\un{\cal T})&=t^2\,\nn\\
 (\phi^{*}\ti g)(\un\pa_{\lambda},\un{\cal H})&=0\qquad\forall\,\un{\cal H}\in {\cal H}_{p}\,,\\
 (\phi^{*}\ti g)(\un\pa_{\lambda},\un\pa_{\lambda})&= 0\,.\nn
 \end{align}
For a given connection $\mathbf{a}=t\un\pa_t\otimes \big(t^{-1}\td t+a_i(x)\td x^i\big)$ on ${\cal P}_W$, the horizontal subspace ${\cal H}_p$ at $(p,0)$ is spanned by $\un D_i=\un\p_i-ta_i\un\p_t$. Since $(\tilde M',\phi^*\tilde g)$ is a Weyl pre-ambient space for $(M,[g,a])$, $\iota^*(\phi^*g)$ is the tautological tensor $\mathbf g_0$ on ${\cal P}_W$. Then, the above equations give that $\phi^{*}\ti g|_{\lambda=0}=  t^2 \mathbf{g}_{0}  + 2t(\td t + ta_{i}(x)\td x^{i})\td\lambda $. Therefore, all the conditions in Definition \ref{Weyl_normal_form} are satisfied by $(M',\phi^*g)$, which completes the existence part of the Proposition. The uniqueness part follows from the fact that the above construction of $\phi$ is forced. Suppose $\phi:M\to M'$ is a diffeomorphism such that $(M',\phi^*g)$ is a pre-ambient space in Weyl-normal form, then $\un{\cal V}_p$ must be $(g,a)$-transversal for $\ti g$ at $(p,0)$, and the curve $C'_p:\lambda\mapsto \phi(z,\lambda)$ must be the unique curve satisfying the initial conditions \eqref{phi_initial_conditions} and having the acceleration $\un{\cal A}$, which determines $\phi:\tilde M\to \tilde M'$ uniquely.
\end{proof}

\subsection{Proof of Theorem \ref{Weyl_ambient_existence}}

 \par
 \begin{proof}[Proof of Theorem \ref{Weyl_ambient_existence}]
 The proof of this theorem has two main parts. First, from $\ti Ric(\ti g)=0$ and the initial value of $\tilde g$ at $\rho=0$ we will determine the first $\rho$-derivative of the metric components at $\rho=0$. Then, using an inductive argument we will show that all higher derivatives (to infinite order) at $\rho=0$ can also be determined from the Ricci-flatness condition. Let us write the unknown components of $\tilde g$ as
 \begin{equation}
 \ti g_{00}= c(x,\rho)\,,\qquad\ti g_{0 i}= t b_{i}(x,\rho)\,,\qquad\ti g_{ij}= t^2 g_{ij}(x,\rho)\,,
 \end{equation}
where $g_{ij}(x,\rho)$ can be considered as a one-parameter family of metrics on $M$. From property (2) above we have the initial values $c(x,0)=0$ and $b_{i}(x,0)=0$. The general metric has the form
\begin{equation}\label{_metric_formal_second_order}
\ti g_{IJ}= \begin{blockarray}{cccc}
 & 0 & j & \infty  \\
\begin{block}{c(ccc)}
  0 & c(x,\rho)& tb_i(x,\rho) & t\\
   i &  tb_i(x,\rho)  & t^2g_{ij}(x,\rho)&  t^2a_i(x,\rho)\\
  \infty & t & t^2a_i(x,\rho) & 0\\
\end{block}
\end{blockarray}
\,\,,
 \end{equation}
 and the inverse metric is
\begin{equation}\label{inverse_metric_formal}
\ti g^{IJ}=\left(\begin{array}{ccc}\frac{a^2}{\chi}& -\frac{(1- a\cdot b)a^{j}+ a^2 b^{j}}{t\chi} & \frac{1- a\cdot b}{t\chi}\\
 -\frac{(1- a\cdot b)a^{i}+ a^2 b^{i}}{t\chi}  & \frac{g^{ij}}{t^2}+ \frac{(1-a\cdot b)(a^{i}b^{j}+ a^{j}b^{i}) + a^2 b^{i}b^{j}- (c- b^2)a^{i}a^{j}}{t^2 \chi}& \frac{(c- b^2)a^{i}- (1- a\cdot b)b^{i}}{t^2 \chi} \\
 \frac{1- a\cdot b}{t\chi}&\frac{(c- b^2)a^{j}- (1- a\cdot b)b^{j}}{t^2 \chi}& \frac{b^2 - c}{t^2 \chi}\\\end{array}\right)\, ,
 \end{equation}
 where $a^{i}\equiv g^{im}a_{m}$, $b^{i}\equiv g^{im}b_{m}$ and $\chi= a^{2}(c-b^2)+ (1- a\cdot b)^2$, with $a^{2}= a_{k}a^{k}$, $b^2= b_{k}b^{k}$ and $a\cdot b= a_{k}b^{k}$. 
The Christoffel symbols $\ti \Gamma_{IJK}\equiv \ti g_{KM}\ti \Gamma\indices{^{M}_{IJ}}$ are
 
 \begin{equation}\label{Christoffel_Initial_value}
 \begin{split}
2\ti \Gamma_{IJ 0}&=
\left(\begin{array}{ccc}
0& \pa_{j}c &\pa_{\rho}c\\
\pa_{i}c  & t(\pa_{i}b_{j}+ \pa_{j}b_{i}- 2 g_{ij})& t(\pa_{\rho}b_{i}- 2a_{i})\\
\pa_{\rho}c &t(\pa_{\rho}b_{j}- 2a_{j})& 0
\end{array}\right)\,,\\
2\ti \Gamma_{IJ k}&=
\left(\begin{array}{ccc}
2b_{k}- \pa_{k}c & t\left(2 g_{jk}+ \pa_{j}b_{k}- \pa_{k}b_{j}\right)&t(2a_{k} + \pa_{\rho}b_{k})\\
t(2 g_{ik}+ \pa_{i}b_{k}- \pa_{k}b_{i})  &2 t^2 \gamma_{ijk}& t^2 (\pa_{\rho}g_{ik}+ F_{ik}) \\
t(2a_{k}+ \pa_{\rho}b_{k}) & t^2 \left(\pa_{\rho}g_{jk}+ F_{jk}\right) & 2t^2 \pa_{\rho}a_{k}
\end{array}\right)\,,\\
2\ti \Gamma_{IJ \infty}&=
\left(\begin{array}{ccc}
2- \pa_{\rho}c& t(2a_{j}- \pa_{\rho}b_{j}) & 0\\
t(2a_{i}- \pa_{\rho}b_{i})  & t^2 (\pa_{i}a_{j}+ \pa_{j}a_{i}- \pa_{\rho}g_{ij})& 0\\
0 &0& 0
\end{array}\right)\,,
\end{split}
\end{equation}
where $\gamma_{ijk}= g_{km}\gamma\indices{^{m}_{ij}}$ with $\gamma\indices{^{m}_{ij}}= \frac{1}{2}g^{mk}\left(\pa_{i}g_{jk}+ \pa_{j}g_{ik}-\pa_{k}g_{ij}\right)$ and  $F_{jk}= \pa_{j}a_{k}- \pa_{k}a_{j}$. Calculating the components $\ti R_{IJ}$ of $\tilde Ric(\tilde g)$ to the leading order in $\rho$-expansion from
\begin{equation}\label{Ricci}
\ti R_{IJ}= \frac{1}{2}\ti g^{KL}\left(\pa^{2}_{IL}\ti g_{JK}+ \pa^{2}_{JK}\ti g_{IL}- \pa^{2}_{KL}\ti g_{IJ}- \pa^{2}_{IJ}\ti g_{KL}\right)+ \ti g^{KL}\ti g^{PQ}\big(\ti\Gamma_{ILP}\ti\Gamma_{JKQ}- \ti\Gamma_{IJP}\ti\Gamma_{KLQ}\big) \,,
\end{equation}
and setting them to zero as the Ricci-flatness condition demands, we obtain
\begin{equation}\label{metric_components_order_1}
\begin{split}
c(x,\rho)&= 2\rho + O(\rho^{2})\,,\qquad b_{i}(x,\rho)= O(\rho^2)\,,\\ g_{ij}(x,\rho)&=g_{ij}(x)+ \rho\big(2\hat P_{(ij)}- 2a_{i}(x)a_{j}(x)\big)+ O(\rho^{2})\,,
\end{split}
\end{equation} 
where $\hat P_{ij}$ is the Weyl-Schouten tensor. One can observe that this agrees with \eqref{Weyl_ambient}, where $g_{ij}(x)$ corresponds to $\gamma^{(0)}_{ij}$ in the expansion \eqref{eq:gexpan}, and the order $O(\rho)$ matches $\gamma^{(1)}_{ij}$ given in \eqref{Pgf}. Note that the above components of a Weyl-ambient metric reduce to the components of an ambient metric in \cite{Fefferman:2007rka} when the Weyl connection $a_i$ is turned off. 
\par
The next stage of the proof is to carry out an inductive perturbation calculation for higher orders in $\rho$. The purpose of this calculation is to prove (inductively) that the Ricci-flatness condition can be used to determine the unknown components of $\tilde g$ in Weyl-normal form to infinite order in $\rho$.\par
Let $\ti g^{[k]}$ represent a metric that includes the terms of the $\rho$-expansion of $\tilde g$ up to (including) order $O(\rho^{k})$, i.e.,\ $\tilde g=\tilde g^{[k]}+O(\rho^{k+1})$. Then, the Ricci-flatness condition of $\tilde g$ implies that the components $\ti R^{[k]}_{IJ}$ of $Ric(\tilde g^{[k]})$ satisfy
\begin{align}
\label{eq:Ricciflatm}
\begin{split}
\ti R_{IJ}(\ti g^{[k]})= O(\rho^{k})\quad I,J\neq\infty\,,\qquad
\ti R_{I\infty}(\ti g^{[k]})= O(\rho^{k-1})\,.
\end{split}
\end{align}
To carry out the induction, we assume that $\ti g^{[m-1]}$ has been uniquely determined from the condition \eqref{eq:Ricciflatm} with $k=m-1$. We have seen this is true for $m=2$ above by explicit calculation. Now we want to show that $\ti g^{[m]}$ then can be uniquely determined from the condition \eqref{eq:Ricciflatm} with $k=m$. Set $\ti g^{[m]}_{IJ}= \ti g^{[m-1]}_{IJ}+ \Phi_{IJ}$, with 
\begin{equation}
\label{eq:g+phi}
\Phi_{IJ}:=
\left(\begin{array}{ccc}
\Phi_{00}&\Phi_{0j}&0\\
\Phi_{i0}&\Phi_{ij}&\Phi_{i\infty}\\
0&\Phi_{j\infty}&0
\end{array}\right)
= \rho^m 
\left(\begin{array}{ccc}
\phi_{00}(x)&t\phi_{0j}(x)&0\\
t\phi_{0i}(x)&t^2 \phi_{ij}(x)&t^{2}a^{(m)}_{i}(x)\\
0&t^{2}a^{(m)}_{j}(x)&0
\end{array}\right)\,,
\end{equation}
where $a_{i}^{(m)}(x)$ is the $m^{th}$ order term of $a_{i}(x,\rho)$ [see \eqref{eq:aexpan}], and we have considered the fact that $\ti g^{[m]}_{IJ}$ satisfies \eqref{lemma_conditions}. All we have to show is that $\phi_{00}$, $\phi_{0i}$ and $\phi_{ij}$ can all be uniquely determined. From \eqref{Ricci} one finds that 
\begin{equation}\label{Ricci_2}
\begin{split}
\ti R^{[m]}_{IJ}={}&\ti R^{[m-1]}_{IJ}+  \frac{1}{2}\ti g_{[m]}^{KL}\left(\pa^{2}_{IL} \Phi_{JK}+ \pa^{2}_{JK} \Phi_{IL}- \pa^{2}_{KL} \Phi_{IJ}- \pa^{2}_{IJ} \Phi_{KL}\right) \\&
+ \ti g_{[m]}^{KL}\ti g_{[m]}^{PQ}\left(\ti \Gamma^{[m]}_{ILP}\Gamma^{\Phi}_{JKQ}+  \Gamma^{\Phi}_{ILP}\ti\Gamma^{[m]}_{JKQ}- \ti\Gamma^{[m]}_{IJP} \Gamma^{\Phi}_{KLQ}- \Gamma^{\Phi}_{IJP}\ti \Gamma^{[m]}_{KLQ}\right)+ O(\rho^{m})\,,
\end{split}
\end{equation}
where $\ti g_{[m]}^{KL}$ and $\ti \Gamma_{IJK}^{[m]}$ are the inverse and Christoffel symbols of $\ti g^{[m]}_{KL}$, respectively, and $\Gamma^{\Phi}_{IJK}\equiv\frac{1}{2}( \pa_{J}\Phi_{IK}+ \pa_{I}\Phi_{JK}- \pa_{K}\Phi_{IJ})$. The components of $\Gamma^{\Phi}_{IJK}$ can be expressed as follows:
\begin{equation}\label{GammaPhi}
\begin{split}
2\Gamma^{\Phi}_{IJ0}&= 
\left(\begin{array}{ccc}
0&0&\p_\rho\Phi_{00}\\
0&0&\p_\rho\Phi_{i0}\\
\p_\rho\Phi_{00}&\p_\rho\Phi_{0j}&0
\end{array}\right)+O(\rho^m)\,,\\
2\Gamma^{\Phi}_{IJk}&= 
\left(\begin{array}{ccc}
0&0&\p_\rho\Phi_{0k}\\
0&0&\p_\rho\Phi_{ik}\\
\p_\rho\Phi_{0k}&\p_\rho\Phi_{jk}&2\p_\rho\Phi_{\infty k}
\end{array}\right)+O(\rho^m)\,,\\
2\Gamma^{\Phi}_{IJ\infty}&= 
\left(\begin{array}{ccc}
-\p_\rho\Phi_{00}&-\p_\rho\Phi_{0j}&0\\
-\p_\rho\Phi_{i0}&-\p_\rho\Phi_{ij}&0\\
0&0&0
\end{array}\right)+O(\rho^m)\,.
\end{split}
\end{equation}

Substituting \eqref{GammaPhi} and the leading order of $\tilde\Gamma^{[m]}_{IJK}$ and $\tilde g_{[m]}^{IJ}$ [i.e.,\ the leading order of $\tilde g^{IJ}$, $\tilde\Gamma_{IJK}$ in \eqref{inverse_metric_formal},\eqref{Christoffel_Initial_value}] into \eqref{Ricci_2}, one finds 
\begin{equation}\label{Ricci_inductive}
\begin{split}
t^2 \ti R^{[m]}_{00}&=t^2 \ti R^{[m-1]}_{00} + m\rho^{m-1}\left(m- 1 - \frac{d}{2}\right)\phi_{00}+O(\rho^{m})\,,\\
t\ti R^{[m]}_{0i} &= t\ti R^{[m-1]}_{0i}+ m\rho^{m-1}\left[\frac{1}{2}\pa_{i}\phi_{00}+ \left(m-1-\frac{d}{2}\right)\phi_{0i}\right]+ O(\rho^{m})\,,\\
\ti R_{ij}^{[m]}&= \ti R_{ij}^{[m-1]}+ m\rho^{m-1}\left[(m- \frac{d}{2})\phi_{ij}- \frac{1}{2}g_{ij}g^{km}\phi_{km}+ \mathring\nabla_{(i}\phi_{j)0}+ \mathring P_{ij}\phi_{00}\right] + O(\rho^{m})\,,\\
t\ti R^{[m]}_{0\infty}&=t\ti R^{[m-1]}_{0\infty}+ \frac{1}{2}m(m-1)\rho^{m-2}\phi_{00} + O(\rho^{m-1})\,,\\
\ti R^{[m]}_{i\infty}&=\ti R^{[m-1]}_{i\infty}+ \frac{1}{2}m(m-1)\rho^{m-2}\phi_{i0}+ O(\rho^{m-1})\,,\\
\ti R^{[m]}_{\infty \infty}&=\ti R^{[m-1]}_{\infty\infty} -m(m-1)\rho^{m-2}\left(\frac{1}{2}a^{2}\phi_{00}- a^{k}\phi_{k0} +\frac{1}{2}g^{km}\phi_{km}\right)+ O(\rho^{m-1})\,,
\end{split}
\end{equation}
where $\mathring P_{ij}$, $\mathring\nabla$ are the LC Schouten tensor and LC connection associated with the metric $g_{ij}(x)$. Although the Weyl connection $a^{(0)}_{i}(x)$ appears throughout the calculation, it cancels itself out rather unexpectedly, except for the terms in $\ti R^{(m)}_{\infty\infty}$. The inductive argument then proceeds in the same way as \cite{Fefferman:2007rka}. First we consider the Ricci components with $I,J\neq \infty$. From the first two equations in \eqref{Ricci_inductive} one can uniquely determine $\phi_{00}$ and $\phi_{0i}$ such that $\ti R^{[m]}_{00}$ and $\ti R^{[m]}_{0i}$ both vanish up to order $O(\rho^{m})$. Then, from the third equation in \eqref{Ricci_inductive}  one can uniquely solve for $\phi_{ij}$ such that the order $O(\rho^{m-1})$ of $\ti R^{[m]}_{ij}$ vanishes. Therefore, $\tilde g^{[m]}$ will be uniquely determined by $\ti R^{[m]}_{IJ}=O(\rho^m)$ for $I,J\neq \infty$ once $\tilde g^{[m-1]}$ is determined, and hence the unknown components of $\ti g_{IJ}$ can be determined to infinite order. 
\par
Note that when $d=2m$, the situation becomes subtle because the term $\phi_{ij}$ vanishes in $\tilde R^{[m]}$. In \cite{Fefferman:2007rka}, this is attributed to the obstruction of the Ricci-flatness condition at $O(\rho^{d/2-1})$ when $d$ is an even integer, and one has to carefully consider even and odd $d$ separately. Nevertheless, since we consider the dimension $d$ as a continuous parameter, we can always solve for $\phi_{ij}$ from the Ricci-flatness condition for any $d$, and the information regarding these obstructions is not lost but takes the form of poles in $\phi_{ij}$ at $d=2m$. As is shown in Proposition \ref{prop:obspole}, since $\phi_{ij}$ represents the order $O(\rho^m)$ of $g_{ij}(x,\rho)$ in the Weyl-ambient metric \eqref{Weyl_ambient}, this pole represents exactly the Weyl-obstruction tensor.
\par
So far we have proved that the unknown components of $\ti g$ are determined to infinite order by the Ricci-flatness condition for $I,J\neq \infty$. To finish the analysis we also need to show that the remaining Ricci components $\ti R_{I\infty}$ also vanish to infinite order when we plug in the solution for $\tilde g$ obtained from $\ti R_{IJ}=0$ for $I,J\neq \infty$. Consider the Bianchi identity $\ti g^{JK}\nabla_{I}\ti R_{JK}= 2\ti g^{JK} \nabla_{J}\ti R_{IK}$. Expanding the covariant derivative in terms of the Christoffel symbols we get
\begin{equation}\label{Bianchi}
2\ti g^{JK}\pa_{J}\ti R_{IK}- \ti g^{JK}\pa_{I}\ti R_{JK}- 2 \ti g^{JK}\ti g^{PQ}\ti\Gamma_{JKP}\ti R_{QI}=0\,.
\end{equation}
Since $\ti R_{I\infty}= {\cal O}(\rho^{m-2})$ is trivially true for $m=2$, now we want to show that $\ti R_{I\infty}= {\cal O}(\rho^{m-2})$ leads to $\ti R_{I\infty}= {\cal O}(\rho^{m-1})$ by means of the Bianchi identity. Expanding \eqref{Bianchi} for $I=0,i,\infty$ and making use of the homogeneity property of the metric we get
 \begin{equation}\label{Bianchi_2}
 \begin{split}
& \left(d -2 -2 \rho\pa_{\rho}\right)\ti R_{0\infty} = {\cal O}(\rho^{m-1})\\&
 (d-2 -2 \rho\pa_{\rho})\ti R_{i\infty}- t\pa_{i}\ti R_{0\infty}={\cal O}(\rho^{m-1})\\&
 a^{2}\left( t^{-1}d \ti R_{\infty 0}+2\pa_{0}\ti R_{\infty 0}\right)- 2 t ^{-1}a^{m}\left( \pa_{m}\ti R_{\infty 0}-(2-d)t^{-1}\ti R_{\infty m}\right) \\&
+2 t^{-2}\left(d-2 - \rho\pa_{\rho}\right)\ti R_{\infty\infty}+2 t^{-2}g^{mk}\mathring \nabla_{m}\ti R_{\infty k}+ 2t^{-1}\mathring{P}\ti R_{\infty 0} ={\cal O}(\rho^{m-1})\,.
 \end{split}
 \end{equation}
We can see that the Weyl connection appears only in the last equation of \eqref{Bianchi_2}. Note that all the Ricci terms $\ti R_{IJ}$ with $I,J \neq \infty$ has been dropped from \eqref{Bianchi_2} since they vanish to infinite order. Suppose $\ti R_{I\infty}= \gamma_{I}\rho^{m-2}$. The first equation in \eqref{Bianchi_2} gives $(d+2 -2m)\gamma_{0}= {\cal O}(\rho)$, and thus $\ti R_{0\infty}={\cal O}(\rho^{m-1})$. The second equation in \eqref{Bianchi_2}  gives  $(d+2 -2 m)\gamma_{i}= {\cal O}(\rho)$, and thus $\ti R_{i\infty}= {\cal O}(\rho^{m-1})$. The last equation then gives $(d-m)\gamma_{\infty}= {\cal O}(\rho)$, so $\ti R_{\infty\infty}={\cal O}(\rho^{m-1})$. This completes the inductive argument and thus $\ti R_{I\infty}$ can also be made to vanish to infinite order.
\par
To summarize, we have shown by an inductive argument that there exists a Weyl-ambient space $(\tilde M,\tilde g)$ for $(M,[g,a])$ in Weyl-normal form with acceleration $\un{\cal A}$. Some components of $\tilde g$ have the form in \eqref{lemma_conditions}, and all the unknown components are determined uniquely to infinite order of $\rho$ at ${\cal P}_W\times \{0\}$ by the Ricci-flatness condition.
\end{proof}

\subsection{Proof of Lemma \ref{lem:Riem}}
\label{App:Lemproof}
\begin{proof}[Proof of Lemma \ref{lem:Riem}]
We will prove this identity by induction. First, noticing that $\tilde R_{-+MN}=0$, when $n=0$ we have
\begin{align*}
\tilde\nabla_i\tilde R_{-+MN}&=-\tilde\Gamma^j{}_{i-}\tilde R_{j+MN}-\tilde\Gamma^{j}{}_{i+}\tilde R_{-jMN}=\frac{1}{t}\psi_i{}^j \tilde R_{+jMN}-\frac{1}{t}(\delta^j{}_i-\rho\psi^j{}_i )\tilde R_{-jMN}\\
&=-\frac{\rho}{t}\psi_i{}^j \tilde R_{-jMN}-\frac{1}{t}(\delta^j{}_i-\rho\psi_i{}^j)\tilde R_{-jMN}=-\frac{1}{t}\tilde R_{-iMN}\,,\\
\tilde\nabla_-\tilde R_{-+MN}&=-\tilde\Gamma^j{}_{--}\tilde R_{j+MN}-\tilde\Gamma^{j}{}_{-+}\tilde R_{-jMN}=0\,,\\
\tilde\nabla_+\tilde R_{-+MN}&=-\tilde\Gamma^j{}_{+-}\tilde R_{j+MN}-\tilde\Gamma^{j}{}_{++}\tilde R_{-jMN}=0\,,
\end{align*}
where we used the fact that $\tilde\Gamma^i{}_{M+}=-\rho\tilde\Gamma^i{}_{M-}$ and $\tilde R_{+jMN}=-\rho\tilde R_{-jMN}$, which can be seen from
\eqref{eq:conn1form} and \eqref{eq:curv2form}, respectively. Thus, for $n=0$ we have $\nabla_{P}\tilde R_{-+MN}=-\frac{1}{t}\delta^{i}{}_{P}\tilde R_{-iMN}$. Assuming that this lemma holds for all $n\leqslant k-1$, now we show that it will hold for $n=k>0$:
\begin{align*}
&\tilde\nabla_i\underbrace{\tilde\nabla_-\cdots\tilde\nabla_-}_{k}\tilde R_{-+MN}\\
={}&D_i\underbrace{\tilde\nabla_-\cdots\tilde\nabla_-}_{k-1}\tilde R_{-+MN}-\tilde\Gamma^j{}_{i-}\tilde\nabla_j\underbrace{\tilde\nabla_-\cdots\tilde\nabla_-}_{k-1}\tilde R_{-+MN}-\cdots-\tilde\Gamma^j{}_{i-}\underbrace{\tilde\nabla_-\cdots\tilde\nabla_-}_{k-1}\tilde\nabla_j\tilde R_{-+MN}
\\
&-\tilde\Gamma^+{}_{i-}\tilde\nabla_+\underbrace{\tilde\nabla_-\cdots\tilde\nabla_-}_{k-1}\tilde R_{-+MN}-\cdots-\tilde\Gamma^+{}_{i-}\underbrace{\tilde\nabla_-\cdots\tilde\nabla_-}_{k-1}\tilde\nabla_+\tilde R_{-+MN}
\\
&-\tilde\Gamma^j{}_{i-}\underbrace{\tilde\nabla_-\cdots\tilde\nabla_-}_{k}\tilde R_{j+MN}-\tilde\Gamma^j{}_{i+}\underbrace{\tilde\nabla_-\cdots\tilde\nabla_-}_{k}\tilde R_{-jMN}\\
&-\tilde\Gamma^{P}{}_{i M}\underbrace{\tilde\nabla_-\cdots\tilde\nabla_-}_{k}\tilde R_{-+ P N}-\tilde\Gamma^{P}{}_{i N}\underbrace{\tilde\nabla_-\cdots\tilde\nabla_-}_{k}\tilde R_{-+MP}\\
={}&\frac{k}{t^2}\psi_i{}^j \underbrace{\tilde\nabla_-\cdots\tilde\nabla_-}_{k-1}\tilde R_{-jMN}- \frac{1}{t}\psi_i{}^j\underbrace{\tilde\nabla_-\cdots\tilde\nabla_-}_{k}(\rho \tilde R_{-jMN})-\frac{1}{t}(\delta^j{}_i-\rho\psi_i{}^j )\underbrace{\tilde\nabla_-\cdots\tilde\nabla_-}_{k}\tilde R_{-jMN}\\
={}&-\frac{1}{t}\underbrace{\tilde\nabla_-\cdots\tilde\nabla_-}_{k}\tilde R_{-jMN}\,,\\
&\tilde\nabla_-\underbrace{\tilde\nabla_-\cdots\tilde\nabla_-}_{k}\tilde R_{-+MN}\\
={}&D_-\underbrace{\tilde\nabla_-\cdots\tilde\nabla_-}_{k-1}\tilde R_{-+MN}-\tilde\Gamma^j{}_{--}\tilde\nabla_j\underbrace{\tilde\nabla_-\cdots\tilde\nabla_-}_{k-1}\tilde R_{-+MN}-\cdots-\tilde\Gamma^j{}_{--}\underbrace{\tilde\nabla_-\cdots\tilde\nabla_-}_{k-1}\tilde\nabla_j\tilde R_{-+MN}
\\
&-\tilde\Gamma^j{}_{--}\underbrace{\tilde\nabla_-\cdots\tilde\nabla_-}_{k}\tilde R_{j+MN}-\tilde\Gamma^j{}_{-+}\underbrace{\tilde\nabla_-\cdots\tilde\nabla_-}_{k}\tilde R_{-jMN}\\
&-\tilde\Gamma^{P}{}_{- M}\underbrace{\tilde\nabla_-\cdots\tilde\nabla_-}_{k}\tilde R_{-+PN}-\tilde\Gamma^{P}{}_{-N}\underbrace{\tilde\nabla_-\cdots\tilde\nabla_-}_{k}\tilde R_{-+MP}\\
={}&\frac{k}{t^2}\varphi^j\underbrace{\tilde\nabla_-\cdots\tilde\nabla_-}_{k-1}\tilde R_{-jMN}-\frac{1}{t}\varphi^j\underbrace{\tilde\nabla_-\cdots\tilde\nabla_-}_{k}(\rho \tilde R_{-jMN})+\frac{\rho}{t}\varphi^j\underbrace{\tilde\nabla_-\cdots\tilde\nabla_-}_{k}\tilde R_{-jMN}=0\,,\\
&\tilde\nabla_+\underbrace{\tilde\nabla_-\cdots\tilde\nabla_-}_{k}\tilde R_{-+MN}\\
={}&D_+\underbrace{\tilde\nabla_-\cdots\tilde\nabla_-}_{k-1}\tilde R_{-+MN}-\tilde\Gamma^j{}_{+-}\tilde\nabla_j\underbrace{\tilde\nabla_-\cdots\tilde\nabla_-}_{k-1}\tilde R_{-+MN}-\cdots-\tilde\Gamma^j{}_{+-}\underbrace{\tilde\nabla_-\cdots\tilde\nabla_-}_{k-1}\tilde\nabla_j\tilde R_{-+MN}
\\
&-\tilde\Gamma^j{}_{+-}\underbrace{\tilde\nabla_-\cdots\tilde\nabla_-}_{k}\tilde R_{j+MN}-\tilde\Gamma^j{}_{++}\underbrace{\tilde\nabla_-\cdots\tilde\nabla_-}_{k}\tilde R_{-jMN}\\
&-\tilde\Gamma^{P}{}_{+M}\underbrace{\tilde\nabla_-\cdots\tilde\nabla_-}_{k}\tilde R_{-+PN}-\tilde\Gamma^{P}{}_{+N}\underbrace{\tilde\nabla_-\cdots\tilde\nabla_-}_{k}\tilde R_{-+MP}\\
={}&-\frac{k\rho}{t^2}\varphi^j\underbrace{\tilde\nabla_-\cdots\tilde\nabla_-}_{k-1}\tilde R_{-jMN}+\frac{\rho}{t}\varphi^j\underbrace{\tilde\nabla_-\cdots\tilde\nabla_-}_{k}(\rho \tilde R_{-jMN})-\frac{\rho^2}{t}\varphi^j\underbrace{\tilde\nabla_-\cdots\tilde\nabla_-}_{k}\tilde R_{-jMN}=0\,.
\end{align*}
Therefore, $\tilde\nabla_{P}\underbrace{\tilde\nabla_-\cdots\tilde\nabla_-}_{n}\tilde R_{-+MN}=-\frac{1}{t}\delta^{i}{}_{P}\underbrace{\tilde\nabla_-\cdots\tilde\nabla_-}_{n}\tilde R_{-iMN}$ holds for $n=k$ if it is valid for all $n\leqslant k-1$, which completes the proof.
\end{proof}

\providecommand{\href}[2]{#2}\begingroup\raggedright\endgroup

\end{document}